\newcommand{\ra}[1]{\renewcommand{\arraystretch}{#1}}
\definecolor{light-gray}{gray}{0.95}
\definecolor{tbl-row-color}{gray}{0.95}
\definecolor{pastelred}{rgb}{1.0, 0.41, 0.38}
\definecolor{radicalred}{rgb}{1.0, 0.21, 0.37}
\definecolor{keywordsColor}{RGB}{134, 14, 11}
\definecolor{commentsColor}{RGB}{54, 54, 54}
\lstdefinelanguage{OurLanguage}{
    alsoletter={:,=, <, >, &, |},
    keywords={while, end, types, if, else, else:, and, or, not, Normal, Uniform, sin, cos},
    backgroundcolor=\color{light-gray},
    morekeywords={=, <, >, <=, >=, ==, !=, &&, ||},
    basicstyle={\ttfamily\small\normalfont},
    keywordstyle={\color{keywordsColor}\ttfamily\bfseries},
    comment=[l]{\#},
    commentstyle={\color{commentsColor}\ttfamily},
    autogobble=true,
    mathescape=true
}
\lstdefinestyle{program}{basicstyle=\small\ttfamily,keywordstyle=\bfseries}
\tikzstyle{startstop} = [rectangle, rounded corners, minimum width=3cm, minimum height=1cm,text centered, draw=black, fill=red!30]
\tikzstyle{io} = [rectangle, minimum width=4.5cm, minimum height=1.8cm, text centered,  text width=4cm, draw=black, fill=white!30]
\tikzstyle{process} = [rectangle, minimum width=3cm, text width=8cm, minimum height=1cm, text centered, draw=black, fill=orange!30]
\tikzstyle{decision} = [diamond, minimum width=3cm, minimum height=1cm, text centered, draw=black, fill=green!30]
\tikzstyle{arrow} = [thick,->,>=stealth, text width = 110]
\newcommand{\E}{\mathbb{E}}
\newcommand{\var}{\mathbb{V}\mathrm{ar}} 
\newcommand{\real}{{\mathbb R}}
\newcommand{\nat}{{\mathbb N}}
\newcommand{\gc}{\mathrm{GC}}
\newcommand{\mm}{\mathrm{MM}}
\newcommand{\ks}{\mathrm{KS}}
\newcommand{\A}{{\mathbf A}}
\newcommand{\X}{{\mathbf X}}
\newcommand{\Z}{{\mathbf Z}}
\newcommand{\z}{{\mathbf z}}
\newcommand{\M}{{\mathbf M}}
\newcommand{\x}{{\mathbf x}}
\newcommand{\f}{{\mathbf f}}
\newcommand{\mbf}{\mathbf{m}}
\newcommand{\hbf}{{\mathbf h}}
\newcommand{\Qcal}{\mathcal{Q}}
\newcommand{\Lambdabf}{{\bm \Lambda}}
\newtheorem{definition}{Definition}
\newtheorem{Assumption}{Assumption}
\newtheorem{theorem}{Theorem}
\newtheorem{proposition}{Proposition}
\definecolor{BrickRed}{HTML}{B6321C}
\definecolor{BlueViolet}{HTML}{473992}
\definecolor{Maroon}{HTML}{AF3235}
\definecolor{ForestGreen}{HTML}{009B55}
\begin{document}

\title{Moment-based Density Elicitation with Applications in Probabilistic Loops}
\author{Andrey Kofnov}
\email{andrey.kofnov@tuwien.ac.at}
\orcid{0000-0002-1734-2918}
\affiliation{%
  \institution{Institute of Statistics and Mathematical Methods in Economics, Faculty of Mathematics and Geoinformation, TU Wien}
  \streetaddress{Wiedner Hauptstrasse 8-10}
  \city{Vienna}
  \state{Vienna}
  \country{Austria}
  \postcode{1040}
}

\author{Ezio Bartocci}
\affiliation{%
 \institution{Faculty of Informatics, TU Wien}
 \streetaddress{}
 \city{Vienna}
  \postcode{1040}
  \country{Austria}}
\orcid{0000-0002-8004-6601}

\author{Efstathia Bura}
\affiliation{%
  \institution{Institute of Statistics and Mathematical Methods in Economics, Faculty of Mathematics and Geoinformation, TU Wien}
  \streetaddress{Wiedner Hauptstrasse 8-10}
  \city{Vienna}
  \state{Vienna}
  \postcode{1040}
  \country{Austria}}
\email{efstathia.bura@tuwien.ac.at}
\orcid{0000-0003-4972-5320}

\renewcommand{\shortauthors}{Kofnov et al.}

\begin{abstract} 
We propose the K-series estimation approach for the recovery of  unknown univariate and multivariate distributions given knowledge of a finite number of their moments. Our method is directly applicable to the probabilistic analysis  of systems that can be represented as probabilistic loops; i.e., algorithms that express and implement non-deterministic processes ranging from robotics to macroeconomics and biology to software and cyber-physical systems.      
K-series statically approximates the joint and marginal distributions of a vector of continuous random variables updated in a probabilistic non-nested loop with nonlinear assignments given a finite number of moments of the unknown density. Moreover, K-series automatically derives the distribution of the systems'  random variables symbolically as a function of the loop iteration. 
K-series density estimates are accurate, easy and fast to compute. We demonstrate the feasibility and performance of our approach on multiple benchmark examples from the literature. 
\end{abstract}
\begin{CCSXML}
<ccs2012>
<concept>
<concept_id>10002950.10003648.10003662</concept_id>
<concept_desc>Mathematics of computing~Probabilistic inference problems</concept_desc>
<concept_significance>500</concept_significance>
</concept>
<concept>
<concept_id>10003752.10003753.10003757</concept_id>
<concept_desc>Theory of computation~Probabilistic computation</concept_desc>
<concept_significance>500</concept_significance>
</concept>
<concept>
<concept_id>10003752.10010124.10010138.10010143</concept_id>
<concept_desc>Theory of computation~Program analysis</concept_desc>
<concept_significance>500</concept_significance>
</concept>
<concept>
<concept_id>10003752.10010061.10010065</concept_id>
<concept_desc>Theory of computation~Random walks and Markov chains</concept_desc>
<concept_significance>100</concept_significance>
</concept>
<concept>
<concept_id>10002950.10003648.10003662.10003667</concept_id>
<concept_desc>Mathematics of computing~Density estimation</concept_desc>
<concept_significance>500</concept_significance>
</concept>
</ccs2012>
\end{CCSXML}

\ccsdesc[500]{Mathematics of computing~Probabilistic inference problems}
\ccsdesc[500]{Theory of computation~Probabilistic computation}
\ccsdesc[500]{Theory of computation~Program analysis}
\ccsdesc[100]{Theory of computation~Random walks and Markov chains}
\ccsdesc[500]{Mathematics of computing~Density estimation}
\keywords{Distribution recovery, Probabilistic programs, Probabilistic loops, Non-linear updates, Stochastic dynamical systems}


\maketitle

\section{Introduction}

There are several methods in statistics to infer the distribution of a random sample. If data are sampled from the unknown distribution, the default ``go-to'' approach is nonparametric density estimation (e.g., histogram, $k$-NN, kernel density estimation, etc. (see, for example, \cite{Silverman1998})) that involves local smoothing and lets ``the data speak for themselves.'' In absence of any information about the data generating process, nonparametric estimation is the only available tool. 

When features of the unknown distribution are available, such as moments, nonparametric density estimates can be significantly improved upon. Yet, knowledge of moments of an unknown distribution is typically rare. One such setting is probabilistic programming analysis, where the moments of unknown distributions are computable, either exactly or approximately via sampling. 

In this paper,  we propose \textit{K-series} to estimate the probability density function (pdf) of a marginal or joint distribution based on knowledge of a finite number of its moments.
Our approach was motivated by recent developments in probabilistic programming analysis. 
This is exemplified by the stochastic dynamical system  in  Fig.~\ref{fig:DDMR}, where the moments of the random location variables $X$ and $Y$  are computable at each iteration using the approach in~\cite{Kofnovetal2024}.
The program in the top left panel of Fig.~\ref{fig:DDMR} encodes the \textit{stochastic} dynamics of the position of a mobile robot, referred to as the \textit{Differential-Drive Mobile Robot} in ~\cite{Jasouretal2021}, in the presence of external disturbances. 
The position of the robot on the 2D plane is reflected in the $(X,Y)$ coordinates, and its orientation in $\theta$. The speed of the left and right wheels are constant and already incorporated into the equations. External disturbances are modeled as  $\Omega_{l} \sim $ Uniform$(-0.1,0.1)$, and $\Omega_{r} \sim $ Beta$(1, 3)$. The initialization of the location variables $X$ and $Y$ is also random (Uniform$(-0.1,0.1)$) and the angle $\theta$ is initialized as Gaussian with mean 0 and variance 0.1. 

The \textit{Differential-Drive Mobile Robot} program is a \textit{probabilistic loop}, which is a special case of a \textit{Probabilistic Program} (PP). In simple terms, a probabilistic loop is a program loop that contains random assignments such as draws from random distributions (normal, Bernoulli, uniform, etc.). A formal definition of a probabilistic loop is given in \cite{Moosbruggeretal2022} (see Fig. 3 in \cite{Moosbruggeretal2022} for the detailed syntax of probabilistic loops).   
A main challenge in probabilistic programming analysis is to automatically derive the \textit{probability density function} (pdf) of the program random variables 
\cite{Gehretal2016}, which becomes even more challenging in the presence of loops with potentially infinite execution.

This problem has been partially addressed by computing the moments of the unknown densities in specific classes of probabilistic programs, such as the so called \textit{prob-solvable} loops~\cite{BartocciKS19}. A prob-solvable loop consists of a set of initialization statements followed by a non-nested loop body where the variables are updated via polynomial assignments and/or by drawing samples from statistical distributions determined by their moments. In this class of probabilistic loops, moments of any order of the program random variables are computed automatically as closed-form expressions in the number of iterations using symbolic summation and polynomial algebra~\cite{BartocciKS19}.
For non-polynomial assignments,~\cite{Kofnovetal2022} renders probabilistic loops with general continuous functional assignments compatible with the automatic tool of \cite{BartocciKS19} for exact moment computation and provides approximations of exact moments of the target pdf. For trigonometric functional updates, commonly encountered in stochastic dynamical systems (see \cite{Jasouretal2021}) as in the \textit{Differential-Drive Mobile Robot} in Fig.~\ref{fig:DDMR},  \cite{Kofnovetal2024} computes exact moments of any order across iterations. 

 \begin{figure}
 \centering
 \includegraphics[width=0.95\textwidth, height=0.50\textwidth]
 {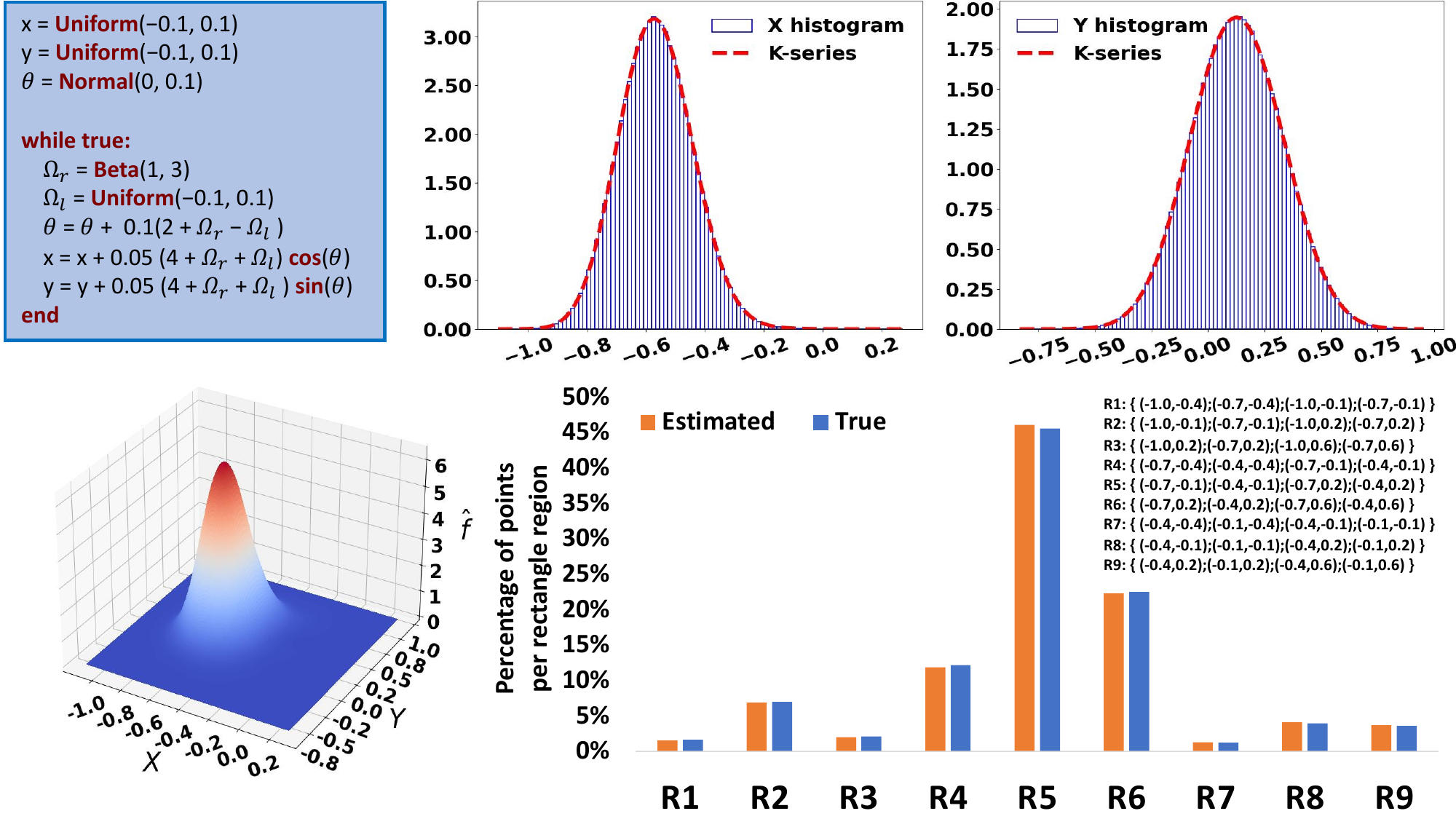}
  \caption{Probabilistic loop with non-polynomial assignment for the Differential-Drive Mobile Robot~\cite{Jasouretal2021} (top left), the approximations of the marginal distributions with K-series (top right), the approximation of the joint distribution with K-series (bottom left)
and comparison with true histogram (bottom right).} 
  \label{fig:DDMR}
\end{figure}

\medskip

\textbf{Contributions.} 
Our proposal, \textit{K-series estimation}, is a density estimation method that recovers the probability density function (pdf) with bounded support from a finite set of the  moments of a random vector $\X=(X_1,\ldots,X_k)^T$, $m_{i_1,i_2,\ldots,i_k}=\E\left(X_1^{i_1}X_2^{i_2} \cdot \ldots \cdot X_{k}^{i_k}\right)$, $0 \leq i_j \leq d_{j}$, $d_j \in \mathbb{N}$, $j=1,\ldots,k$,  using a basis of orthogonal polynomials that target the unknown density via the choice of the reference pdf. Our approach is general in that it allows for any reference distribution, whose effect is incorporated in the construction of the orthogonal polynomials via the Gram-Schmidt orthogonalization procedure and can be tailored to improve the accuracy of the estimation. 
A summary of our contributions in this paper follows.

\begin{itemize}
\item[a)] We adapted the mathematical framework for estimating the distribution of a random variable, proposed by \cite{Filipovicetal2013}, to the setting of probabilistic loops where multiple random variables are generated at each iteration;
    \item[b)] Based on this framework, we introduce K-series: the first method to automatically derive the distributions of multiple state variables in probabilistic programs, such as \emph{prob-solvable} loops~\cite{Bartoccietal2019,Moosbruggeretal2022}, for any number of iterations and symbolically;
    \item[c)] We derive the theoretical foundation in Proposition \ref{prop:GC} and Theorem \ref{thm:MM}, where we prove that other methods, such as the Gram-Charlier (GC) expansion~\cite{Cramer1957,KendallStuart1977,Hald2000} and Method of Moments ~\cite{Munkhammar_etal_2017}, are special cases of our general approach;
    \item[d)] We obtain the convergence rate of our density estimator to the true pdf 
    in Theorem~\ref{thm:L1_convergence};
    \item[e)] We show that K-series is an accurate estimator of the unknown true pdf by proving the moment matching principle of K-series in Theorem~\ref{thm:moment_match}; that is, we show that  the first $n$ moments of the true pdf and the K-series estimator are the same; 
    \item[f)] We derive the approximation to the true support of the target pdf.
\end{itemize}

K-series is an estimation method of the distribution of a multivariate random variable with the following \textit{theoretical guarantees}. 
\begin{enumerate}
    \item  The K-series estimate converges to the true distribution in  $L_{1}$.
    \item The first $n$ moments on which the estimate is based equal  the first $n$ moments of the true distribution (which is essential under the assumption that the random variable is uniquely identifiable by its moments\footnote{Refer to Section~\ref{sec::univar_k_series} for more details.}).
    \item The interval estimate of the support of the random variables is minimal.
\end{enumerate}

Important features of our method are (a) its ease of computation and application, (b) its speed and (c) its ability to recover multivariate distributions.  K-series is a natural complement to automated tools for exact moment computation in probabilistic loops, such as 
\texttt{Polar}~\cite{Moosbruggeretal2022}, which can also accommodate \emph{if-then-else} conditions under certain restrictions. 

Specifically, K-series can be used to derive the distribution of probabilistic programs with a non-nested loop and acyclic state variable dependencies (e.g., an assignment for a variable $x_i$ must not reference variables $x_j$ with $j > i$). These PPs correspond to what is called a directed acyclic graph that is equivalent to a Bayesian network.  The original \textit{prob-solvable loops} introduced by \cite{Bartoccietal2019} were the first such hierarchical probability structures for which it is possible to automatically compute moment-based invariants of any order over the program state variables as closed-form expressions in the loop iteration. In particular, \cite{Stankovicetal2022} represented Bayesian networks as while loops in probabilistic programs with polynomial assignments over random variables; i.e., prob-solvable loops.   That is, our approach applies to probabilistic  loops that are algorithmic representations of non-self-referential conditional distributional structures.

\medskip

\textbf{Related Literature.} 
Gram-Charlier (GC) expansion~\cite{Cramer1957,KendallStuart1977,Hald2000} is the standard statistical technique to estimate a continuous pdf given a set of its moments. The Gram-Charlier estimate is a series expansion of a density in terms of the normal density and its derivatives. Even though it can recover the \textit{normal} perfectly, it can be fairly inaccurate when the target pdf differs noticeably from it. 

In the context of probabilistic loops, the problem of estimating statically the probability distribution of random variables from their moments has been recently considered  
in~\cite{Karimietal2022} and \cite{Moosbruggeretal2022}.
\cite{Karimietal2022}  estimate univariate distributions with  Maximum Entropy 
(ME)~\cite{Biswas2010,Lebazetal2016} and GC expansion in 
\textit{prob-solvable loops}~\cite{BartocciKS19} with  polynomial assignments. ME maximizes the Shannon information entropy subject to a finite set of moments provided as input. It cannot be expressed  symbolically in terms of the moments in the number of the loop iteration. GC expansion estimates  the unknown pdf in a symbolic expression in the number of the loop iteration in terms of its cumulants that can be computed from its moments. GC's  inaccuracy as an estimator of non bell-shaped distributions and lack of convergence (see \cite{Cramer1957} or \cite{Kolossa2006}) are its main limitations.

\cite{Munkhammar_etal_2017} used the estimation method of moments\footnote{This method estimates parameters of a target  distribution by equating sample moments with the corresponding moments of the distribution.} to develop an algorithm for an "$n$-order polynomial approximation of a pdf" based on its consecutive $n$ first moments. 
We show that the method of moments, as well as the GC expansion, 
are special cases of K-series in Section \ref{sec:general}. 

The method in \cite{Tekel_Cohen}, though similar,
is based on the moments of both the unknown target and the reference pdf.  K-series, on the other hand,  uses only the moments of the unknown pdf in the construction of its estimate, removing the need for an additional tuning parameter; that is, the number of moments of the reference. Moreover, \cite{Tekel_Cohen} developed no theory on the statistical properties of the proposed estimator nor did they provide any  connection with the Gram-Charlier or other series estimators. 
Our approach is general in that it allows for any reference distribution, whose effect is incorporated in the construction of the orthogonal polynomials via the Gram-Schmidt orthogonalization procedure and can be tailored to improve the accuracy of the estimation. 

\cite{Filipovicetal2013} uses a similar estimation procedure as K-series, 
even though the parameters  of the reference pdf in \cite{Filipovicetal2013} are  computed from the moments of the target unknown pdf in a circular manner. 
Importantly, similarly to \cite{Tekel_Cohen}, \cite{Filipovicetal2013} does not study the statistical properties of the estimator but rather focuses on how to select the reference pdf. In particular, \cite[Th. 2.1]{Filipovicetal2013} is a straightforward  consequence of the conditions imposed on the reference and the unknown pdf.
We prove the convergence of the K-series estimator to the true pdf in $L_{1}$ for a wide class of reference pdfs, and in $L_2$ for the uniform reference pdf.  
Our choice of reference reflects our lack of knowledge of the true target pdf. It is, nevertheless, flexible and can pivot the estimation closer to the truth in the presence of additional information. \cite{Filipovicetal2013} formulated the moment matching principle as a guiding principle for choosing the reference without connecting it to the actual estimator. We go one step further to prove the moment matching principle of the K-series estimator itself, establishing the equality of its first moments to the corresponding moments of the true pdf.  Other new theoretical contributions are Theorem~\ref{thm:MM}, where we prove that the Method of Moments ~\cite{Munkhammar_etal_2017} is a special case of our general approach, and our approximation of the support of the target unknown  pdf in Section~\ref{sec:suppapprox}. 



$\lambda$PSI  is a solver for computing exact distributions of a PP as symbolic mathematical expressions "with first-class functions, nested inference and discrete, continuous and mixed random variables" ~\cite{Gehretal2020}. However, not only is this solver limited to bounded loops but it also returns very complex symbolic mathematical expressions that are hard to compute and implement even for very few loop iterations. 

More recently, \cite{Klinkenbergetal2023} developed an approach (Prodigy) that can carry out exact inference in PPs that are loop-free with discrete random states. In order to extend to potentially  infinite while-loops, they try to identify classes of while-loop programs that are equivalent to loop-free programs. This work applies  to probabilistic programs involving discrete random states whose distribution depends on parameters that are updated in a Bayesian framework.

Although, $\lambda$PSI \cite{Gehretal2020} or Prodigy \cite{Klinkenbergetal2023} provide hard guarantees, they do so either for a very restricted class of problems (Prodigy works only with discrete random variables and only with loop-free programs or their equivalents), or for a very small number of iterations (for $\lambda$PSI, $gauss(0, 1) + uniform(0, 1)^2$ is already a challenge at the first iteration and it is practically infeasible after three or four iterations). Neither has broad practical applicability. We argue that our tool is a usable extension of these two tools.



\paragraph{Paper organization}  In Section~\ref{kseries} we introduce univariate and multivariate K-series estimators, derive an interval estimate of the support of the pdf and show that the Gram-Charlier expansion 
\cite{KendallStuart1977} and method of moments \cite{Munkhammar_etal_2017} are special cases. 
In Section~\ref{symbolic}, we derive the K-series functional formula as a symbolic expression in the number of the loop iteration.   
Section~\ref{experiments} provides the experimental evaluation of our approach. We conclude in  Section~\ref{conclusion}.

\section{K-series}
\label{kseries}

We develop the  \textit{K-series} estimation method to recover the joint and marginal distributions of a vector of random variables given a finite number of their moments. Our proposal generalizes GC series to estimate an unknown pdf with bounded support. 
Both K-series and GC require a known reference distribution in order to derive the unknown continuous pdf. The normal reference pdf  is instrumental in GC series as it dictates the choice of Hermite polynomials. Our approach allows using \textit{any} continuous pdf provided its support covers the support of the target pdf we want to estimate. 
We present the univariate and its multivariate extension in Sections \ref{sec::univar_k_series} and \ref{ssec:multivar_K_series}, respectively.

\subsection{Univariate K-series}\label{sec::univar_k_series}

Let $X$ be a continuous random variable, supported on an arbitrary interval $\Omega \subseteq \mathbb{R}$, with cumulative distribution function (cdf) $F_{X}(x)$ that is continuously differentiable on $\Omega$ and the corresponding pdf $f(x) = dF_{X}(x) / dx$ is non-negative upper bounded everywhere on $\Omega$ with countable  zeros.
Let $M = \{m_{1}, m_{2},\ldots,m_{n},\ldots \}$ be the set of all moments of the random variable $X$ and suppose only the first $n$  are known. We denote this finite subset of $M$ by $M_n = \{m_{1}, \ldots, m_{n}\}, n \in \mathbb{N}$ and the vector with elements the moments in $M_n$ by $\mbf_n=(1,m_1,\ldots, m_n)^T$. Boldface symbols denote vectors and matrices throughout the paper.

\begin{definition}\label{expon_int}
    A probability density function is said to be exponentially integrable, if there exists a positive $a>0$ such that  $\int\limits_{\real}\exp\{a|x|\}f(x)dx < \infty$ (see \cite{Ernstetal2012,Rahman2018}).
\end{definition}

Moments can serve as a means to characterize probability distributions.  A pdf supported on an unbounded set is uniquely identifiable by its moments if and only if it is exponentially integrable \cite{Ernstetal2012}. \footnote{See also \cite[Th. 30.1, p. 388]{Billingsley2012}  and \cite[Th. 2.3.11]{CasellaBerger2001}.} This encompasses a very broad class, including most widely used densities. However, notable  counterexamples are the log-normal and Cauchy distributions.

When a distribution can be uniquely identified by its moments, then it  is completely  determined by these moment values, which  enables a succinct and accurate representation of the distribution. Moreover, distribution identification by moments facilitates meaningful comparisons between diverse distributions and streamlines statistical inference procedures.

Let $\phi(x)$ be an arbitrary continuous pdf  that is positive everywhere on its support  $\Theta$, where $\Omega \subseteq \Theta$. We require either $\Theta$ be unbounded and $\phi(x)$ uniquely identifiable by its moments, or $\Theta$ be finite (bounded). Let $H = \{h_{0}(x), h_{1}(x), \ldots, h_{n}(x)\}$, $h_{0}(x) \equiv 1$ be a sequence of orthonormal polynomials on $\Theta$ with respect to $\phi(x)$; i.e., 

\begin{align}
\left<h_{i},h_{j}\right>_{\phi}&= \int\limits_{\Theta} h_{i}(x)h_{j}(x)\phi(x) dx 
=\begin{cases} 1 & i=j \\
0 & i \ne j \end{cases}. \label{orth.poly}
\end{align}
    A function $l(x)$ is said to belong to $L_{p}(\Sigma, \rho)$ if  $\int_{\Sigma}|l(x)|^{p}\rho(x)dx <\infty$
    (see \cite{rudin1986}).  
Throughout the paper, $f$ is used to denote the target and $\phi$ the reference pdf, respectively. Also, at least one of the following two assumptions is assumed to hold. 

\begin{Assumption}\label{Cond:1}
The support $\Omega$ of the pdf of $X$ 
 is a bounded set.
\end{Assumption}
\begin{Assumption}\label{Cond:2}
 The ratio $f(x) / \phi(x)$ is in 
 $L_{1}(\Omega, f)$. 
\end{Assumption}

We define  $\widetilde{f}(x)$ on $\Theta$ to be 
\begin{equation}\label{def:ftilde}
    \widetilde{f}(x) = \begin{cases}
    f(x),& x \in \Omega,\\
    0,& x \in \Theta \setminus \Omega.
    \end{cases}
\end{equation}
Since $H$ is an orthonormal system on $\Theta$ with respect to pdf $\phi$, any function in $L_{2}(\Theta, \phi)$ can be expanded into a Fourier series (see, e.g., \cite{KolmogorovFomin1976} or \cite{Rudin1976}) along the $H$ basis  elements. Under  Assumption \ref{Cond:1} or~\ref{Cond:2},  $g(x) = \widetilde{f}(x) / \phi(x)$ satisfies 
\begin{equation}\label{g_L2}
    \int\limits_{\Theta}g^{2}(x)\phi(x)dx = 
    \int\limits_{\Theta \setminus \Omega} \frac{\widetilde{f}(x)}{\phi(x)}\widetilde{f}(x)dx + \int\limits_{\Omega}\frac{f(x)}{\phi(x)}dF_{X}(x) < \infty,
\end{equation}
so that $g(x) \in L_{2}(\Theta, \phi)$. 
In consequence, $g$ has a Fourier series representation
\begin{equation}\label{fraction_expansion}
    g(x) = \sum\limits_{i = 0}^{\infty}\alpha_{i}h_{i}(x), 
\end{equation}
with
\begin{align*}
    \alpha_{i} &= \left<g, h_{i}\right>_{\phi} = \int\limits_{\Theta} g(x) h_{i}(x)\phi(x)dx = \int\limits_{\Theta} \frac{\widetilde{f}(x)}{\phi(x)} h_{i}(x) \phi(x)dx \\
    &= 
    \int\limits_{\Theta} \widetilde{f}(x)  h_{i}(x)dx 
    = \int\limits_{\Omega} f(x) h_{i}(x)dx + \int\limits_{\Theta / \Omega} \widetilde{f}(x)  h_{i}(x)dx = 
    \left<1, h_{i}\right>_{f}.
\end{align*}
The series in \eqref{fraction_expansion} converges in $L_2(\Theta,\phi)$. From \eqref{def:ftilde} and \eqref{fraction_expansion}, an estimator of $f$ is
\begin{equation} \label{estimator_3}
    \hat{f}(x) = \phi(x)\sum\limits_{i = 0}^{n}\left<1, h_{i}\right>_{f}h_{i}(x). 
\end{equation}
Each polynomial $h_{i}(x)$ is a sum of monomials, $h_{i}(x) = \sum_{j=0}^{i}a_{ij}x^{j}, i = 0,\ldots, n$. Since the first $n$ moments of $f(x)$ are known, 
\begin{equation}\label{aij}
    \left<1, h_{i}\right>_{f} = \sum_{j=0}^{i}a_{ij}\left<1, x^{j}\right>_{f} = \sum_{j=0}^{i}a_{ij}m_{j}, 
\end{equation}
where $m_j$ is the $j$th \textit{raw} moment of $X$ for $j=0,\ldots, i$, $i=0,\ldots, n$. 

\begin{definition}
The series-based estimator \eqref{estimator_3} of the pdf $f$ of $X$  is called a \textit{K-series estimator with reference $\phi$}, or simply \textit{K-series}.
\end{definition}

Let $\A = \{a_{ij}\}_{i, j = 0}^{n}$ be a lower triangular matrix with entries the coefficients of the \textit{ordered} vector of polynomials $h_{i}(x)$, $\hbf_n(x) = (h_{0}(x), \ldots, h_{n}(x))^{T}$  from $H$. 
Then,  \eqref{estimator_3} can also be computed by
\begin{equation} \label{estimator_3_mat}
    \hat{f}(x) = \phi(x) \left( \A \cdot \mbf_n \right)^{T} \cdot \hbf_n(x). 
\end{equation}
The only requirements for the \textit{K-series} estimator are (a) the unknown target pdf $f$ have bounded support and (b) the support of the reference $\phi$ be large enough to cover it. 
The only constraint for the choice of the reference distribution is to be continuous with support larger than that of the target pdf. Any such pdf can serve as a reference and thus polynomials $h_i$ of any order can be computed using the Gram-Schmidt orthogonalization procedure in \eqref{estimator_3}.  

There is no technical necessity to strictly adhere to the ordered sequence of the sequence of moments. It is possible to use moments of any order. What is necessary is to generate orthogonal polynomials in equation \eqref{estimator_3} in a specific sequence. 
For example, if moments of order 1, 5, and 12 are available, we can construct a system of orthogonal polynomials from the monomial set $\{1, x, x^5, x^{12}\}$ by the Gram-Schmidt process.

\subsection{K-series estimation in practice}\label{sec:practice}

We illustrate K-series estimation with two examples. For the first, we let the target pdf be truncated exponential with known parameters and support and derive its first two moments and its K-series estimate. In the second (Irwin-Hall Distribution), we express the distribution generating algorithm as a prob-solvable loop, compute its \textit{exact} moments using the \textsc{Polar} tool \cite{Moosbruggeretal2022} and then its K-series estimate.


\textbf{Truncated Exponential.} Suppose $X \sim  Trunc$ $Exp(1, [0,1])$ with  support $\Omega = \left[0, 1\right]$. We assume the  first two moments are known, specifically, we let 
$M_{2} =\{m_1=  0.418023, m_2=0.254070\}$, and 
 the reference distribution is uniform with the same support as the target unknown distribution; i.e.,
$ \phi(x) = 1$ for $x \in \left[0, 1\right]$. 

Legendre polynomials $l_{n}(\tau)$ are a standard basis of orthogonal polynomials on the interval [-1,1] with a weight function of 1. Consequently, for any uniform pdf on an arbitrary bounded interval, a corresponding set of orthonormal polynomials can be derived from the standard Legendre polynomials through the substitution $\tau \rightarrow (\tau - \mu) / \sigma$ and subsequent normalization. 

Since $\phi$ is uniform, we use the shifted and scaled Legendre polynomials as the orthonormal basis in the series (see \cite{XiuKarniadakis2002a}); $\bar{l}_{0} = 1$, $\bar{l}_{1} = \sqrt{3}(2x-1)$, $\bar{l}_{2} = \sqrt{5}(6x^{2}-6x+1)$.
To compute the unknown pdf estimator in \eqref{estimator_3}, we need to compute the $\alpha_{i}$  coefficients in \eqref{fraction_expansion}. By \eqref{aij}, this  requires the substitution of  $x^{i}$ with the corresponding moment $m_{i}$ in $M_{2}$, for $i=1, 2$. Doing so yields $\alpha_{0} = 1, \alpha_{1} = \sqrt{3}(2 \cdot 0.418023 - 1) = -0.283976$, $\alpha_{2} = \sqrt{5}(6 \cdot 0.25407 - 6 \cdot 0.418023 + 1) = 0.036407$. The K-series estimator is  \[\hat{f}(x) = 1 - 0.283976\cdot \bar{l}_{1}(x) + 0.036407\cdot \bar{l}_{2}(x),\] and almost fully coincides with the true truncated exponential pdf in panel (a) of Figure~\ref{fig:3}.

\begin{figure}[htbp]
    \begin{minipage}{.55\textwidth}
    \centering
    \vspace{0.3cm}
    \includegraphics[height=.63\textwidth]{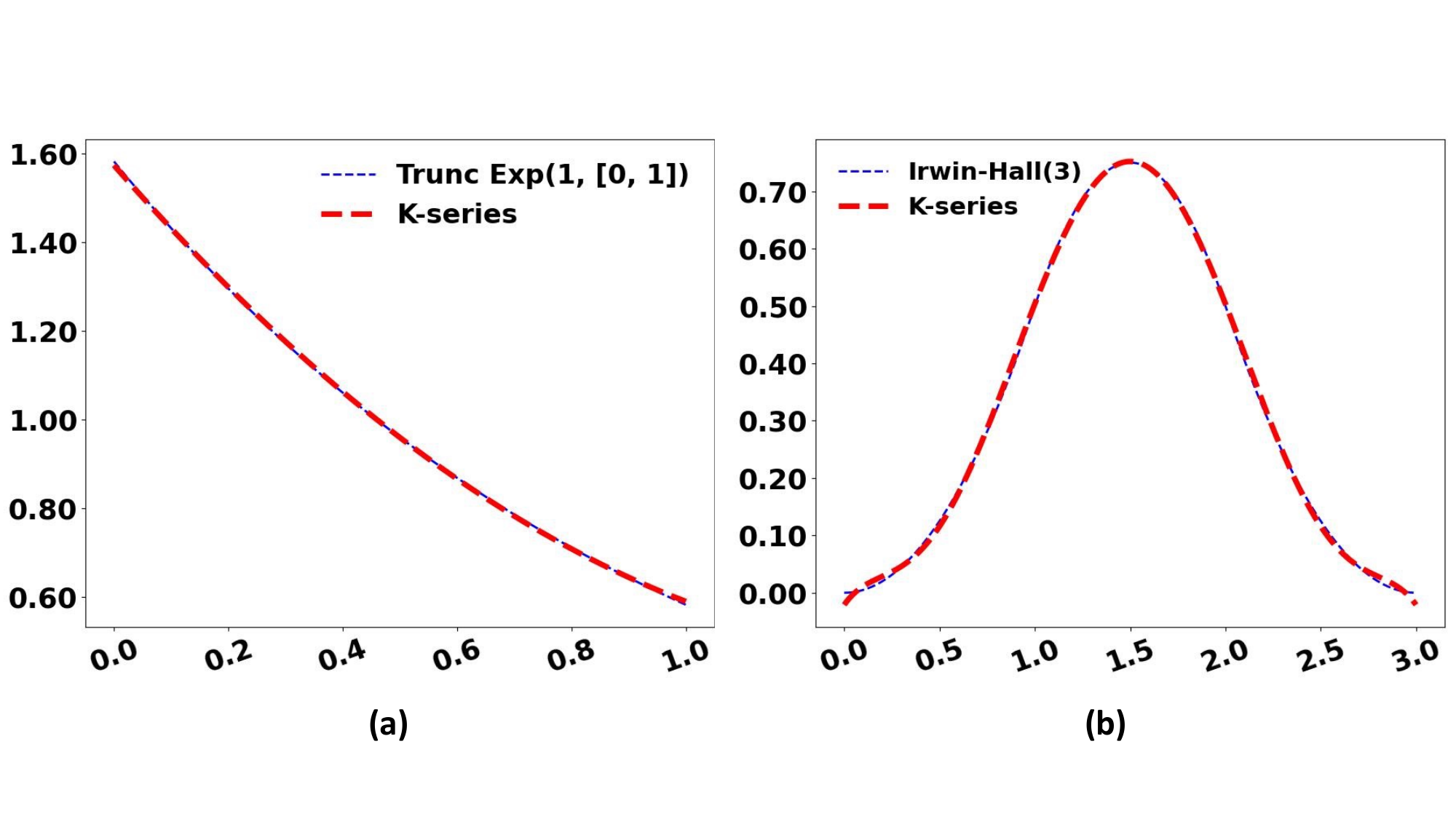}
    \end{minipage}
    \hspace{0.8cm}
    \begin{minipage}{.35\textwidth}
    \centering
\begin{lstlisting}[mathescape=true,frame=none,
%caption={(c)}, 
backgroundcolor=\color{white},
label = {lst:IrH}]
x := 0
while true:
    u := Uniform(0, 1)
    x := x + u
end
\end{lstlisting}
\end{minipage}
\caption{K-series approximation of a truncated exponential distribution (panel (a)) and the Irwin-Hall distribution (panel (b)).}
\label{fig:3}
\end{figure}

\textbf{The Irwin-Hall Distribution.} 
Irwin–Hall is the probability distribution of a sum of independent uniform random variables on the unit interval (uniform sum distribution). That is, 
$X \sim $ Irwin–Hall$(t)$ if $X=\sum_{i=1}^t U_i$, for $U_i$ independent and identically distributed (i.i.d.) as Uniform(0,1). 
This distribution, parameterized by the number of its summands, is encodable as the prob-solvable loop in the right panel of Fig.~\ref{fig:3}.

At each iteration $t$, the support of $x$ is $(0, t)$. 
Since the Irwin-Hall distribution is equivalent to a prob-solvable loop, its exact $n$ first moments can be computed with the algorithm in \cite{BartocciKS19}:
\begin{align*}\notag
    M(t) = \Bigg\{&\frac{t}{2}, \frac{t(3t + 1)}{12}, \frac{t^2(t+1)}{8}, \frac{t(15t^3+30t^2 + 5t - 2)}{240}, \\ &\frac{t^2(3t^3 + 10t^2 + 5t - 2)}{96}, \frac{t(63t^5+315t^4 + 315t^3 - 91t^2 - 42t + 16)}{4032}, \ldots\Bigg\}.
\end{align*} 



The first 6 moments of Irwin-Hall (3) are
$M_{6}(3) = \left\{\frac{3}{2}, \frac{5}{2}, \frac{9}{2}, \frac{43}{5}, \frac{69}{4}, \frac{3025}{84}\right\}.$
We use the Uniform$\left[0, 3\right]$ as a reference and construct the K-series estimator of the pdf of $x$ at iteration $t = 3$ with the 6 first moments and the first 7 shifted and scaled Legendre polynomials. The true pdf and its K-series estimate are plotted in panel (b) of Fig.~\ref{fig:3}, where we can see their almost perfect agreement. 

While iteration $t = 3$ is used for illustration purposes, iteration  number is not important for our method. One only needs to specify an appropriate reference and support (for the uniform reference the support is $\left[0, t\right]$). Alternatively, we can use a reference that has the appropriate support for any iteration; normal, truncated normal, gamma, etc.

\subsection{Special cases of K-series}\label{sec:general}

The K-series density estimator generalizes the widely used Gram-Charlier (GC) series density estimator. GC represents the pdf $f$ of a random variable $X$ as a series in terms of its cumulants and a normal reference distribution $\phi$ by using Hermite polynomials  (see, e.g., \cite{Cramer1957,KendallStuart1977}). 
The GC (type-)A  estimate of the pdf $f$ of $X$ is given  
by 
\begin{align}\label{pdfGCGeneralC}	
f_{\gc}(x) &= 
 \phi(x)\sum_{n=0}^\infty (-1)^n c_n He_n \left(x\right),
\end{align}
where $c_n=(-1)^n \int_{-\infty}^\infty f(t) He_n(t)dt / n!$, $\phi$ is the standard normal pdf and $$ He_n \left(x\right)=n! \sum_{k=0}^{[n/2]} \left((-1)^k x^{n-2k}\right)/\left(k!(n-2k)! 2^k\right)$$
Proposition \ref{prop:GC} shows that the GC series A estimator in \eqref{pdfGCGeneralC} is a special case of the K-series estimator. 

\begin{proposition}\label{prop:GC}
Suppose the reference pdf $\phi$ is normal with mean and variance corresponding to the first and second moments of the target pdf $f$. Then, the K-series estimator \eqref{estimator_3} equals the Gram-Charlier estimator \eqref{pdfGCGeneralC}. 
\end{proposition}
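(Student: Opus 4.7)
The plan is to show that when the reference pdf $\phi$ is normal, the orthonormal polynomial basis generated by Gram--Schmidt against $\phi$ coincides, up to a factorial normalization, with the probabilist's Hermite polynomials $He_n$, after which the K-series and GC-A expressions for $\hat f$ agree term by term.

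First, I would recall the classical orthogonality relation $\int_{\mathbb{R}} He_i(x) He_j(x)\phi(x)\,dx = i!\,\delta_{ij}$ when $\phi$ is the standard normal pdf. Together with $He_0 \equiv 1$, this forces the orthonormal basis in \eqref{orth.poly} to be $h_i(x) = He_i(x)/\sqrt{i!}$. For the general case in the proposition, where $\phi$ is normal with mean $\mu$ and variance $\sigma^2$ matching $m_1$ and the appropriate function of $m_1,m_2$, I would pass to the standardized variable $Z = (X-\mu)/\sigma$ and use $h_i(x) = He_i((x-\mu)/\sigma)/\sqrt{i!}$; the argument below is unchanged because all inner products transform covariantly. (Strictly, the formula \eqref{pdfGCGeneralC} as stated is for the standardized case, so assuming without loss of generality $\mu=0$, $\sigma=1$ is clean.)

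Next, I would substitute this basis into the K-series coefficients, obtaining
\begin{equation*}
\left<1,h_i\right>_f \;=\; \int_{\mathbb{R}} f(x)\, \frac{He_i(x)}{\sqrt{i!}}\,dx \;=\; \frac{1}{\sqrt{i!}} \int_{\mathbb{R}} f(t)\, He_i(t)\,dt,
\end{equation*}
so that the $i$-th term of the K-series estimator \eqref{estimator_3} becomes
\begin{equation*}
\phi(x)\,\left<1,h_i\right>_f h_i(x) \;=\; \phi(x)\,\frac{He_i(x)}{i!}\int_{\mathbb{R}} f(t)\,He_i(t)\,dt.
\end{equation*}
On the Gram--Charlier side, inserting $c_n = (-1)^n \int f(t)\,He_n(t)\,dt /\, n!$ into \eqref{pdfGCGeneralC} yields $(-1)^n c_n He_n(x)\phi(x) = \phi(x)\,He_n(x) \int f(t)\,He_n(t)\,dt / n!$, since $(-1)^n\cdot(-1)^n=1$. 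Term-by-term identification with index $i=n$ gives the claim, and truncation at order $n$ on both sides yields equality of the finite estimators.

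The main obstacle, which is bookkeeping rather than mathematical, is keeping the Hermite conventions (probabilist's $He_n$ versus physicist's $H_n$), the orthonormalization factors $\sqrt{i!}$, and the alternating sign $(-1)^n$ appearing in both $c_n$ and \eqref{pdfGCGeneralC} consistent; once these are aligned the signs cancel and the proportionality constants coincide. A secondary subtlety is justifying the standard-normal reduction when $\phi$ has nontrivial $(\mu,\sigma^2)$, which amounts to observing that both estimators are invariant under an affine change of variable pairing $\phi$ with $f$.
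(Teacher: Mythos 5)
Your proposal is correct and follows exactly the route the paper indicates: the paper's one-line argument is precisely to take the standard normal as reference and replace $h_i$ in \eqref{estimator_3} by $He_i/\sqrt{i!}$, and you have simply filled in the sign and normalization bookkeeping showing the $(-1)^n$ factors cancel and the $1/i!$ constants match. Your added remarks on the affine reduction from general $(\mu,\sigma^2)$ to the standardized case are a sensible elaboration of what the paper leaves implicit.
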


Proposition \ref{prop:GC} is easy to obtain using the standard normal as reference pdf and replacing the polynomials $h_i$ in \eqref{estimator_3} by $He_i/\sqrt{i!}$.  

\cite{Munkhammar_etal_2017} developed the \textit{Method of Moments (MM)} estimation algorithm for parameters of a target distribution $f$ by equating sample moments with the corresponding moments of the distribution. The approximation is carried out on the interval where they wish to maximize accuracy. In practice, this is the same as assuming finite or bounded support.  \cite{Munkhammar_etal_2017} showed that MM beats the GC expansion for several distributions, such as the Weibull on a positive finite support, in simulation experiments.

The MM algorithm 
starts by choosing an interval $\left[a, b\right]$ that is thought to contain most of the mass of the target unknown distribution. Using a finite set of moments $\{m_{1}, \ldots, m_{n}\}$ and $m_{0} = 1$, MM constructs a polynomial estimator $\hat{f}(x)$ by solving a linear system of equations,
\begin{equation}\label{MM_estimator}
    m_{i} = \int\limits_{a}^{b}x^{i}\hat{f}(x)dx, \hspace{0.1cm} i = 0,\ldots,n,  
\end{equation}
which yields the coefficients $p_i$ of the series representation $\hat{f}_{\mm}(x) = \sum_{i=0}^n  p_{i}x^i$. 

Let $\mathbf{m}_n=(1,m_1,\ldots,m_n)^{T}$,  $\mathbf{p}_n=(p_0,p_1,\ldots, p_n)^{T}$, and $\mathbf{x}_n=(1,x,\ldots, x^n)^{T}$. The linear system \eqref{MM_estimator} can be expressed in matrix form as
   $ \mathbf{m}_n = \M_{ab} \cdot \mathbf{p}_n$,
where  $\M_{ab}$ is the matrix with elements the integrals of powers of $x$ over the interval $\left[a, b\right]$.
Theorem \ref{thm:MM} shows that MM 
is a special case of the K-series estimator.
Its proof is provided in Appendix~\ref{app:proofs}.

\begin{theorem}\label{thm:MM}
Suppose the reference pdf $\phi$ is the uniform with  the same support as the target pdf $f$. Then, the MM estimator 
coincides with the K-series estimator \eqref{estimator_3}.
\end{theorem}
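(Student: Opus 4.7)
The plan is to show that both estimators are polynomials of degree at most $n$ that satisfy the same moment-matching constraints, and then use uniqueness of the linear system in \eqref{MM_system} to conclude they coincide.

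First, I would observe that under the hypothesis of the theorem, $\Omega=[a,b]=\Theta$ and $\phi(x)=1/(b-a)$ is a constant on $[a,b]$. Since each orthonormal polynomial $h_i(x)$ has degree exactly $i$, the K-series estimator
\[
\hat{f}(x) = \phi(x)\sum_{i=0}^{n}\langle 1,h_i\rangle_f\, h_i(x)
\]
is a polynomial in $x$ of degree at most $n$. Therefore $\hat{f}$ belongs to the same finite-dimensional space $\mathcal{P}_n$ as the MM estimator $\hat{f}_{\mm}$, so it suffices to show they satisfy the same defining linear system \eqref{MM_system}.

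Next I would verify the moment-matching identity $\int_a^b x^i \hat{f}(x)\,dx = m_i$ for $i=0,\ldots,n$. The key step is to expand $x^i$ in the orthonormal basis: since $\{h_0,\ldots,h_i\}$ spans $\mathcal{P}_i$, there exist coefficients $c_{ik}$ with $x^i=\sum_{k=0}^{i}c_{ik}h_k(x)$. Substituting, using orthonormality \eqref{orth.poly} of $\{h_k\}$ with respect to $\phi$, and recalling $\langle 1,h_k\rangle_f=\int_\Omega f(x)h_k(x)\,dx$,
\begin{align*}
\int_a^b x^i \hat{f}(x)\,dx
&= \int_a^b \Bigl(\sum_{k=0}^{i}c_{ik}h_k(x)\Bigr)\phi(x)\sum_{j=0}^{n}\langle 1,h_j\rangle_f h_j(x)\,dx \\
&= \sum_{k=0}^{i}c_{ik}\langle 1,h_k\rangle_f
= \int_\Omega f(x)\sum_{k=0}^{i}c_{ik}h_k(x)\,dx
= \int_\Omega f(x)x^i\,dx = m_i.
\end{align*}
Thus the coefficient vector of $\hat{f}$ in the monomial basis $\mathbf{x}_n$ also solves \eqref{MM_system}.

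Finally, I would invoke uniqueness: the matrix $\M_{ab}$ is the Gram matrix of the linearly independent monomials $\{1,x,\ldots,x^n\}$ with respect to the Lebesgue inner product on $[a,b]$, hence positive definite and invertible. Consequently, $\mathbf{p}_n = \M_{ab}^{-1}\mathbf{m}_n$ is the unique coefficient vector in $\mathcal{P}_n$ realizing the given moments, so $\hat{f}_{\mm}(x)=\hat{f}(x)$ identically on $[a,b]$. The only mildly delicate point is confirming that the K-series expression genuinely lies in $\mathcal{P}_n$ (which relies on $\phi$ being constant, not merely uniform in shape); all other steps are straightforward consequences of the orthonormality of $\{h_i\}$ and the invertibility of the Hankel-type matrix $\M_{ab}$.
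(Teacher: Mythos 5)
Your proof is correct, and it takes a genuinely different route from the paper's. The paper works entirely at the level of coefficient vectors: it expands $\hat{f}_{\mm}$ in the Legendre basis to get $\mathbf{p}_n = \phi\,\Lambdabf^{T}\mathbf{a}_n$, reduces the claim to the matrix identity $\phi \cdot \M_{ab}\cdot \Lambdabf^{T} = \Lambdabf^{-1}$, and establishes that identity by recognizing $\Lambdabf$ as the (inverse of the) lower-triangular Cholesky factor of the moment matrix $\phi\cdot\M_{ab}$, citing an external result of Szab\l{}owski. You instead argue functionally: with a constant reference the K-series estimator is itself a polynomial of degree at most $n$, it satisfies the defining moment equations \eqref{MM_estimator} because orthonormality of $\{h_k\}$ with respect to $\phi$ turns $\int_a^b x^i\hat{f}(x)\,dx$ into $\sum_{k\le i}c_{ik}\langle 1,h_k\rangle_f=\int_\Omega f(x)x^i\,dx=m_i$, and the positive-definiteness of the Gram matrix $\M_{ab}$ forces uniqueness of the solution to \eqref{MM_system}. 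Your moment-matching computation and the invertibility argument both check out, and you correctly flag that constancy of $\phi$ (not just uniformity of shape) is what keeps $\hat f$ inside the space of degree-$n$ polynomials. What your approach buys is self-containment and transparency: no external Cholesky lemma is needed, and it isolates the real content of the theorem, namely that the K-series estimator with uniform reference automatically reproduces the first $n$ moments. What the paper's approach buys is the explicit structural identity linking the Legendre coefficient matrix to the Cholesky factor of the moment matrix, which is of independent computational interest.
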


MM and GC are special cases of  K-series estimation. As such, they also enjoy the theoretical properties of K-series in the constrained setting in which they apply. We next show in   Theorem~\ref{thm:L1_convergence} that the general K-series estimator \eqref{estimator_3} converges to the true target pdf. Its proof is given in Appendix~\ref{app:proofs}.

\begin{theorem}\label{thm:L1_convergence}
Let $\phi(x)$ be continuous, positive everywhere on $\Theta$: $\Omega \subseteq \Theta$ and either 
(a) $\Theta$ is unbounded and $\phi(x)$ is uniquely identifiable by its moments, or (b) $\Theta$ is finite (bounded). 
Under  Assumption \ref{Cond:1} or \ref{Cond:2}, the K-series estimator \eqref{estimator_3} converges to the true pdf \eqref{def:ftilde}, $\widetilde{f}(x)$, in $L_{1}(\Theta, 1)$. Moreover, if $\phi(x)$ is a uniform pdf, it converges in $L_{2}(\Theta, 1)$.
\end{theorem}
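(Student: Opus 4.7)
\medskip
\noindent\textbf{Proof plan for Theorem \ref{thm:L1_convergence}.}
The plan is to reduce the two claimed convergence statements to a single fact: the Fourier series of $g(x)=\widetilde f(x)/\phi(x)$ in the orthonormal basis $\{h_i\}$ converges to $g$ in $L_2(\Theta,\phi)$. The rest is then just Cauchy--Schwarz for the $L_1$ statement, and a direct estimate for the $L_2$ statement under a bounded reference.

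First, I would record that $g\in L_2(\Theta,\phi)$: this is exactly the content of \eqref{g_L2}, which holds under either Assumption~\ref{Cond:1} or Assumption~\ref{Cond:2}. Next, the crucial structural step is to argue that the orthonormal polynomial system $H=\{h_i\}$ is \emph{complete} in $L_2(\Theta,\phi)$. This is where the two cases on $\Theta$ enter. If $\Theta$ is bounded, completeness is classical: the Weierstrass approximation theorem gives density of polynomials in $C(\overline\Theta)$, and a standard truncation/approximation argument transfers this to density in $L_2(\Theta,\phi)$ since $\phi$ is a finite continuous weight. If $\Theta$ is unbounded, completeness of the orthonormal polynomials in $L_2(\Theta,\phi)$ is equivalent to the measure $\phi(x)\,dx$ being determinate (the Hamburger moment problem), which is precisely the hypothesis that $\phi$ is uniquely identifiable by its moments (equivalently, exponentially integrable in the sense of Definition~\ref{expon_int}). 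Given completeness, the Riesz--Fischer/Parseval theorem yields
\begin{equation}\label{eq:L2phi}
\Big\| g - \sum_{i=0}^{n}\alpha_i h_i \Big\|_{L_2(\Theta,\phi)}^2 \;=\; \int_{\Theta}\Big(g(x)-\sum_{i=0}^{n}\alpha_i h_i(x)\Big)^{2}\phi(x)\,dx \;\longrightarrow\; 0.
\end{equation}

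For the $L_1(\Theta,1)$ claim, I write $\widetilde f(x)-\hat f(x)=\phi(x)\bigl(g(x)-\sum_{i=0}^{n}\alpha_i h_i(x)\bigr)$ and apply Cauchy--Schwarz with weight $\phi$:
\begin{equation}\notag
\int_{\Theta}|\widetilde f-\hat f|\,dx \;=\; \int_{\Theta}\bigl|g-\textstyle\sum_i\alpha_i h_i\bigr|\,\phi\,dx \;\le\; \Big\|g-\textstyle\sum_i\alpha_i h_i\Big\|_{L_2(\Theta,\phi)}\cdot\Big(\int_{\Theta}\phi\,dx\Big)^{1/2},
\end{equation}
and the right-hand side tends to $0$ by \eqref{eq:L2phi} since $\int_\Theta\phi=1$. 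For the $L_2(\Theta,1)$ claim under a uniform reference $\phi\equiv c$ on a bounded $\Theta$, I simply bound
\begin{equation}\notag
\int_{\Theta}(\widetilde f-\hat f)^{2}\,dx \;=\; \int_{\Theta}\phi^{2}\bigl(g-\textstyle\sum_i\alpha_i h_i\bigr)^{2}dx \;=\; c\int_{\Theta}\bigl(g-\textstyle\sum_i\alpha_i h_i\bigr)^{2}\phi\,dx \;\longrightarrow\; 0,
\end{equation}
which again uses \eqref{eq:L2phi}.

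The main obstacle is step two, the completeness of $H$ in $L_2(\Theta,\phi)$, because it is the only place where a delicate hypothesis is being invoked rather than a routine calculation. In the bounded case it is essentially Weierstrass, but in the unbounded case it needs the moment-determinacy of $\phi$, and I would cite the standard result (e.g.\ from Akhiezer's classical moment-problem monograph, or as stated in \cite{Ernstetal2012}) that a determinate measure on $\mathbb{R}$ with all moments finite has a complete orthonormal polynomial basis in its $L_2$-space. Everything else in the proof is a one-line application of Cauchy--Schwarz or of the constancy of a uniform density.
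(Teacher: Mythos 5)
Your proposal is correct and follows essentially the same route as the paper: reduce both claims to the $L_2(\Theta,\phi)$ convergence of the truncated Fourier series of $g=\widetilde f/\phi$, then obtain $L_1(\Theta,1)$ convergence by Cauchy--Schwarz against $\int_\Theta\phi=1$ and $L_2(\Theta,1)$ convergence by the constancy of a uniform reference. The only substantive difference is that you explicitly justify the completeness of the orthonormal polynomial system (Weierstrass on bounded $\Theta$, moment-determinacy of $\phi$ on unbounded $\Theta$), a step the paper asserts rather than proves; this is a welcome strengthening, though note that determinacy is \emph{sufficient} for density of polynomials in $L_2(\Theta,\phi)$ (M.~Riesz) rather than strictly equivalent to it.
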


The following theorem provides formal guarantees that the moments of the obtained estimate coincide with the corresponding moments of the target random variable based on which the estimate is constructed.

\begin{theorem}[Moment matching]\label{thm:moment_match}
Suppose the  K-series estimator \eqref{estimator_3} is constructed using the first $n$ moments $\{m_{1}, \ldots, m_{n}\}$ of the random variable $X$ with pdf $f(x)$ and set $m_{0} = 1$. Then,
\[\int\limits_{\Theta}x^{i}\hat{f}(x)dx = \int\limits_{\Omega}x^{i}f(x)dx = m_{i},\] 
for all $0 \leq i \leq n$. \end{theorem}

\begin{proof}
    Let $h_{i}(x)$ be the $i$th orthonormal polynomial with respect to the reference pdf $\phi(x)$ in \eqref{estimator_3}. Then, 
        $ \int\limits_{\Omega}h_{i}(x)f(x)dx = \alpha_{i}$.
Also, by the orthogonality of $h_{i}$s, the following holds
\begin{align*}
        \int\limits_{\Theta}h_{i}(x)\hat{f}(x)dx = \int\limits_{\Theta}h_{i}(x)\phi(x)\sum\limits_{j = 0}^{n}\alpha_{j}h_{j}(x)dx = 
        \sum\limits_{j = 0}^{n}\alpha_{j}\int\limits_{\Theta}h_{i}(x)\phi(x)h_{j}(x)dx = \alpha_{i}.
    \end{align*}
It remains to observe that any monomial $x^{i}$, $0 \leq i \leq n$, can be expressed as a  linear combination of the orthogonal polynomials $h_{j}$, $0 \leq j \leq i$.
\end{proof}

\subsection{Approximation of the support}\label{sec:suppapprox}

The space spanned by $\lfloor(n+1)/2\rfloor$ orthogonal polynomials with respect to the target density $f(x)$ can be constructed using the sequence of its first $n$ moments (see \cite{Szego1939}). The determinant
\begin{align}\label{ort_poly_matrix}
    D_{j}(x) = \begin{vmatrix}
        m_{0} & m_{1} & \ldots & m_{j} \\
        m_{1}  & m_{2} & \ldots &  m_{j+1} \\
        \vdots & \vdots & \ddots & \vdots \\
        m_{j-1} & m_{n} & \ldots & m_{2j-1} \\
        1 & x & \ldots & x^{j}
    \end{vmatrix}
\end{align}
defines the corresponding orthogonal polynomial (non-normalized) of degree $j$. That is, if the first $n$ moments of a random variable $X$ are known, then we can construct the first $\lfloor(n+1)/2\rfloor$ orthogonal polynomials. 

We let $e_{j}(x)$ denote an orthogonal polynomial of degree $j$ of the random variable $X$.
Theorems \ref{thm:zeros1}, \ref{thm:zeros2} and \ref{thm:zeros3} (see \cite{Szego1939,Chihara1978}) state elementary properties of zeros of orthogonal polynomials.

\begin{theorem}\label{thm:zeros1}
    Let $\Omega$ be an interval which is a supporting set for the distribution of $X$. The zeros of $e_{j}(x)$ are all real, simple and are located in $\Omega$.
\end{theorem}

\begin{theorem}\label{thm:zeros2}
    Between two zeros of $e_j(x)$ there is at least one zero of $e_{i}(x)$, $i > j$.
\end{theorem}

\begin{theorem}\label{thm:zeros3}
    The zeros $\{x_{j, \nu}\}_{\nu = 1}^{j}$ and $\{x_{j+1, \nu}\}_{\nu = 1}^{j+1}$ of $e_{j}(x)$ and $e_{j+1}(x)$ respectively, mutually separate each other. That is, 
    $$ x_{j+1, \nu} < x_{j, \nu} < x_{j+1, \nu+1}, \hspace{0.8cm} \nu = 1, \ldots, j$$
\end{theorem}

From Theorems \ref{thm:zeros1},  \ref{thm:zeros2} and \ref{thm:zeros3}, we can conclude that all zeros of orthogonal polynomials are simple and located precisely within the interior of the support. Moreover, as the polynomials' degree increases, the distance between the two outermost zeros also increases, resulting in a more accurate inner approximation of the random variable's support. The higher number of moments available, the higher the degree of polynomials that can be obtained, and the more accurate the estimation of the support becomes.
One only needs to calculate the polynomial of the highest possible degree using formula \eqref{ort_poly_matrix}, determine its zeros, and identify the lowest and highest values. 

We demonstrate this method using the example of the Irwin-Hall distribution in Sec. \ref{sec:practice}.
Let us suppose, that the first 6 moments of the random variable $X$ are available: $M_{6} = \left\{\frac{3}{2}, \frac{5}{2}, \frac{9}{2}, \frac{43}{5}, \frac{69}{4}, \frac{3025}{84}\right\}.$ We are interested in the minimum possible support of the pdf of  $X$. 
Since the first 6 moments are known, we can construct orthogonal polynomials of the random variable $X$ up to degree  $\lfloor(n+1)/2\rfloor=3$. Applying \eqref{ort_poly_matrix} yields the highest degree computable polynomial,
\begin{align}\label{ort_poly_matrix_example}
    D_{3}(x) = \begin{vmatrix}
        1.00 & 1.50 & 2.50 & 4.50 \\
        1.50  & 2.50 & 4.50 &  8.60 \\
        2.50 & 4.50 & 8.60 & 17.25 \\
        1 & x & x^{2} & x^{3}
    \end{vmatrix} = 0.025x^{3} - 0.1125x^{2} + 0.1525x - 0.06.
\end{align}
The polynomial in \eqref{ort_poly_matrix_example} has 3 distinct roots: $\{0.693774, 1.5, 2.306226\}$. Since all the roots belong to the interior of the support, the inner approximation of the support is $\left[0.693774, 2.306226\right]$.

\subsection{Validity of the input}\label{ssec:quality_control}

Not every sequence of real values can form a valid set of moments for any probability distribution. This issue is known as the Hamburger moment problem (see \cite{Chihara1978}). Given a sequence of real numbers $\{m_{i}\}_{i=0}^{\infty}$, the question is whether there exists a positive Borel measure $F$ such that 
\[ \int\limits_{-\infty}^{\infty}x^{i}dF(x) = m_{i}, \hspace{0.5cm} i = 0, 1, 2, \ldots \]
We introduce a procedure to examine whether the input set of values can be moments of a distribution. We require the input sequence of moments to be consecutive and without gaps.
Since we are dealing with a truncated set of moments, we refer to it as the \textit{truncated moment problem}. 
Let  
\begin{align}\label{mom_determinants}
    \Delta_{r} = \mathrm{det}(m_{i+j})_{i,j=0}^{r} = \begin{vmatrix}
        m_{0} & m_{1} & \ldots & m_{r}\\
        m_{1} & m_{2} & \ldots & m_{r+1} \\
        \vdots & \vdots & \ddots & \vdots \\
        m_{r} & m_{r+1} & \ddots & m_{2r} 
    \end{vmatrix}
\end{align}
be a sequence of determinants.
\begin{theorem}\cite{Chihara1978}\label{thm:Hamburger} The Hamburger moment problem has a solution if and only if the determinants $\Delta_{r}$ in \eqref{mom_determinants} are all positive.
\end{theorem}
By Theorem \ref{thm:Hamburger}, the truncated moment problem admits a solution only if all the determinants $\Delta_{r}, \hspace{0.1cm} r = 0,\ldots, \lfloor n/2 \rfloor$, are positive. 
The complete process of univariate K-series estimation is described in Algorithm~\ref{alg:Univar_K_series}.

\subsection{Multivariate K-series}\label{ssec:multivar_K_series}

K-series density estimation is easily generalizable to multivariate  distributions by considering the product of independent univariate distributions as the reference joint pdf. The coefficients of the corresponding multivariate orthogonal polynomials recover the multivariate dependence structure via their joint moments.

Let  $\X=(X_1,\ldots,X_k)^T$  be a vector of continuous random variables with joint 
non-negative pdf $f(\x)$, upper bounded and supported on $\Omega$ with countable zeros. 
Suppose that there exists a $k$-dimensional compact cube $\Qcal$, such that $\Omega \subseteq \mathcal{Q}$.
We assume that a finite number of moments, not necessarily an equal number for all, is known for each $X_{j}$, $j=1,\ldots, k$, and all cross-product moments are also known. That is,  we assume the set 
\begin{equation}\label{cross_moments}
    M_{d_{1},\ldots,d_{k}} = \left\{m_{i_{1},\ldots,i_{k}}=\E\left(X_1^{i_1}\ldots X_k^{i_k}\right):i_j=0,\ldots, d_j, d_j \in \mathbb{N}, j=1,\ldots, k\right\}
\end{equation}
is known.
Let $\Z=(Z_1,\ldots, Z_k)^T$ be a vector of continuous independent random variables and $\widetilde{\phi}(\z) = \prod_{j = 0}^{k}\phi_{j}(z_{j})$ be its pdf that is positive everywhere on its support $\Theta$, where $\Omega \subseteq \Theta$. We require either $\Theta$ be unbounded and  $\widetilde{\phi}(\z)$ uniquely identifiable by its moments, or $\Theta$ be bounded (see \cite{Rahman2018}).

\resizebox{1\columnwidth}{!}{%
\begin{algorithm}[H]\label{alg:Univar_K_series}
\caption{Univariate K-series procedure}
\label{alg:generator}
\SetKwProg{generate}{Construct}{}{end}
\SetKwProg{create}{Create}{}{end}
\SetKwProg{Proc}{Proc}{}{end}
\SetKwInput{Input}{Input}
\SetKwInput{Output}{Output}
\SetKwInput{Compute}{Compute}
\SetKwInput{return}{return}
\SetKwInput{Subs}{Substitute}
\SetKwInput{Searchfor}{Search for}
\Input{ 
\begin{itemize}
    \item $\{m_{i}\}_{i = 0}^{n}$ - sequence of $n$ moments, $m_{0}=1$\\
    \item $\phi(x)$ - reference pdf\\
    \item $\Theta$ - support of the reference
\end{itemize}}
\Output{ \begin{itemize}
    \item \textit{True} / \textit{False} - is the sequence $\{m_{i}\}_{i = 0}^{n}$ feasible?
    \item $\left[root_{min}, root_{max}\right]$- inner approximation of the support
    \item $\hat{f}(x) = \phi(x)\sum\limits_{i = 0}^{n}\left<1, h_{i}\right>_{f}h_{i}(x)$- K-series estimator
\end{itemize}}

\Compute {Determinants $\Delta_{r}$ according to \eqref{mom_determinants}, $0 \leq r \leq \lfloor n/2 \rfloor$.}
    \If{ $\exists r$:  \hspace{0.1cm} $\Delta_{r} \leq 0$}{
        \return{\textit{False}}
    }
\tcc{Approximation of the support}

\Compute {Orthogonal polynomial $e_{\lfloor (n+1) \rfloor/2}$ of the highest degree using \eqref{ort_poly_matrix}.}
\Searchfor {The lowest and the highest roots of  $e_{\lfloor (n+1) \rfloor/2}$: $\{root_{min}, root_{max}\}$}

\tcc{Orthogonal Polynomials Construction:}
    $h_{0}(x) = 1$\\
    \ForAll{$i$ in $\{1, 2, \ldots, n\}$}{
    { \tcc{Gram-Schmidt Orthogonalization} 
        $\widetilde{h}_{i}(x) = x^{i} - \sum\limits_{j=0}^{i-1}\frac{\left<x^{i}, h_{j}(x)\right>_{\phi}}{\left< h_{j}(x), h_{j}(x)\right>_{\phi}}$\; 
        $h_{i}(x) = \widetilde{h}_{i}(x) / \parallel \widetilde{h}_{i}(x) \parallel_{\phi}$;

     }}
     \ForAll{polynomial $h_{i}$ in $\{h_{1}, \ldots, h_{n}\}$}{
        \ForAll{monomial $x^{j}$ in $h_{i}(x) = \sum\limits_{j = 0}^{i}a_{ij}x^{j}$}{
            \Subs {$x^{j} \leftarrow m_{j}$}
        }
        \Compute{Fourier coefficients $\alpha_{i} = \left<h_{i}(x), 1\right>_{f} =  \sum\limits_{j = 0}^{i}a_{ij}m_{j}$}
     }
     \Compute{ $\hat{f}(x) = \phi(x)\sum\limits_{i = 0}^{n}\left<1, h_{i}\right>_{f}h_{i}(x)$}
\return{$\textit{True}$, $\left[root_{min}, root_{max}\right], \hat{f}(x)$}
\end{algorithm}
}

\medskip
Let 
\begin{equation}
   \Tilde{h}_{i_{1},\ldots,i_{k}}(\z) = \prod\limits_{j=1}^{k}h_{i_j}^{j}(z_{j}),
\end{equation}
where $h_{i_j}^{j}(z_{j})$ is a polynomial of degree $i_{j}$ that belongs to the set of orthogonal polynomials with respect to $\phi_{j}(z_{j})$, $i_j=0,\ldots, d_j$, $j=1,\ldots, k$, that are calculated with the Gram-Schmidt orthogonalization procedure. 
The set  
$H = \{\Tilde{h}_{i_{1},\ldots,i_{k}}(\z),  i_j=0,\ldots, d_j, d_j \in \mathbb{N}, j=1,\ldots, k\}$
contains the $k$-variate orthonormal polynomials on $\Theta$ with respect to $\widetilde{\phi}(\z)$. 
As in the univariate case, we require Assumption  \ref{Cond:1} hold and  let 
\begin{equation}
    \widetilde{f}(\z) = \begin{cases}
    f(\z),& \z \in \Omega,\\
    0,& \z \in \Theta \setminus \Omega.
    \end{cases}
\end{equation}
Then, $ \widetilde{f}(\z)/\widetilde{\phi}(\z) = g(\z)$  is approximated by 
\begin{equation*}
   \hat{g}(\z) = 
    \sum_{\substack{i_j \in \{0,\ldots,d_j\},\\ j=1,\ldots, k}} \alpha(i_1,\ldots,i_k) \Tilde{h}_{i_{1},\ldots,i_{k}}(\z)
= \sum_{\substack{i_j \in \{0,\ldots,d_j\},\\ j=1,\ldots, k}} \alpha(i_1,\ldots,i_k)\prod\limits_{j=1}^{k}h_{i_j}^{j}(z_{j}), 
\end{equation*}
where 
the Fourier coefficients $\alpha(i_1,\ldots,i_k)$ are calculated as follows.
\begin{align}
    \alpha(i_1,\ldots,i_k) &= \left<g, \Tilde{h}_{i_{1},\ldots,i_{k}}\right>_{\widetilde{\phi}} = \int\limits_{\Theta} g(\z) \Tilde{h}_{i_{1},\ldots,i_{k}}(\z)\widetilde{\phi}(\z)d\z \notag
    \\
    &= \int\limits_{\Omega}f(\z)  \Tilde{h}_{i_{1},\ldots,i_{k}}(\z)d\z + \int\limits_{\Theta / \Omega} \widetilde{f}(\z)  \Tilde{h}_{i_{1},\ldots,i_{k}}(\z)d\z \notag \\
    &= 
    \left<1, \Tilde{h}_{i_{1},\ldots,i_{k}}(\z)\right>_{f}. \label{in.prod}
\end{align}
Since for all $i_{j} = 0,\ldots, d_{j}, j = 1,\ldots, k,$  $h_{i_j}^{j}(z_{j})= \sum_{l=0}^{i_{j}}a^{j}_{i_{j}l}z_{j}^{l},$  their product is
\begin{align*}
    \Tilde{h}_{i_{1},\ldots,i_{k}}(\z) &= \prod\limits_{j = 1}^{k}h_{i_j}^{j}(z_{j}) = \prod\limits_{j = 1}^{k}\sum_{l=0}^{i_{j}}a^{j}_{i_{j}l}z_{j}^{l} 
    = 
    \sum_{\substack{l_j \in \{0,\ldots,i_{j}\},\\ j=1,\ldots, k}}  z_1^{l_1}  \cdots z_k^{l_k}\prod\limits_{j = 1}^{k}a^{j}_{i_{j} l_{j}}.
\end{align*}
Assuming all first cross-moments of $f(\z)$, $m_{l_1,\ldots,l_k} = \E_{f}\left(Z_1^{l_1}  \cdots Z_k^{l_k} \right)$ are known, we can compute \eqref{in.prod} as
\begin{align}\label{K_series_multivar_coefs}
    \left<1, \Tilde{h}_{i_{1},\ldots,i_{k}}(\z)\right>_{f} &= \sum_{\substack{l_j \in \{0,\ldots,i_{j}\},\\ j=1,\ldots, k}}  \left<1, z_1^{l_1}  \cdots z_k^{l_k}\right>_{f}\prod\limits_{j = 1}^{k}a^{j}_{i_{j} l_{j}} = \sum_{\substack{l_j \in \{0,\ldots,i_{j}\},\\ j=1,\ldots, k}}   m_{l_1,\ldots,l_k}\prod\limits_{j = 1}^{k}a^{j}_{i_{j} l_{j}}. 
\end{align}
The \textit{multivariate K-series} estimator of $f$ is  
\begin{equation} \label{mult_k_series}
    \hat{f}(\x) = \widetilde{\phi}(\x)\sum_{\substack{i_j \in \{0,\ldots,d_j\},\\ j=1,\ldots, k}} \left<1, \Tilde{h}_{i_{1},\ldots,i_{k}}(\z)\right>_{\f} \Tilde{h}_{i_{1},\ldots,i_{k}}(\z). 
\end{equation}
A probabilistic loop application of K-series estimation is shown in Fig.~\ref{fig:DDMR}, where the pdf of the location $(X,Y)$ of the \textit{Differential-Drive Mobile Robot} is estimated, assuming it is characterizable by its moments. The joint and marginal distributions of the location variables $X$ and $Y$ are derived from a finite set of moments at iteration $t = 25$. 
The value of 25 was chosen to provide a non-trivial example of the capabilities of our approach. Moreover,  it serves as a juxtaposition to competing tools (such as $\lambda$PSI \cite{Gehretal2020}), which fail to generate a meaningful expression even by iteration 5.
We use the first $6$ moments for the marginals and the first $48$ moments for the joint distribution. 
The  middle and right top panels of Fig.~\ref{fig:DDMR}) plot the marginal pdfs of $X$ and $Y$, respectively. The histograms are based on $10^{6}$ draws from the true marginals. Our K-series estimates are in dashed red and agree almost perfectly with the true marginal pdfs. The left bottom panel plots the estimated joint pdf of $(X,Y)$. The right panel draws comparative frequency bar plots of $10^{6}$ true and estimated values of the bivariate random variable $(X,Y)$ over 2-dimensional grids of the support of the bivariate distribution. 
Our estimate (red bars) practically coincides with the true joint pdf (blue bars) over the grid. 

Another illustration of this algorithm on the truncated bivariate normal is given in Appendix~\ref{app:bivar_normal}.

\section{Symbolic K-series representation along iterations}
\label{symbolic}

In this section, we demonstrate the unique ability of our method to express the distribution of one or multiple state variables as a function of the iteration number in closed form.

We introduce the semantics of  \emph{prob-solvable loops}, introduced by \cite{Bartoccietal2019}, as we are considering infinite probabilistic loops and the properties of state variables at each iteration. For the class of \emph{prob-solvable loops}, moments of all orders of program variables can be symbolically computed. Given a prob-solvable loop and a program variable $x$, \cite{Bartoccietal2019} calculate a closed-form solution for $\E(x_t^k)$ for any arbitrary $k \in \mathbb{N}$, with $t$ representing the $t$-th loop iteration. Prob-solvable loops were initially restricted to polynomial variable updates. \cite{Kofnovetal2022} relaxed the restriction to allow square-integrable function updates.

\begin{definition}[Prob-solvable loops \cite{Bartoccietal2019,Kofnovetal2022}]\label{def:probsolvable}
Let $m \in \nat$ and $x_1,\ldots x_m$ denote real-valued program
variables. 
A Prob-solvable loop with program variables $x_1,\ldots x_m$ is a loop of the form
\begin{equation*}\label{eq:ProbModel}
  I; \texttt{ while true: } U \texttt{ end}, \quad \text{such that}
\end{equation*}
\begin{itemize}
    \item $I$ is a sequence of initial assignments over a subset of $\{x_1,\ldots, x_m\}$. The initial values of $x_i$ can be drawn from a known distribution. 
    They can also be real constants.
    \item $U$ is the loop body and a sequence of $m$ random updates, each of the form,
    \begin{equation*}\label{eq:ProbModel:prob_assignments}
        x_i = \textit{Dist} \quad \text{or} \quad x_i = a_i x_i + G_{i}(x_1,\dots x_{i-1}),
    \end{equation*}
    where $a_i \in \real$, $G_{i} \in \real[x_1,\ldots,x_{i-1}]$ is a square-integrable function over program
    variables $x_1,\ldots,x_{i-1}$ and \textit{Dist} is a random variable whose distribution is independent of program variables with computable moments.
    $a_i$ can be random variables with the same constraints as for \emph{Dist}.
\end{itemize}
\end{definition}

The K-series estimator can be expressed as a quantitative invariant in the sense that its formula is a function of loop iteration. In the univariate case, the K-series estimator \eqref{estimator_3} of the unknown pdf of the random variable $X$ is $\hat{f}(x)=\phi(x) \sum_{i=0}^n \left(\sum_{j=0}^i a_{ij}m_j\right)h_i(x)$, where $m_j=\E(X^j)$. The estimator is a function of the moments of $X$, which in turn, vary along iterations in a probabilistic loop. That is, the K-series estimator can be equivalently expressed as  
\begin{equation}\label{est:iter}
    \hat{f}_t(x) = \phi(x) \sum_{i=0}^{n} \left(\sum_{j=0}^i a_{ij}m_{j}(t)\right)h_{i}(x), 
\end{equation}
where $m_j(t)=\E(X^j(t))$ is the moment of the random variable $X$ at iteration $t$. Formula \eqref{est:iter} is the symbolic representation of the K-series pdf estimator as a function of iteration number. 
Similarly, the multivariate K-series estimator \eqref{mult_k_series} can be written as
\begin{align} \label{mest:iter}
    \hat{f}(\x) 
    &= \widetilde{\phi}(\x)\sum_{\substack{i_j \in \{0,\ldots,d_j\},\\ j=1,\ldots, k}}  \left(\sum_{\substack{l_j \in \{0,\ldots,i_{j}\},\\ j=1,\ldots, k}}   m_{l_1,\ldots,l_k}(t)\prod\limits_{j = 1}^{k}a^{j}_{i_{j} l_{j}} \right)\Tilde{h}_{i_{1},\ldots,i_{k}}(\z)
\end{align}
where $m_{l_1,\ldots,l_k}(t)=\E\left(X_1^{l_1}(t)\cdot \ldots \cdot X_k^{l_k}(t)\right)$ at iteration $t$, since the moments of the random vector depend on the iteration in a probabilistic loop. 

We illustrate \eqref{est:iter} by considering the probabilistic loop in Fig.~\ref{fig:example_symbolic}(A): the target random variable $r$ is modeled as the minimum of random variables $x$ and $y$. Variable $y$ is uniformly distributed on $(0, 20)$, while 
$x$ follows a mean-reverting process and is affected by the stochastic shock $\theta \sim $Uniform$(-8, 8)$ at each iteration.
We can now use the approach from \cite{BartocciKS19} to estimate moments for arbitrary iterations and use them to receive the symbolic expression for the pdf of $r$ for the corresponding iteration. Since $\min(\cdot, \cdot)$ is a non-polynomial function, we apply the approach in \cite{Kofnovetal2022} to represent $\min(\cdot, \cdot)$ as an expansion in orthogonal polynomials. The transformation is given in the bottom panel of Fig.~\ref{fig:example_symbolic}.

\begin{figure}[!h]
\centering
\includegraphics[scale=0.25]{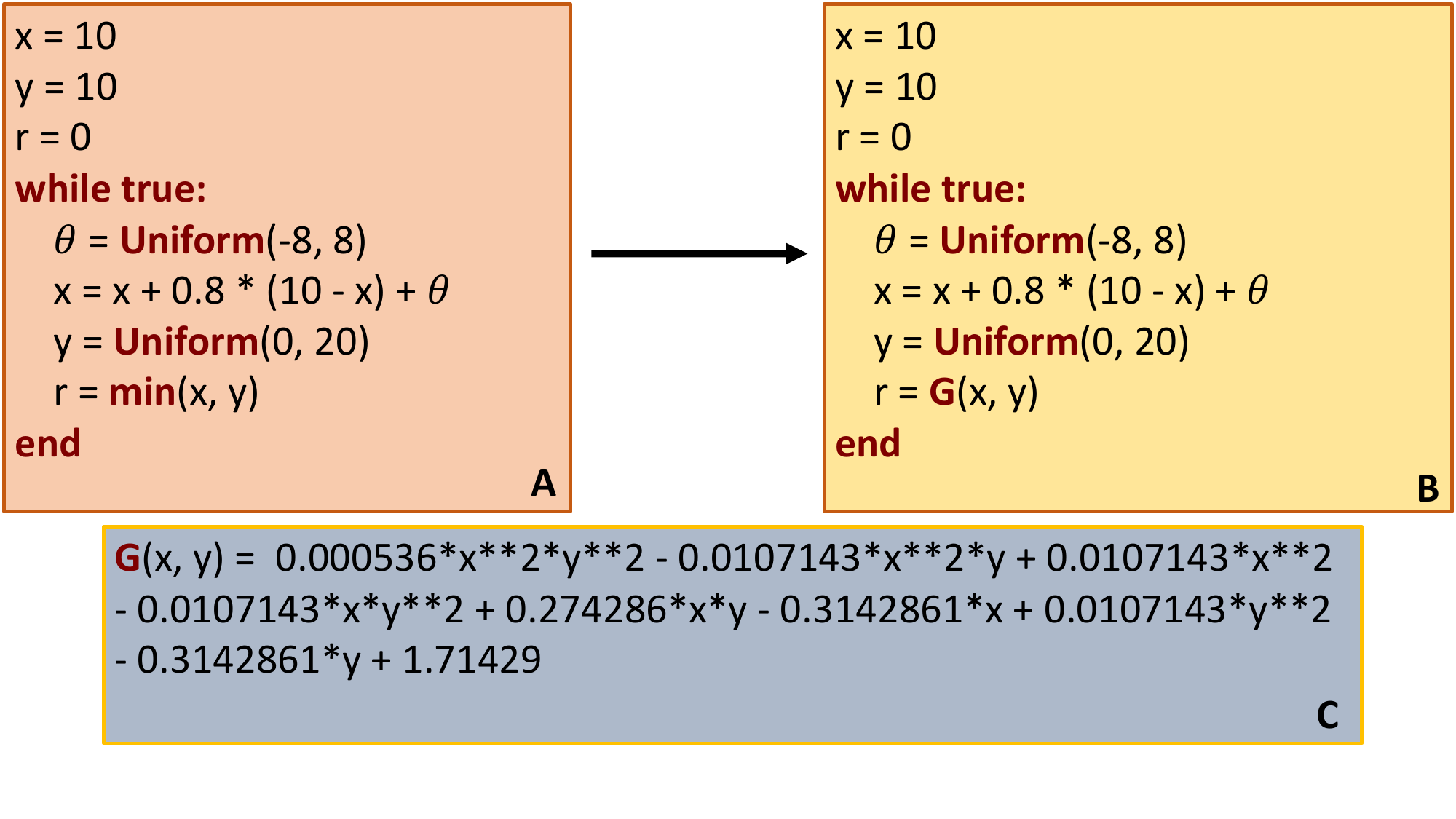}
  \caption{(A) Probabilistic loop with non-polynomial assignment, (B) Transformation of the program A using Polynomial Chaos Expansion~\cite{Kofnovetal2022}, by replacing the function $\min(\cdot, \cdot)$ with the polynomial $G(x,y)$.} 
  \label{fig:example_symbolic}
\end{figure}

Once this is computed, the program in Fig.~\ref{fig:example_symbolic} (B) can be handled using the algorithm in \cite{BartocciKS19}. The equations estimating the first four moments for each iteration are in the left panel of Fig.~\ref{fig:example_symbolic_moms}. We choose the uniform distribution on $(0, 20)$ as the reference pdf. We compute the shifted and scaled Legendre polynomials and substitute the moment equations as functions of iteration $t$. Similarly, we can derive the symbolic expression of the pdf estimate for any arbitrary iteration $t$. The right panel of Fig.~\ref{fig:example_symbolic_moms} plots the pdf estimate of the random variable $r$ at iteration $t = 30$ given by
\[\hat{f}_{30}(r) = 5.165866e-7*r^{4} + 2.561246e-5*r^{3} - 0.001472*r^{2} + 0.012320*r + 0.055246. \]

\begin{figure}[!h]
\centering
\includegraphics[width=10.5cm, height=5.5cm]
{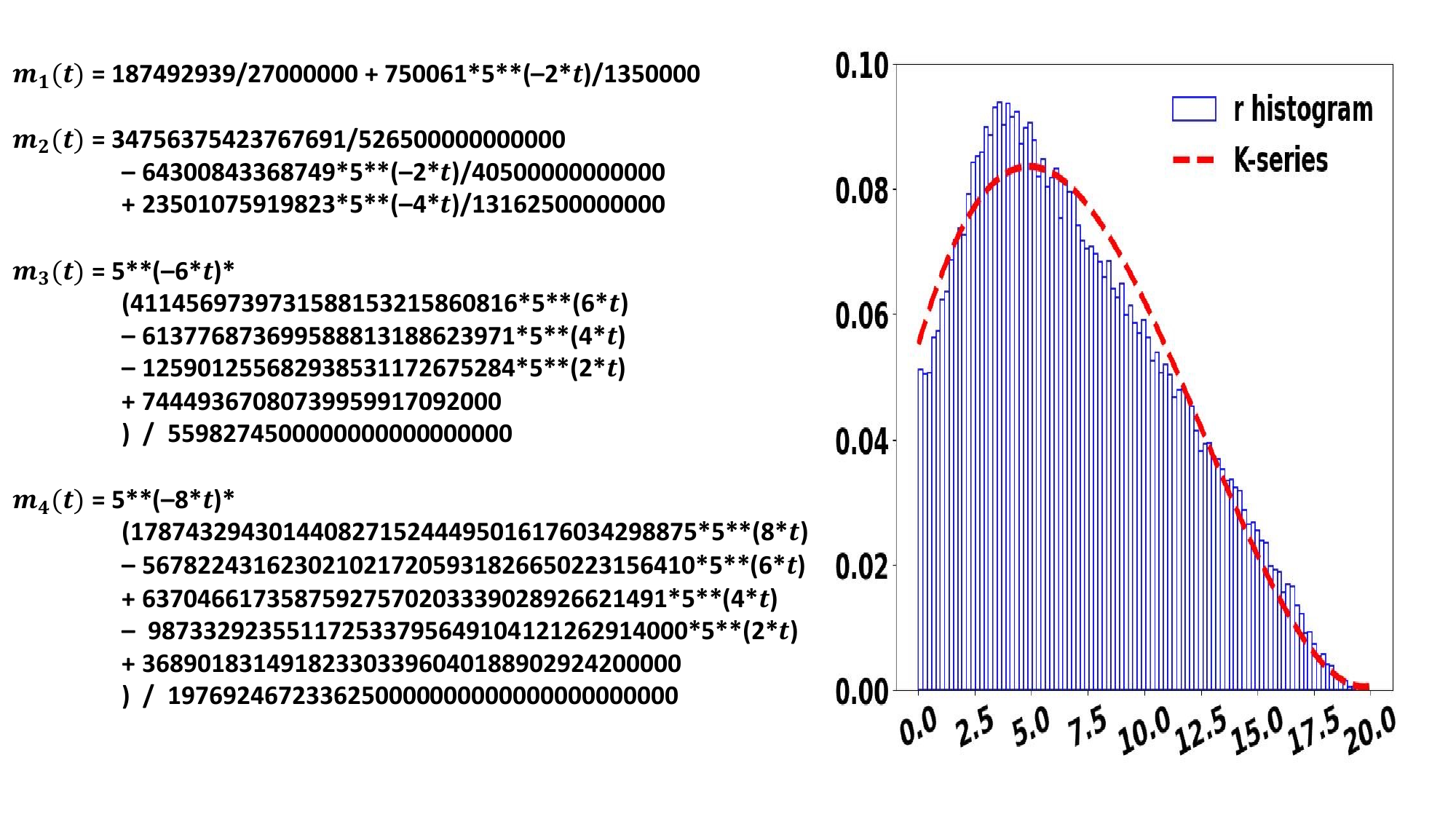}
  \caption{Left panel: First four moments expressed symbolically in the number of iterations. Right panel: Comparison between 
  the histogram of the sampling pdf and the 
  symbolic K-series estimation at $t=30$.}
  \label{fig:example_symbolic_moms}
\end{figure}

\section{Experiments}
\label{experiments}

We carried out K-series estimation of the distributions of the random variables generated in the execution of several probabilistic loops. The implementation code is available upon request. 
The first application is the \textit{Differential-Drive Mobile Robot} in Fig.~\ref{fig:DDMR}, where we observe a practically perfect approximation of both marginal and joint pdfs of the location of the robot. 
All experiments were conducted on a machine equipped with 16 GB of RAM and an Apple M1 Pro processor. The runtimes of all experiments in this section are displayed in Table \ref{tab:Experiments_time}. The Python code for the experiments in this paper can be found at \href{https://anonymous.4open.science/r/K-Series_TOPML-8D2F}{GitHub}. The runtimes of the other examples in this paper are reported in Table \ref{tab:Illustrations_time} in Appendix \ref{app:bivar_normal}. We distinguish between the time required for the Gram-Schmidt process and the time for the estimator construction. Our approach is highly time-efficient. Additionally, users can leverage precomputed standard sets of orthogonal polynomials to avoid recomputing them using the Gram-Schmidt process. Formal statistical tests for the goodness-of-fit of our estimates and the true (sampling) pdfs are carried out in Appendix~\ref{sec:KS} and the results, which overwhelmingly support our estimation procedure, are reported in Table~\ref{tab:Experiments_results}.

The  program in  Panel A1 of Fig.~\ref{fig:otherexamples} encodes the \textit{turning vehicle model} in \cite{Srirametal2020,Kofnovetal2022}. 
We use 
the truncated normal on $(1,18) \times (-15,15)$ with mean the sample mean  and variance  4 for $X$, and the sample variance of the  $Y$ distribution as reference pdfs. While the support of $X$ is not important, the accuracy of the estimation depends on the variance for $X$. When the variance is very small, the estimation becomes numerically unstable. This effect on the estimation is reflected in the K-series detecting, possibly artificially, two modes in Fig. \ref{fig:turning_vehicle_results_sv}.

\begin{figure}
     \centering
     \begin{minipage}{.32\textwidth}
        \centering
{\footnotesize
\begin{lstlisting}[mathescape=true,frame=none,
%caption={(c)}, 
backgroundcolor=\color{white},
label = {lst:TVM}]
v0 = 10, $\tau$ = 0.1, K = -0.5
$\psi$ = Normal(0, 0.01)
v = Uniform(6.5, 8.0)
x = Uniform(-0.1, 0.1)
y = Uniform(-0.5, -0.3)
while true:
    w1 = Uniform(-0.1, 0.1)
    w2 = Normal(0, 0.01)
    x = x + $\tau$ v cos($\psi$)
    y = y + $\tau$ v sin($\psi$)
    v = v + $\tau$ (K(v-v0)+w1)
    $\psi$ = $\psi$ + w2
end
\end{lstlisting}
}
\label{fig:(a)}
     \end{minipage}
     \hfill 
     \begin{minipage}{.65\textwidth}
         \centering
         \includegraphics[width=0.85\textwidth]{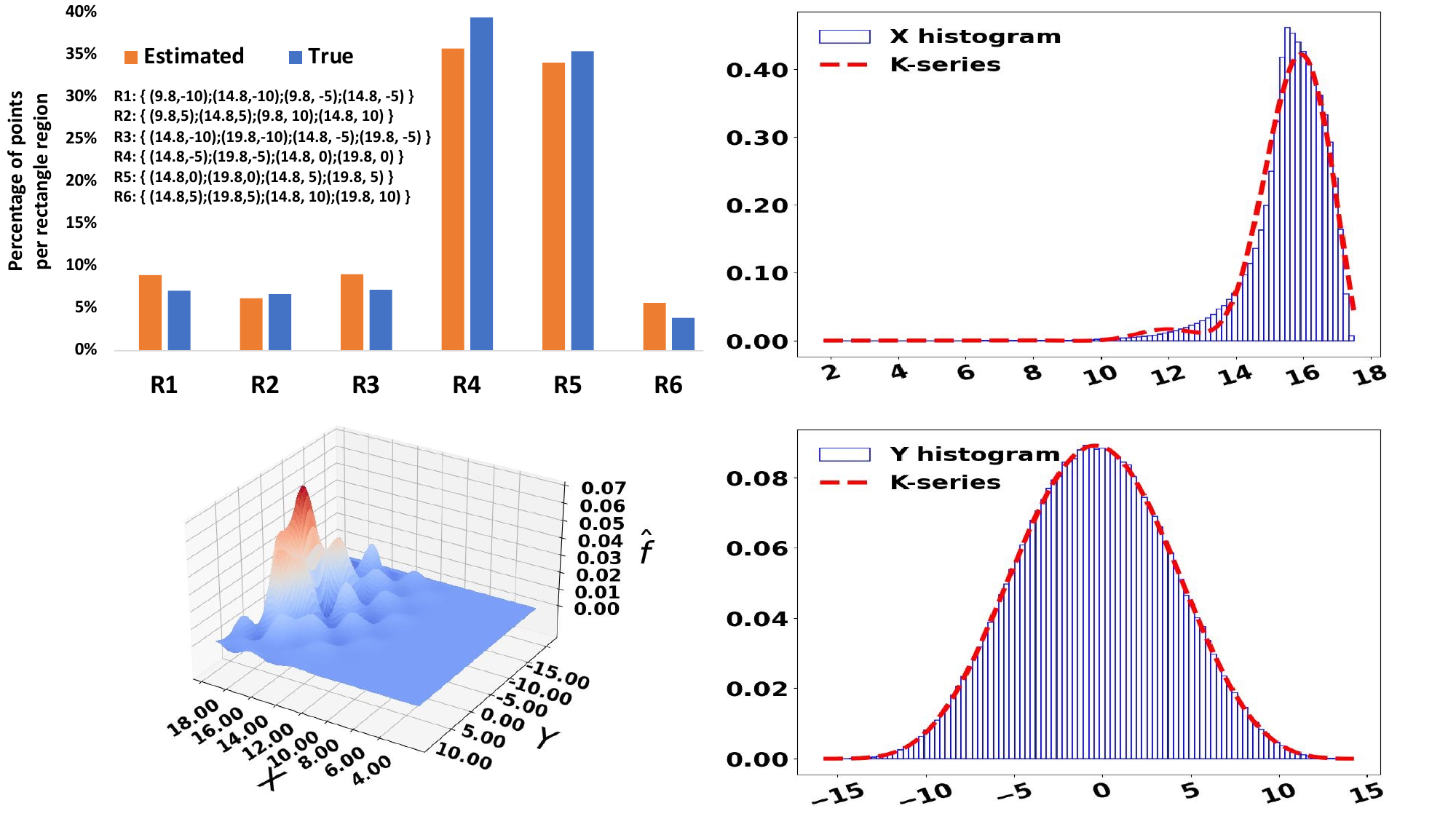}
  \label{fig:(b)}
     \end{minipage}
     
       \caption{Turning Vehicle Model: Code and K-series estimates of the marginal pdfs of $X$ and $Y$ (right upper and lower panels), the joint (lower left panel) and comparison bar plot (upper left panel) at iteration $t = 20$.}
        \label{fig:turning_vehicle_results_sv}
\end{figure}

The program in Panel A of Fig.~\ref{fig:otherexamples} is the same as the \textit{turning vehicle model} \cite{Srirametal2020,Kofnovetal2022} in Panel A1 of Fig.~\ref{fig:otherexamples}, 
with the difference that the variance of the basic random variables $\psi$ and $w_2$  is about 3 times larger. 
The effect of this on the joint and marginal distributions of $X,Y$ can be seen in Fig. \ref{fig:turning_vehicle_results}. 
In this case, the reference is truncated normal on $(-18,18) \times (-20,20)$ with mean the sample mean and variance the sample variance of the marginal distributions of $X$ and $Y$, respectively. 
While the support of $X$ is not important, the accuracy of the estimation depends on the variance for $X$. The K-series estimator is a sum of weighted orthonormal polynomials whose Gram-Schmidt orthogonalization with respect to the reference distribution involves the variance of the generated variables in the denominator. Thus, when the variance is very small, the fraction explodes and the estimation becomes numerically unstable. This can be managed by increasing the variance of the reference, as done in Fig. \ref{fig:turning_vehicle_results}. 

\begin{wrapfigure}{l}{0.45\textwidth}
 \centering
 \vspace{-0.65cm}
 \hspace{-0.02cm}
 \includegraphics[width=0.44\textwidth, height=0.31\textwidth]{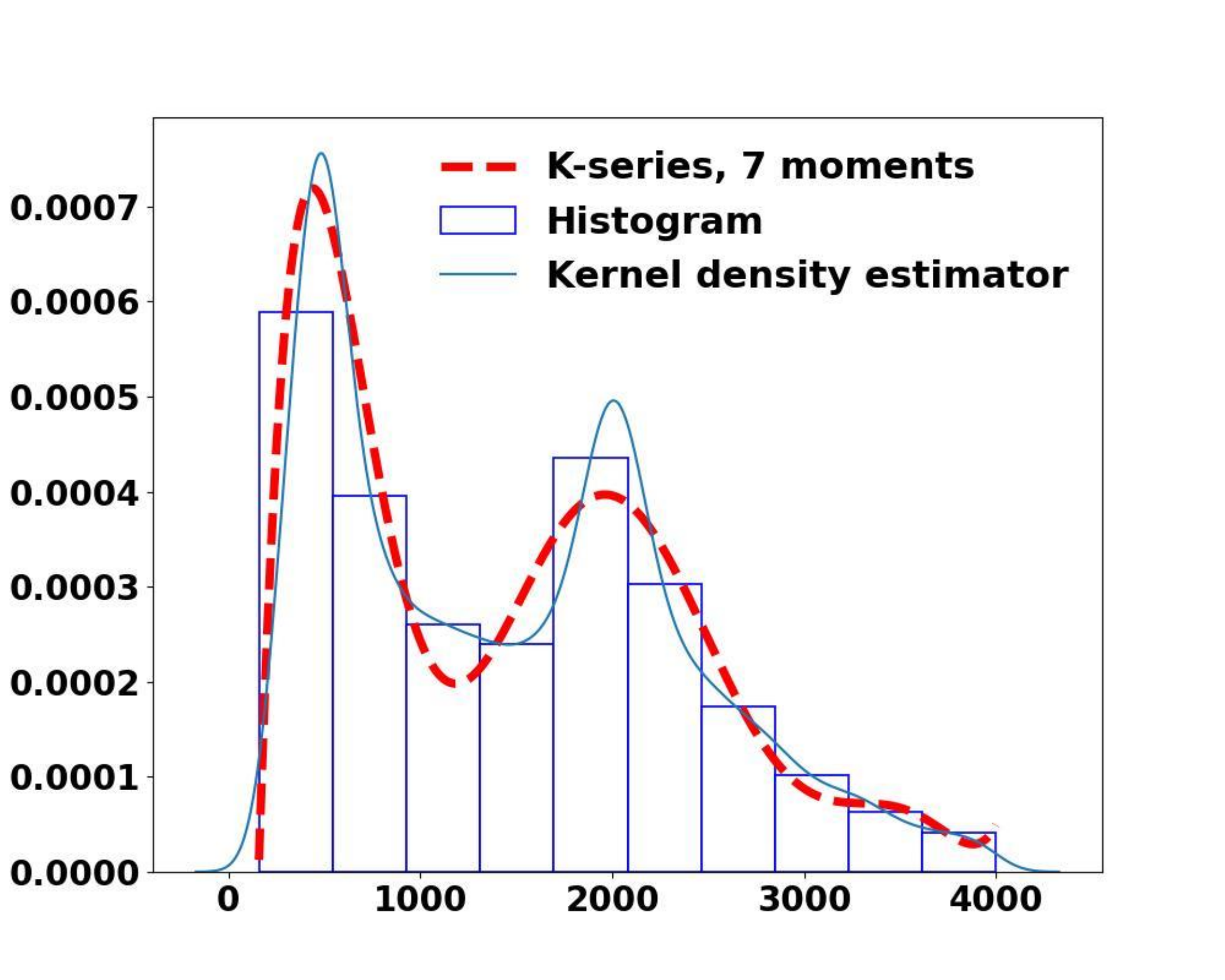}
 \caption{K-series estimate of pdf of household \\electricity} 
  \label{fig:real_data}
\end{wrapfigure}

In Fig. \ref{fig:otherexamples}, Panel B encodes the \textit{Taylor rule}, a model for monetary policy \cite{Taylor1993,Kofnovetal2022},  
D the \textit{rimless wheel walker} \cite{SteinhardtRuss2012}, and E the \textit{Vasicek} \cite{Vasicek1977,Karimietal2022} model. The \textit{Taylor rule} (B), \textit{rimless wheel} (D) and \textit{Vasicek} model (E) generate a single random variable at each iteration. We plot the histograms from sampling the probabilistic loop programs (blue) and the overlaid pdf K-series estimates of these models in  Fig. \ref{fig:taylor_results}. 
The \textit{2D robotic arm} model \cite{Bouissouetal2016} in panel C of Fig. \ref{fig:otherexamples} 
generates a bivariate random variable. 
We plot the marginal  K-series pdf estimates in the right panels, the joint pdf approximation in the bottom left panel, and the comparison of the true (blue bars) with our estimate (red bars) over a  2D parallelogram grid in the top left panel of Fig.~\ref{fig:robotic_arm_results}.  The moments of the true distribution were computed with the method in \cite{Kofnovetal2022} for the Taylor rule, and in \cite{BartocciKS19} for the rimless wheel, Vasicek and 2D robotic arm models.  
We used the following reference pdfs: truncated normal on $(-30,30)$ for the Taylor rule, truncated normal on $(0,10)$ for rimless wheel, normal distribution for the Vasicek model, and truncated normal on $(260,280) \times (525,540)$  for the 2D robotic arm model. 
For all univariate and bivariate models, our K-series estimator exhibits excellent estimation accuracy. 
The pdfs of the random variables in 1D and 2D random walks are estimated in Appendix~\ref{app:examples}.

\begin{figure}[h!]
\centering
\includegraphics[width=15.4cm, height=9cm]
{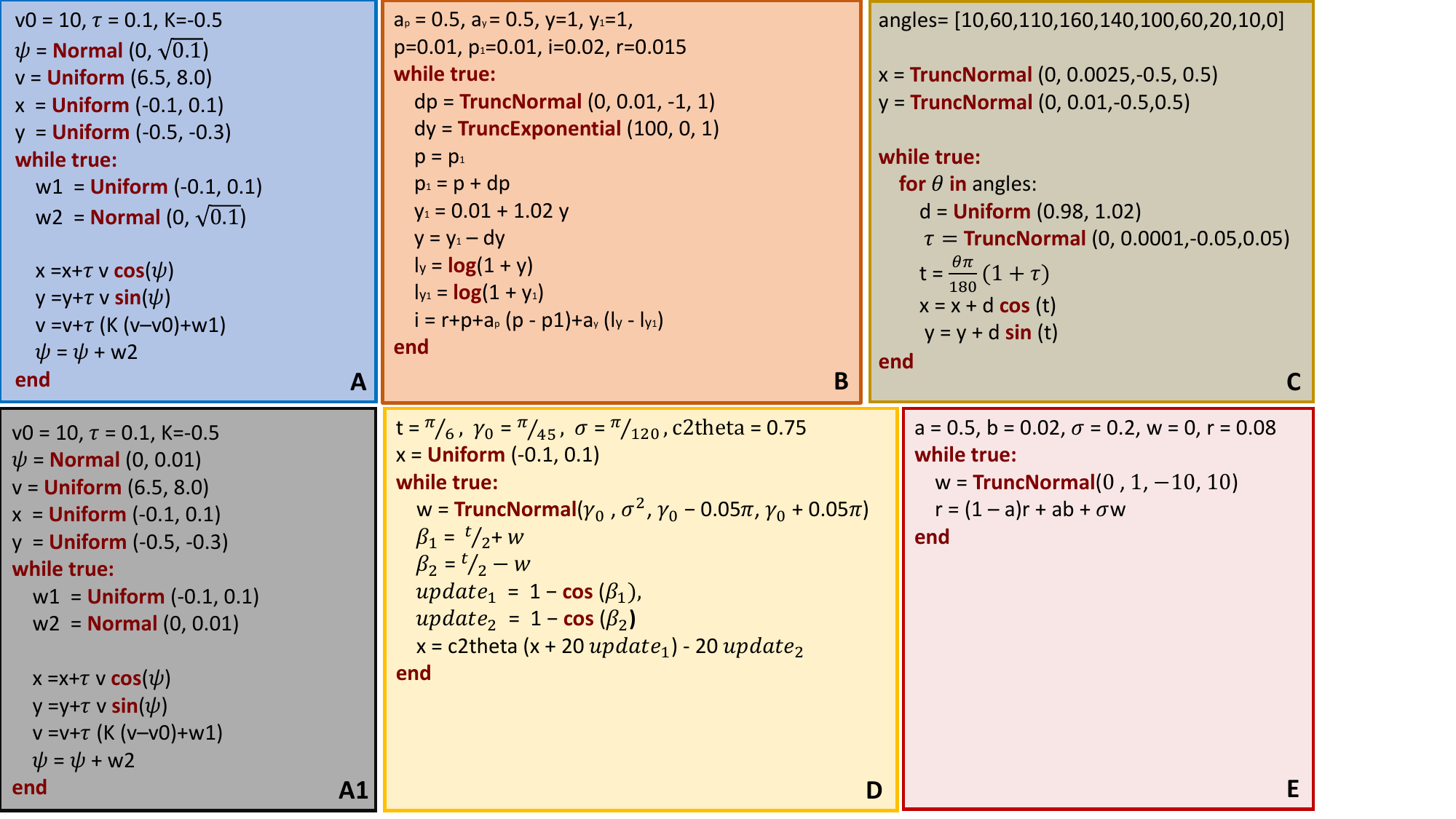}
  \caption{Probabilistic loops: (A) Turning vehicle model~\cite{Srirametal2020,Kofnovetal2022}, (A1) Small variance Turning vehicle model ~\cite{Srirametal2020,Kofnovetal2022}, (B) Taylor rule~\cite{Taylor1993,Kofnovetal2022}, (C) 2D Robotic Arm~\cite{Bouissouetal2016}, (D) Rimless Wheel Walker~\cite{SteinhardtT12}, (E) Vasicek model (truncated version)~\cite{Vasicek1977,Karimietal2022}.} 
  \label{fig:otherexamples}
\end{figure}

\begin{figure}[h!]
\centering
\includegraphics[width=0.90\textwidth, height=0.40\textwidth]{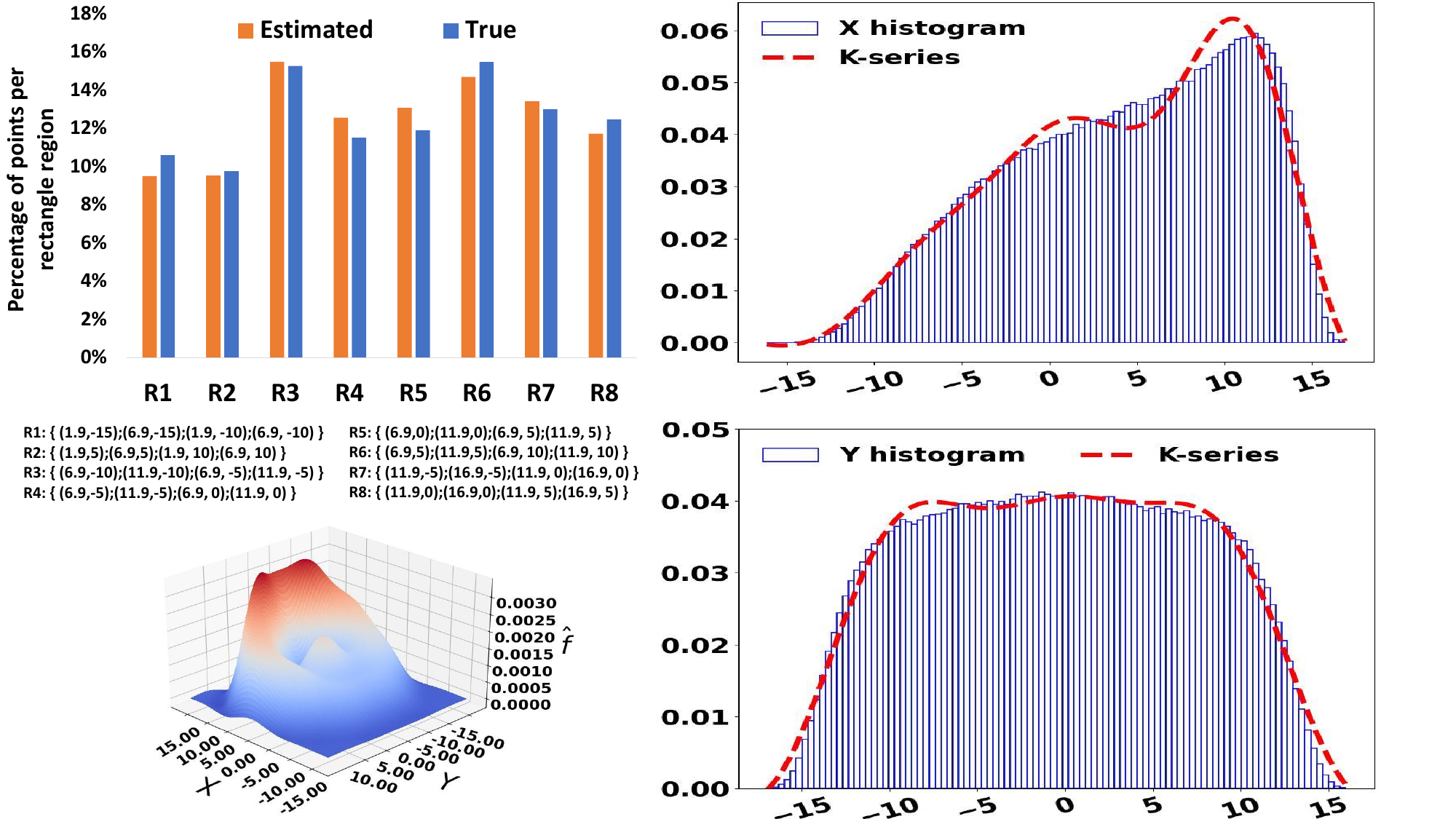}
  \caption{Turning vehicle model Fig.~\ref{fig:otherexamples} (A): K-series estimates of the marginal pdfs of $X$ and $Y$ (right upper and lower panels), the joint (lower left panel) and comparison bar plot (upper left panel) at iteration $t = 20$.} 
  \label{fig:turning_vehicle_results}
\end{figure}

\begin{figure}[h!]
\centering
\includegraphics[width=0.95\textwidth, height=0.50\textwidth]{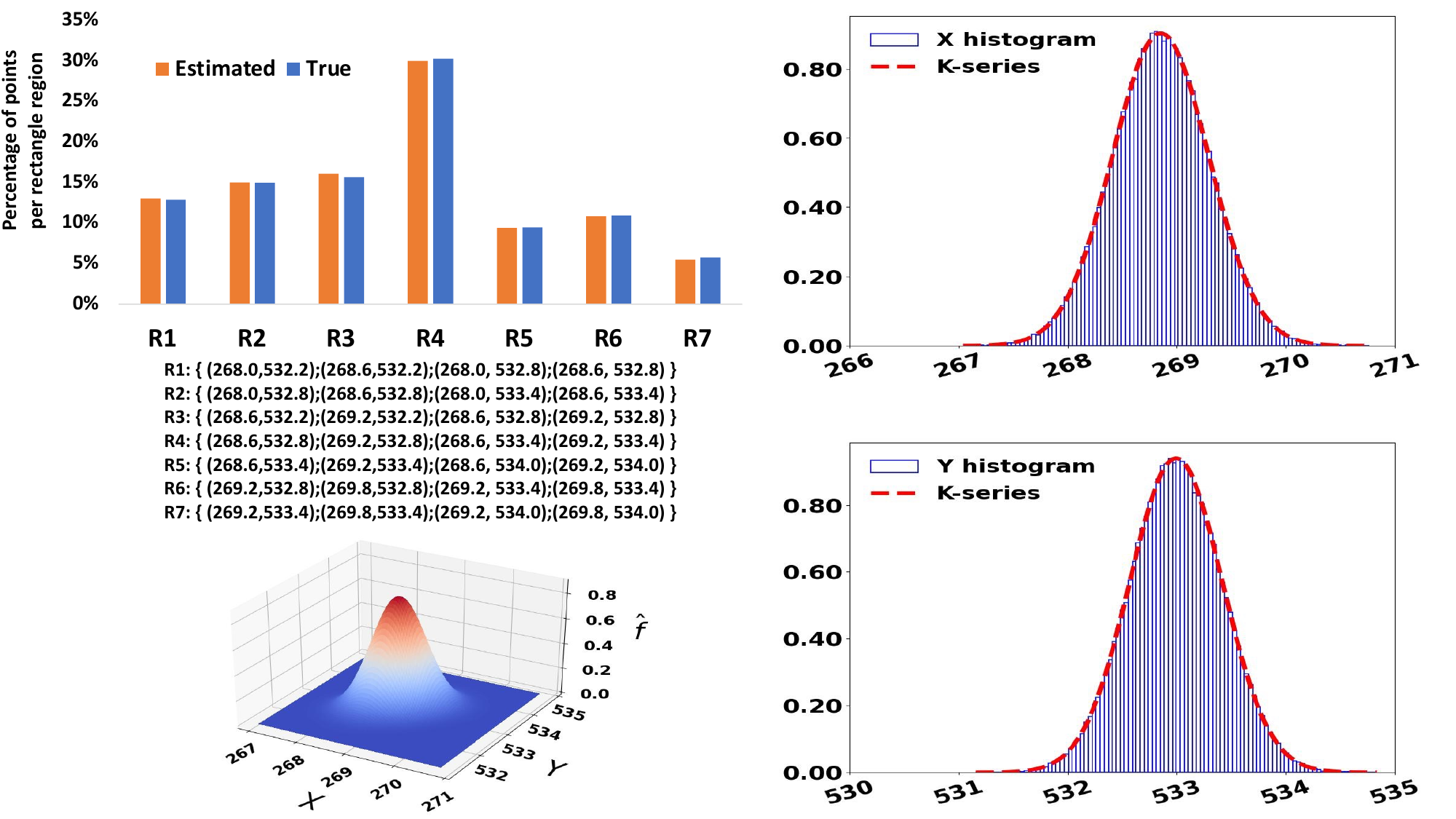}
  \caption{Robotic arm model Fig.~\ref{fig:otherexamples} (C): K-series estimates of the marginal pdfs of $X$ and $Y$ (right upper and lower panels), the joint (lower left panel) and comparison bar plot (upper left panel) at iteration $t = 100$.} 
  \label{fig:robotic_arm_results}
\end{figure}

\begin{figure}[h!]
\centering
\includegraphics[width=0.90\textwidth, height=0.35\textwidth]{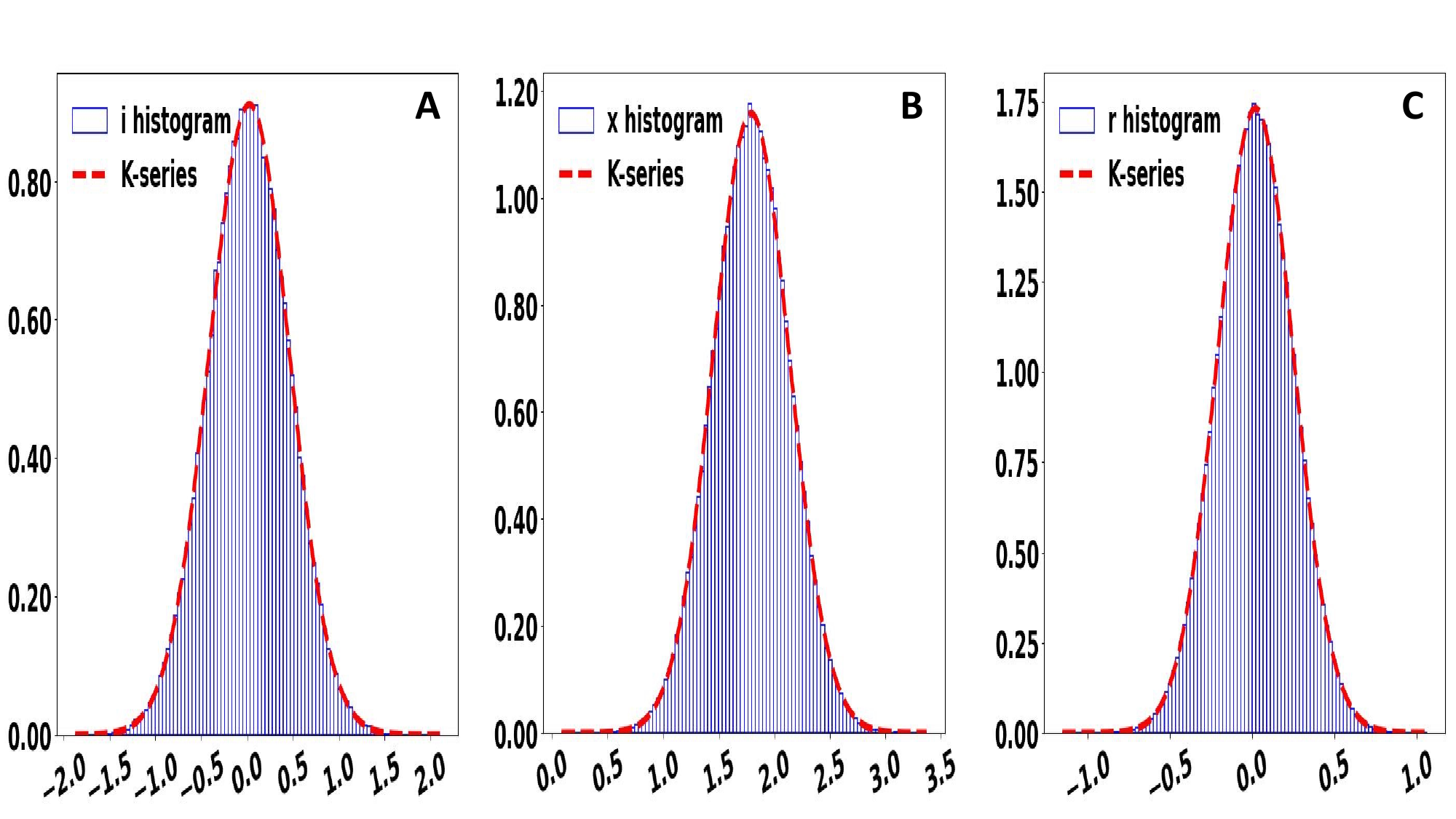}
\caption{K-series estimates of pdf for variable \textbf{A)} $i$ at iterations $t = 20$ in Fig.~\ref{fig:otherexamples} (B): Taylor rule model, \textbf{B)} $x$ at iteration $t = 2000$ in Fig.~\ref{fig:otherexamples} (D): rimless wheel model, \textbf{C)} $r$ at iteration $t = 100$  in Fig.~\ref{fig:otherexamples} (E): Vasicek model.
}
  \label{fig:taylor_results}
\end{figure}

In a real data application, we use K-series to estimate the density of "household electricity use with a ten-minute resolution for a detached house over one year" \cite{Munkhammar_etal_2017}. The data were analyzed by \cite{Munkhammar_etal_2017}, who estimated the unknown pdf by using sample-based estimates for the true unknown moments of the target distribution. Histograms of the data indicate that the pdf is bimodal. In real data examples, the true moments are unknown, so we also use the sample-based moment estimates to compute our K-series estimate, which is drawn in Fig.~\ref{fig:real_data}. We juxtapose our sample-moment-informed estimate with a nonparametric kernel density estimate, a standard data-driven approach for density estimation, for visual comparison.  The K-series estimate fits the data better, especially at both endpoints of the support, than \cite{Munkhammar_etal_2017}'s MM estimate, which can be viewed at the \href{https://journals.plos.org/plosone/article/figure?id=10.1371/journal.pone.0174573.g004}{PLOS One site}.

Regarding the time efficiency of K-series vis-\`a-vis other methods, Gram-Charlier is a special case of K-series for a normal reference distribution (Proposition \ref{prop:GC}), and the Method of Moments \cite{Munkhammar_etal_2017} is a special case of K-series for a uniform reference distribution (Theorem \ref{thm:MM}).  As such, the computational time for their implementation is the same as for K-series. Kernel density estimation (KDE) is not based on moments but requires a large number of samples from the population at hand. That is, the probabilistic program would have to be run many times to compute the kernel density estimator over its realized range of values to achieve comparative accuracy if even possible.  
Theoretically, K-series cannot be beaten in accuracy when true moments are available.

As an example, in Fig.~\ref{fig:KDE_example1} we plot the true pdf of a mixture of an equal-weighted mixture of four beta distributions with parameters (1.3, 5), (5, 1.3), (6, 7) and (7, 6), respectively, in green. The K-series estimator is the red dashed curve and the KDE estimate, based on the Gaussian kernel, is the blue curve. We sampled 10000 observations from the true pdf and plotted their histogram in gray. The time to produce the KDE estimate is $\{0.00644s + 0.0138s\}$ (sample and compute pdf, resp.). The time to compute the K-series estimate is longer, $\{0.73s + 0.662s + 4.58s\}$ (compute moments, construct a system of orthogonal polynomials and compute K-series, respectively). But Fig.~\ref{fig:KDE_example1} reveals that K-series tracks the true pdf much more accurately than the KDE, which is also subject to boundary effects, a well-known problem in nonparametric estimation.

In Fig.~\ref{fig:KDE_example2}, we visually compare the cdf estimates of the two approaches using 50000 samples. Again, the K-series cdf is closer to the true cdf, especially at the endpoints.  Also, the Kolmogorov-Smirnov distance between the K-series and the true cdf is $0.0012478$, much smaller than $0.0226002$, the value of the Kolmogorov-Smirnov distance between the true cdf and the cdf of the KDE estimate.  
As an aside comment, we note that we used a grid of 1000 points to compute all the pdfs for all other examples in the paper. Here, we had to use a much larger number of points to receive a sample of reliable size for KDE.

\begin{figure}
     \centering
     \begin{minipage}{.45\textwidth}
        \centering
\includegraphics[scale=0.28]{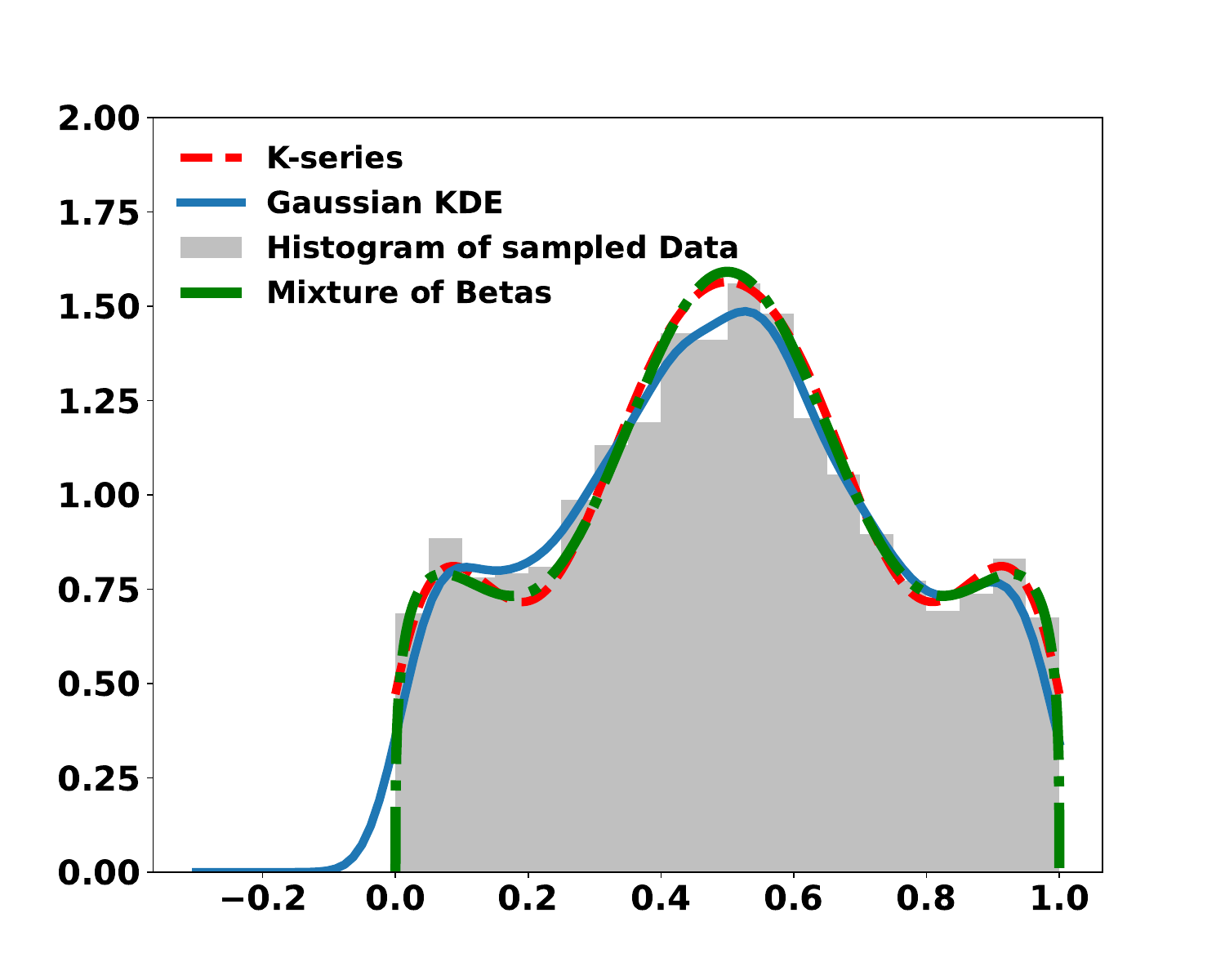}
  \caption{Comparison of K-series and KDE with Gaussian kernel performance in estimating a mixture of four Beta pdfs. We used 9 moments for K-series.} 
  \label{fig:KDE_example1}
\label{fig:(a)}
     \end{minipage}
     \hfill 
     \begin{minipage}{.45\textwidth}
         \centering
         \includegraphics[scale=0.28]{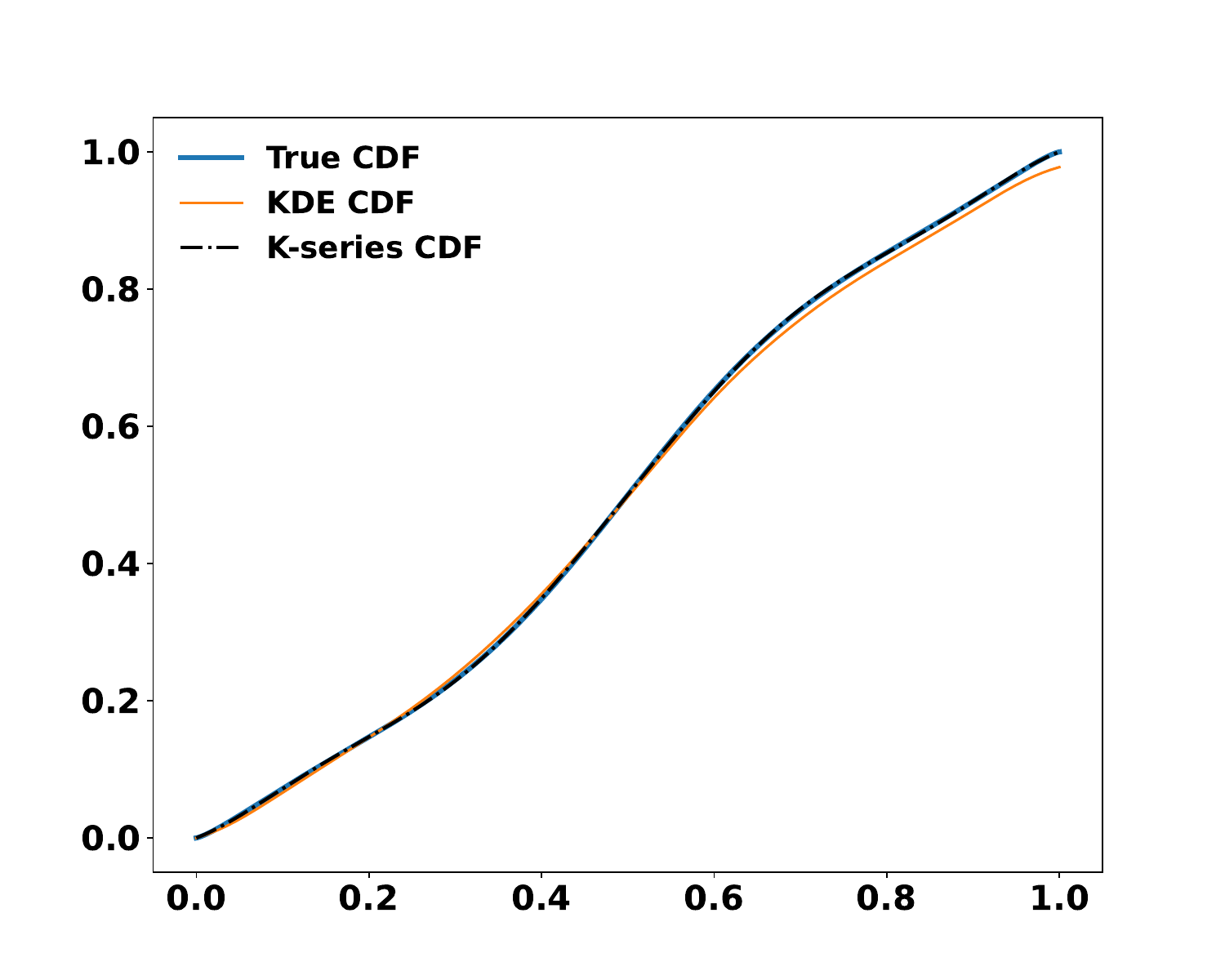}
  \caption{Comparison of K-series and KDE with Gaussian kernel performance in estimating a mixture of four Beta pdfs. We used a grid of 50.000 points.} 
  \label{fig:KDE_example2}
  \label{fig:(b)}
     \end{minipage}
\end{figure}

We explore the robustness of our method to violations of the assumption of continuity of random variables in  Fig. \ref{fig:discrt_vars}, where we estimate the distributions of random variables generated in Prob-solvable loops with discrete random components. 
Panel A in Fig. \ref{fig:discrt_vars} encodes the \textit{Stuttering P model} in  \cite{BartocciKS19}  and panel B the piece-wise deterministic process, or \textit{PDP model},  modeling gene circuits that can be used to estimate the bivariate distribution of protein $x$ and the mRNA levels $y$ in a gene~\cite{Innocentinietal2018}. 

For the \textit{Stuttering P} model, we used a truncated normal distribution on $(0, 50)$ with true mean and variance as the reference pdf. For the \textit{PDP} model, we used the  truncated normal on $(100, 1800)$  for $X$ and uniform on $(8, 80)$ for  $Y$ as reference pdfs to estimate marginal pdfs of $X$ and $Y$ and joint pdf $(X, Y)$. The parameters of the truncated normal distribution are the exact mean and variance of the marginal pdf of the corresponding variable computed using  \cite{BartocciKS19}.  


\begin{figure}[h!]
\centering
\includegraphics[width=0.95\textwidth, height=0.46\textwidth]{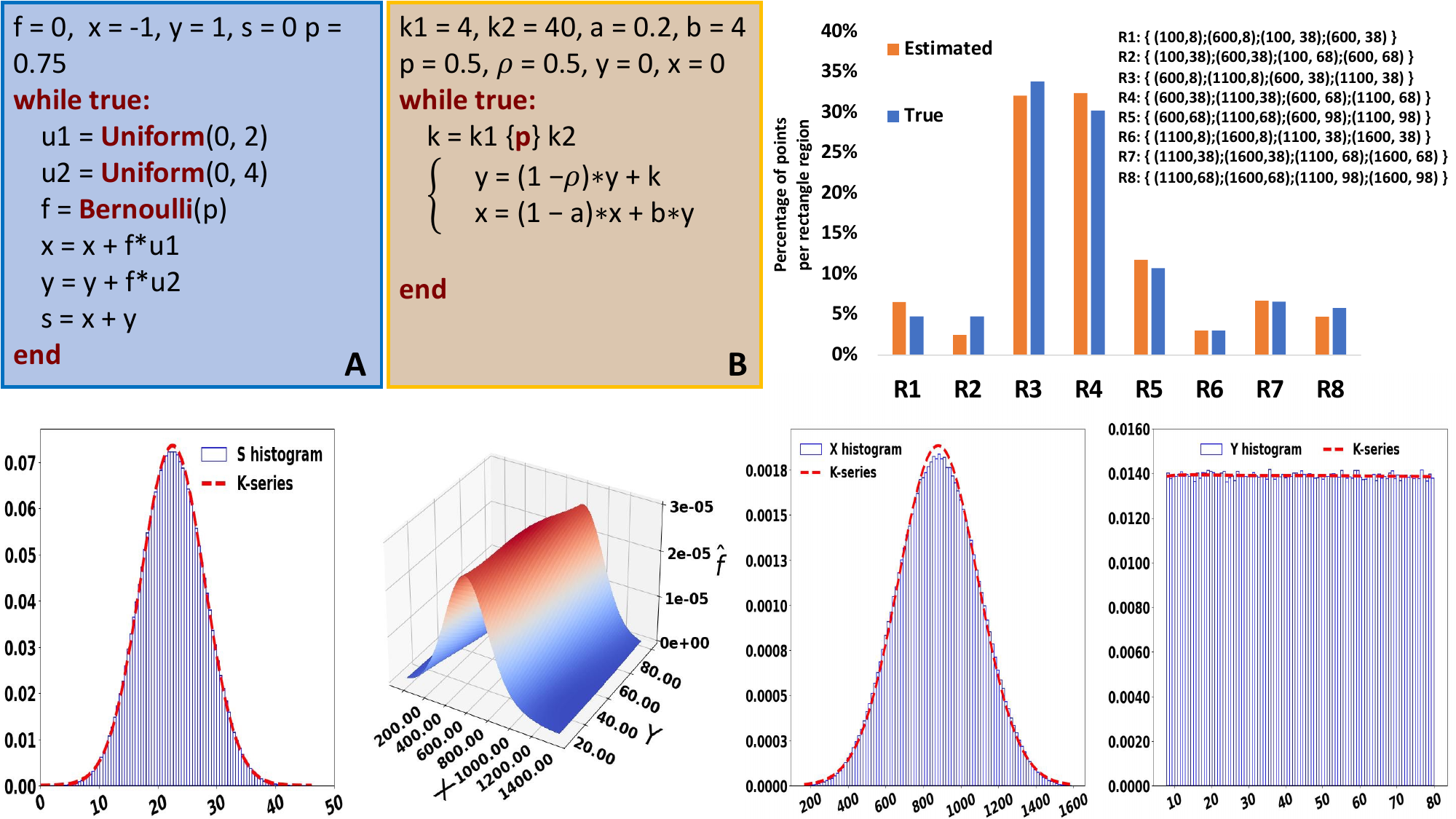}
  \caption{K-series estimates of the pdf of variable $S$ in Stuttering P model~\cite{BartocciKS19} (A) at iteration $t = 10$, marginal pdfs for variables $X$, $Y$ and joint distribution for variables ($X$, $Y$) in PDP model~\cite{Innocentinietal2018} (B) at the iterations $t = 100$.}
  \label{fig:discrt_vars}
\end{figure}

\section{Conclusion}\label{conclusion}

 K-series is a general distribution recovery method that approximates the density function as a series in terms of a finite number of its moments. It targets the unknown density via the choice of the reference probability density function and includes existing series-based density estimation, such as Gram-Charlier, as special cases. The K-series estimator converges to the true pdf in $L_{1}$, satisfies the moment matching principle, and is fast to compute. The method is complemented by an estimation algorithm of the minimal support of the target distribution. 

K-series requires the target pdf  have bounded support. This is not a serious limitation since, in practice, as in nature, observable values occur with effectively nonzero probability within an interval, and values outside a certain range are never realized. 
The choice of the reference is based on subject-matter knowledge, if available. We study the effect of the reference pdf on estimation in Appendix \ref{sec:refdstn_app}. The uniform reference distribution results in accurate estimates provided its support is close to the support of the true pdf. Both truncated and regular normal reference pdfs lead to accurate K-series estimates the closer the target pdf is to a normal. Overall, the truncated normal distribution typically results in better estimation. 

Characterizing the distribution of random quantities generated in probabilistic programming languages (PPLs)~\cite{barthe2020} 
is essential: Distributions are the building blocks of inference. PPLs codify probabilistic models and are used, for example, in computer security/privacy protocols~\cite{Dwork06}, distributed consensus algorithms~\cite{Herman90}, randomized algorithms~\cite{MotwaniRaghavan1995}, generative machine learning models~\cite{Ghahramani15} and scenario-based testing~\cite{FremontDGYSS19} of cyber-physical systems operating in uncertain environments. 

In future work, we will extend K-series to recover probability mass functions for discrete random variables.  We also aim to compute error bounds and explore the Fourier series representation of functions in conjunction with \cite{Jasouretal2021},  which obtains exact moments for sine and cosine assignments, to reduce the estimation error for fixed loop iterations. We will also develop a tool to automate the entire procedure in Algorithm \ref{alg:Univar_K_series}.

\vspace{0.1in}

\begin{table}[t!]
\centering
\begin{threeparttable}
\ra{0.5}
\begin{tabular}{@{}lcccrr@{}}\toprule
Model &  $Var$& \phantom{abc} & $|M|$ & \showclock{0}{0} Orthogonalization  &  \showclock{0}{0} K-series \\
 &   &  \phantom{abc} &   &  Runtime (in seconds)&   Runtime (in seconds) 
\\ \midrule \\
Differential-Drive Robot\\
  & $X$ 
  && 6
 & 
       0.67484    
  & 
       0.15971  \\
  & 
  $Y$ 
  && 6 
  & 0.57628 
  &  0.15921 
 \\
  &
  $(X,Y)$
  && 48 
  & 1.23404
  & 0.18318
\\
  \midrule \\
  PDP\\

  & $X$ && 2 & 0.03708 & 0.13438  \\
  & $Y$ && 6 & 0.24701 & 0.07646  \\ 
  & $(X, Y)$ && 8 & 0.03880 & 0.27987 \\
  \midrule \\
Turning vehicle\\

& $X$ && 8 & 0.46561 & 0.17895
\\ 
& $Y$ && 8 & 0.48054 & 0.17901
\\ 
& $(X, Y)$ && 80 & 0.84230 & 0.99251\\
  \midrule \\
Turning vehicle \\
 (small variance) \\

& $X$ && 8 & 0.68676  & 0.17829 
\\  
& $Y$ && 8 & 0.62172 & 0.17739
\\ 
& $(X, Y)$ && 80 & 1.19591 &  0.98794 \\
  \midrule \\
Taylor rule model\\
  & i   && 6 & 2.66375  & 0.16011\\
  \midrule \\
2D Robotic Arm \\
  & $X$ && 2 & 0.15185 & 0.13439 
  \\ 
  & $Y$ && 2 & 0.13663 & 0.13528
  \\ 
  & $(X, Y)$ && 8 & 0.28913 & 0.36801\\
  \midrule \\
Rimless Wheel Walker\\
  & $X$ && 2 & 0.10627 & 0.10915 \\
  \hline\\
    Vasicek model\\
  & $r$ && 2 & 0.16654 & 0.09937 \\
  \hline\\
    1D Random Walk\tnote{1}\\
  & $X$ && 2 & 0.09753 & 0.15714  \\
  \midrule \\
  2D Random Walk\tnote{1}\\

   & $X$ && 2 & 0.13076 & 0.15916 \\ 
  & $Y$ && 2 &  0.13033 & 0.15678 
  \\
  & $(X, Y)$ && 8 & 0.25956 & 0.40327  \\
  \midrule \\
    Stuttering P\\
  & $S$ && 2 & 0.06936 & 0.15530 \\
 \bottomrule
\end{tabular}
{\small
\begin{tablenotes}
     \item[1] See Appendix~\ref{app:examples}
   \end{tablenotes}}
\caption{Runtimes of orthogonalization procedure and K-series estimation for the benchmarks in Sec. \ref{experiments}. \\ $|M|$ denotes number of used moments and $Var$ the variable(s) whose density is estimated.}  
\label{tab:Experiments_time}
\end{threeparttable}
\end{table}


\begin{acks}
We would like to thank the associate editor and three reviewers for their helpful feedback and suggestions, which improved our work.

The research in this paper has been funded by the Vienna Science and Technology Fund (WWTF) [10.47379/ICT19018], the TU Wien Doctoral College (SecInt) and the FWF research
project P 30690-N35. 
\end{acks}

\newpage
\bibliographystyle{ACM-Reference-Format}
\bibliography{refs}

\appendix

\section{Proofs of Theorems 1 and 2}\label{app:proofs}


\textit{Proof of Theorem~\ref{thm:MM}: }
Suppose $f$ is supported on $(a,b)$. Then, $\phi(x)=\phi=1/(b-a)$. Let $\{l_{j}(x) = \sum_{i=0}^{j}\lambda_{ji} x^{i}\}_{j = 0}^{n}$ be the set of the first $n$ shifted scaled Legendre polynomials that are orthonormal on $\left[a, b\right]$, so that $\Lambdabf = (\lambda_{ji})_{j, i = 0}^{n}$ is  a lower triangular matrix. 

Every polynomial of degree $n$ can be expressed as a weighted sum of polynomials of degree up to $n$. In such a case, we can represent the MM estimator $\hat{f}_{\mm} = \sum_{i=0}^{n}p_{i}x^{i}$ as a weighted sum of Legendre polynomials $1, l_{1}(x), \ldots, l_{n}(x)$ with weight coefficients $\phi\cdot a_{j}$, $j = 0,\ldots, n$. Then,
\begin{equation}\notag
    \hat{f}_{\mm}(x) = 
    \sum\limits_{i=0}^{n}p_{i}x^{i} = \phi \sum\limits_{j=0}^{n}a_{j}l_{j}(x) = \phi\sum\limits_{j=0}^{n}a_{j}\sum\limits_{i=0}^{j}\lambda_{ji} x^{i},
\end{equation}
or, equivalently,
   $ \mathbf{p}_n^T\mathbf{x}_n= \phi  \cdot \mathbf{a}_n^{T} \mathbf{l}_n = \phi \cdot \mathbf{a}_n^{T} \Lambdabf \mathbf{x}_n,$
where $\mathbf{l}_n=(1,l_{1}(x),\ldots, l_{n}(x))^{T}$ and $\mathbf{a}_n=(1,a_{1},\ldots, a_{n})^{T}$.
Thus, 
  $ \mathbf{p}_n = \phi \cdot \Lambdabf^{T}\mathbf{a}_n.$
Now,  \begin{equation}\label{MM_system}
    \mathbf{m}_n = \M_{ab} \cdot \mathbf{p}_n,
\end{equation}
implies 
   $\mathbf{m}_n = \M_{ab} \cdot \mathbf{p}_n = \phi \cdot \M_{ab} \cdot \Lambdabf^{T}\mathbf{a}_n, $
where  $\M_{ab}$ is the matrix with elements the integrals of powers of $x$ over the interval $\left[a, b\right]$,
\begin{equation}
    \M_{ab} = \begin{pmatrix}
                    b-a & \frac{b^{2}-a^{2}}{2} & \ldots & \frac{b^{n+1}-a^{n+1}}{n+1}\\
                    \frac{b^{2}-a^{2}}{2} & \frac{b^{3}-a^{3}}{3} & \ldots & \frac{b^{n+2}-a^{n+2}}{n+2}\\
                    \vdots & & \ddots & \\
                    \frac{b^{n+1}-a^{n+1}}{n+1} & \frac{b^{n+2}-a^{n+2}}{n+2} & \ldots & \frac{b^{2n+1}-a^{2n+1}}{2n+1}
                    \end{pmatrix}.
\end{equation}
In the matrix form of the K-series estimator (7, main paper), $\Lambdabf \cdot \mathbf{m}_n = \mathbf{a}_n$. It suffices to show that
\begin{equation}\label{theorem_key_statement}
    \phi \cdot \M_{ab} \cdot \Lambdabf^{T} = \Lambdabf^{-1}.
\end{equation}
The matrix $\phi \cdot \M_{ab}$ contains the moments of the uniform distribution. Therefore, $\Lambdabf$ is a matrix with entries the coefficients of orthonormal polynomials and the left lower triangular factor of the Cholesky decomposition of the moment matrix $\phi \cdot \M_{ab}$. Thus,  \eqref{theorem_key_statement} follows from  \cite[Prop. 2(i)]{SZABLOWSKI_2015}. $\qed$

\medskip

\textit{Proof of Theorem~\ref{thm:L1_convergence}: }
 \begin{align}
       \left\Vert \frac{\widetilde{f}(x)}{\phi(x)} - \sum\limits_{i=0}^{n}\alpha_{i}h_{i}(x) \right\Vert_{\phi}^{2} &= \int\limits_{\Theta}\left[\frac{\widetilde{f}(x)}{\phi(x)} - \sum\limits_{i=0}^{n}\alpha_{i}h_{i}(x) \right]^{2}\phi(x)dx \notag \\ 
        &=\int\limits_{\Theta}\left[\widetilde{f}(x) - \phi(x)\sum\limits_{i=0}^{n}\alpha_{i}h_{i}(x) \right]^{2}\frac{1}{\phi(x)}dx \notag\\ 
        &
        =\int\limits_{\Theta}\phi(x)dx
        \int\limits_{\Theta}\left[\widetilde{f}(x) - \phi(x)\sum\limits_{i=0}^{n}\alpha_{i}h_{i}(x) \right]^{2}\frac{1}{\phi(x)}dx  \notag \\
        &=||\sqrt{\phi(x)}||_{1}^2 \cdot \left\Vert \left(\widetilde{f}(x) - \phi(x)\sum\limits_{i=0}^{n}\alpha_{i}h_{i}(x) \right)\frac{1}{\sqrt{\phi(x)}} \right\Vert_1^2 \notag \\
        & \geq \left(\int\limits_{\Theta}\Bigl\lvert \widetilde{f}(x) - \phi(x)\sum\limits_{i=0}^{n}\alpha_{i}h_{i}(x)\Bigl\rvert dx\right)^2, \label{thm2}
    \end{align}
   where the last inequality is due to Cauchy-Schwarz  
   inequality. The function $\widetilde{f}(x)$ in \eqref{def:ftilde} is a  density. In the case where $\Theta$ is bounded, $\phi(x)$ is uniquely identifiable by its moments. When $\Theta$ is unbounded, $\phi(x)$ is exponentially integrable by the assumption (a) of the theorem. Hence,  for all $n \ge 1$, $\phi(x) \Bigl\lvert\sum_{i=0}^{n}\alpha_{i}h_{i}(x)\Bigl\rvert$ is integrable. 
   
Since the truncated series $\sum\limits_{i=0}^{n}\alpha_{i}h_{i}(x)$ converges to $g(x) = \widetilde{f}(x) / \phi(x)$ in $L_{2}(\Theta, \phi)$, as $n \to \infty$, from \eqref{thm2} we obtain that the K-series estimator \eqref{estimator_3} converges to the extended true pdf $\widetilde{f}(x)$ in $L_{1}(\Theta, 1)$.

Next, suppose  $\phi(x)$ is the pdf of the uniform distribution, so that $\Theta$ is bounded, and $\phi(x) = c$. Then,
\begin{align*}
\left\Vert \frac{\widetilde{f}}{\phi} - \sum\limits_{i=0}^{n}\alpha_{i}h_{i}(x) \right\Vert_{\phi}^{2} &= \int\limits_{\Theta}\left[\frac{\widetilde{f}(x)}{\phi(x)} - \sum\limits_{i=0}^{n}\alpha_{i}h_{i}(x) \right]^{2}\phi(x)dx  =\frac{1}{c}\int\limits_{\Theta}\left[\widetilde{f}(x) - c\sum\limits_{i=0}^{n}\alpha_{i}h_{i}(x) \right]^{2}dx
    \end{align*}
Hence, $\widetilde{f}(x) = c \cdot g(x)$ is in $L_{2}(\Theta, 1)$, and $\int\limits_{\Theta}c^{2} \left[\sum_{i=0}^{n}\alpha_{i}h_{i}(x)\right]^{2} dx$ is an integral of a polynomial over a bounded interval, so that the K-series estimator \eqref{estimator_3} converges to the true pdf $\widetilde{f}(x)$ in $L_{2}(\Theta, 1)$. $\qed$
   


\section{1D and 2D Random Walk}\label{app:examples}

Panel A in Fig. \ref{fig:random_walk} describes the \textit{1D Random Walk}, and panel B  the \textit{2D Random Walk} \citep{Kura19}. For the former, we used a truncated normal distribution on $(-98, 102)$ as reference.  For the 2D Random Walk, we used two independent truncated normal distributions on $(-100, 100) \times (-100, 100)$ with true means and variances of corresponding marginal pdfs obtained with the algorithm in \cite{BartocciKS19}.
The K-series estimator exhibits excellent performance for both 1D and 2D random walks, as can be seen in Fig. \ref{fig:random_walk}.

\begin{figure}[h!]
\centering
  \includegraphics[width=1.0\textwidth, height=0.6\textwidth]{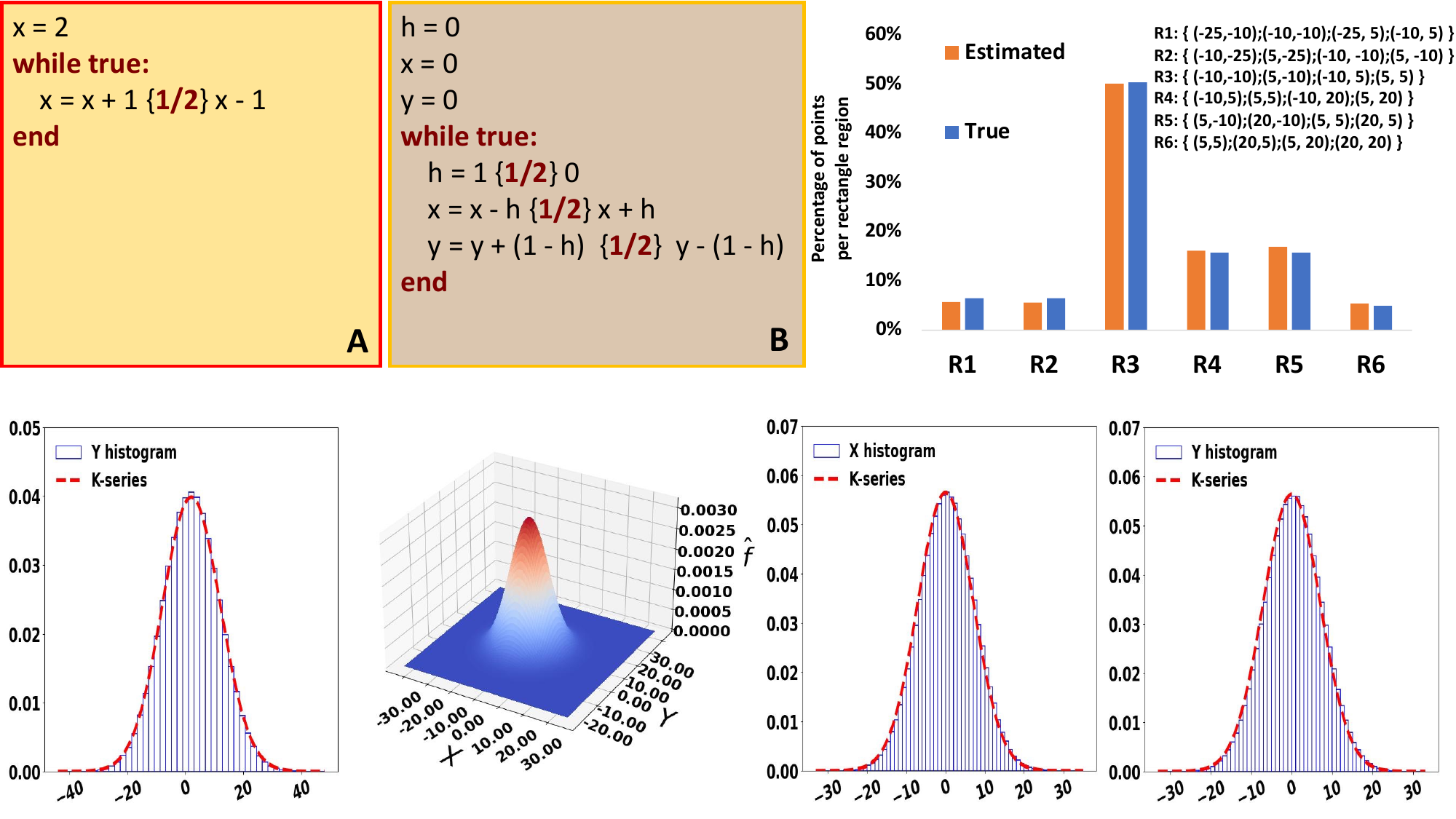}
  \caption{K-series estimates of the pdf of $X$ in 1D Random Walk (A)~\cite{Kura19} at iteration $t = 100$, marginal pdfs for variables $X$, $Y$ and joint distribution of ($X$, $Y$) in 2D Random Walk (B)~\cite{Kura19} at the iterations $t = 100$.}
  \label{fig:random_walk}
\end{figure}

\section{Truncated Bivariate Normal}\label{app:bivar_normal}

Suppose we want to recover the joint pdf of two random variables $X$ and $Y$  on a set $\Omega = \left[-2, 2\right] \times \left[-4, 5\right]$ using their first eight  cross-moments,
\begin{align}\label{ex3:moms}
   \left(m_{x^{j}y^{i}}=
   \E(X^jY^i)\right)_{i, j = 0,\ldots, 2}
    &=\begin{pmatrix} \notag
        m_{x^{0}y^{0}} & m_{x^{1}y^{0}} & m_{x^{2}y^{0}} \\
        m_{x^{0}y^{1}} & m_{x^{1}y^{1}} & m_{x^{2}y^{1}}  \\
        m_{x^{0}y^{2}} & m_{x^{1}y^{2}} & m_{x^{2}y^{2}} 
    \end{pmatrix}\\ \notag
    \\
    &=\begin{pmatrix}
        1.00000 & 0.71721 & 1.13054 \\
        1.99556 & 1.43124 & 2.25606 \\
        4.96894 & 3.56379 & 5.61757
    \end{pmatrix}.
\end{align}
We choose the reference marginal pdfs be both truncated normal $\phi_{x}(z_{x})$ and $\phi_{y}(z_{y})$ with $Z_{x} \sim Trunc$ $\mathcal{N}(m_{x}, m_{x^{2}} - m^{2}_{x}, \left[-2, 2\right]) = Trunc$ $ \mathcal{N}(0.71721, 0.61614, $ $\left[-2, 2\right])$, and 
$Z_{y} \sim Trunc$ $\mathcal{N}(m_{y}, m_{y^{2}} - m^{2}_{y}, \left[-4, 5\right])=Trunc$ $\mathcal{N}(1.99556, 0.98667, $ $\left[-4, 5\right])$, respectively.

We construct sets of univariate orthonormal polynomials using, for example, the Gram-Schmidt orthogonalization procedure, and obtain
\[
\begin{matrix}
    h^{x}_{0}(z_{x}) = 1, &  h^{y}_{0}(z_{y}) = 1,\\
    h^{x}_{1}(z_{x}) = 1.42119z_{x} - 0.89705, & h^{y}_{1}(z_{y}) = 1.01307z_{y} - 2.01751,\\
    h^{x}_{2}(z_{x}) = 1.58907z^{2}_{x} - 1.63885z_{x} - 0.38542, \,  & h^{y}_{2}(z_{y}) = 0.74083z^{2}_{y} - 2.92557z_{y} + 2.16624
\end{matrix}
\]
Hence, starting from a reference joint pdf that is the product of the pdfs of the independent random variables $Z_x$ and $Z_y$, $\widetilde{\phi}(z_{x}, z_{y}) = \phi_{x}(z_{x})\phi_{y}(z_{y})$, the multivariate orthogonal polynomials are simply all the pairwise products of univariate polynomials:
\begin{align*}
\tilde{h}_{0,0}(z_{x},z_{y}) &= 1\\
\tilde{h}_{0,1}(z_{x},z_{y}) &= 1.01307z_{y} - 2.01751\\
\tilde{h}_{0,2}(z_{x},z_{y}) &= 0.74083z^{2}_{y} - 2.92557z_{y} + 2.16624\\
\tilde{h}_{1,0}(z_{x},z_{y}) &= 1.42119z_{x} - 0.89705\\
\tilde{h}_{1,1}(z_{x},z_{y}) &= 1.43976z_{x}z_{y} - 2.86727z_{x} - 0.90877z_{y} + 1.80981\\
\tilde{h}_{1,2}(z_{x},z_{y}) &= 1.05286z_{x}z^{2}_{y} - 4.15779z_{x}z_{y} + 3.07864z_{x} - 0.66456z^{2}_{y} + 2.62438z_{y} \\
& \quad- 1.94323\\
\tilde{h}_{2,0}(z_{x},z_{y}) &= 1.58907z^{2}_{x} - 1.63885z_{x} - 0.38542\\
\tilde{h}_{2,1}(z_{x},z_{y}) &= 1.60984z^{2}_{x}z_{y} - 3.20596z^{2}_{x} - 1.66027z_{x}z_{y} + 3.30634z_{x} - 0.39046z_{y} \\
&\quad+ 0.77759\\
\tilde{h}_{2,2}(z_{x},z_{y}) &= 1.17723z^{2}_{x}z^{2}_{y} - 4.64894z^{2}_{x}z_{y} + 3.44231z^{2}_{x} - 1.21411z_{x}z^{2}_{y} \\
&\quad+ 4.79457z_{x}z_{y} -3.55014z_{x} - 0.28553z^{2}_{y} + 1.12757z_{y} - 0.83491
\end{align*}
In order to compute the coefficients $\alpha(i_1,i_2)$ of the PCE along the reference pdf $\widetilde{\phi}(z_{x}, z_{y})$ for each polynomial $\Tilde{h}_{i_1,i_2}(z_{x}, z_{y})$, 
we need to substitute every monomial factor $z_{x}^{j}z_{y}^{i}$ by the corresponding moment $m_{x^{j}y^{i}}$ from \eqref{ex3:moms} in each polynomial. For example, the coefficient of $\tilde{h}_{1,1}(z_x,z_y)$ is $1.43976m_{xy} - 2.86727m_{x} - 0.90877m_{y} + 1.80981 = 1.43976\cdot 1.43124 - 2.86727 \cdot 0.71721 - 0.90877 \cdot 1.99556 + 1.80981 = 0.00051$.
The resulting estimator is 
\begin{align*}
\hat{f}(z_{x},z_{y}) &= \phi_{1}(z_{x})\phi_{2}(z_{y})\sum_{i_{1}, i_{2}=(0,0)}^{(2,2)}\alpha(i_1, i_2)\Tilde{h}_{i_{1}, i_{2}}(z_{x},z_{y})\\
&= \phi_{1}(z_{x})\phi_{2}(z_{y}) \times \left[ 1 +0.00415  \cdot \Tilde{h}_{0,1}(z_{x},z_{y})+  0.00924 \cdot \Tilde{h}_{0,2}(z_{x},z_{y}) \right. \\ 
&\quad + 0.12224 \cdot \Tilde{h}_{1,0}(z_{x},z_{y}) 
+ 0.00051 \cdot \Tilde{h}_{1,1}(z_{x},z_{y}) +  0.00113 \cdot \Tilde{h}_{1,2}(z_{x},z_{y})  \\
&\quad \left. +0.23568 \cdot \Tilde{h}_{2,0}(z_{x},z_{y}) +  0.00098 \cdot \Tilde{h}_{2,1}(z_{x},z_{y}) +  0.00218 \cdot \Tilde{h}_{2,2}(z_{x},z_{y}) \right]
\end{align*}
The  estimated bivariate density is plotted in Fig. \ref{fig:multivar_example} (a). In panel (b), we plot the frequencies of $X$ and $Y$ under the true $f(x, y)$ (blue bars)  and its K-series (red bars) pdf estimate over a 2D grid comprising of eight parallelograms, where we can see their close agreement.  

\begin{figure}[!h]
\centering
  \includegraphics[width=0.90\textwidth, height=0.45\textwidth]
  {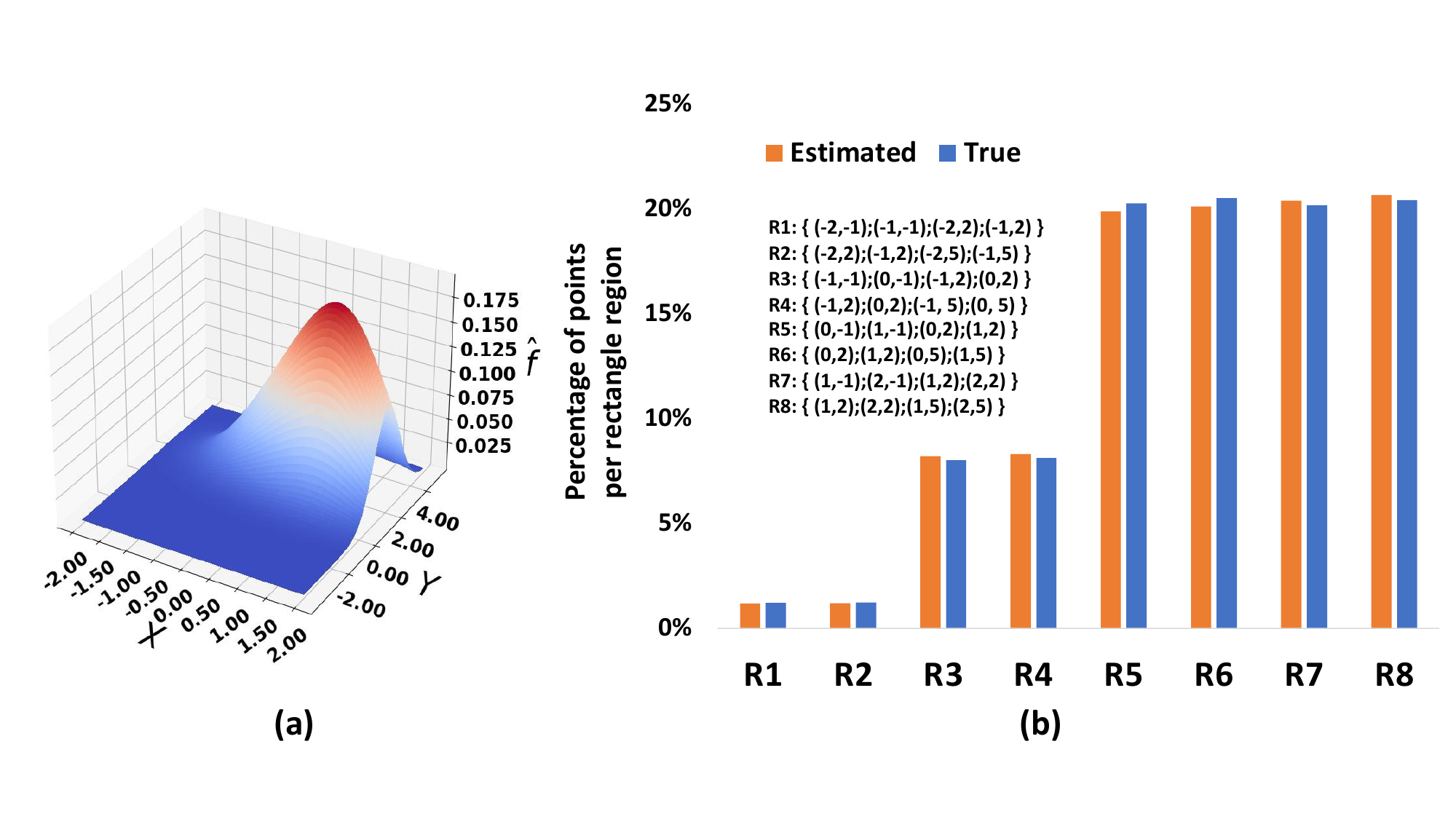}
  \caption{K-series estimates  of the truncated bivariate normal distribution $f(x, y) = Trunc \hspace{0.1cm} Normal((1, 2), (1, 1), -0.3, \left[-2, 2\right], \left[-4, 5\right])$.}
    \label{fig:multivar_example}
\end{figure}

\vspace{0.5cm}
\begin{threeparttable}
\centering
\ra{0.5}
\begin{tabular}{@{}lcccrr@{}}\toprule
Example &  $Var$& \phantom{abc} & $|M|$ & \showclock{0}{0} Orthogonalization  &  \showclock{0}{0} K-series \\
 &   &  \phantom{abc} &   &  Runtime (in seconds)&   Runtime (in seconds) 

\\
  \midrule \\
Truncated exponential
  & $X$ && 2 & 0.00246 & 0.04379 \\
  \hline\\
   The Irwin-Hall Distribution
  & $X$ && 6 & 0.05029 & 0.06922 \\
  \hline\\
     Probabilistic loop with \\
non-polynomial assignment  & $r$ && 4 & 0.03568 & 0.04936  
\\ \midrule \\
  Truncated Bivariate Normal
  &
  $(X,Y)$
  && 8 
  & 0.07077
  & 0.03613
  \\
 \bottomrule
\end{tabular}
\caption{Runtimes of orthogonalization procedure and K-series estimation for the illustrative benchmarks. \\ $|M|$ denotes number of used moments and $Var$ the variable(s) whose density is estimated.}  
\label{tab:Illustrations_time}
\end{threeparttable}

\newpage

\section{Kolmogorov-Smirnov and Energy Tests for Equality of Distributions}\label{sec:KS}

The Kolmogorov-Smirnov (K-S) test~\cite{Hollanderetal2013}  compares two  cumulative distribution functions (cdfs). 
We compute the cdf $\hat{F}_{\ks}$ 
of the estimated pdf $\hat{f}_{\ks}$. 
We also compute the (empirical) cdf~$F_{Sample}$ of the data resulting from sampling the probabilistic program variables.
The 2-sample Kolmogorov-Smirnov (K-S) test statistic for testing equality of the population (true) cdfs is 
\begin{equation}\label{eq:KSTest}
     D_{\ks}= \max_x\left(|F_{\ks}(x)-F_{Sample}(x)|\right),
\end{equation}
where $N_1$ and $N_2$ are the sample sizes from the K-series and empirical cdf, respectively.
We reject the equality of the two distributions if 
\[ D_{\ks} > c(\alpha) \sqrt{\frac{N_1 +N_2}{N_1\cdot N_2}}=\sqrt{-\frac{1}{2}\ln\frac{\alpha}{2}} \;  \sqrt{\frac{N_1 +N_2}{N_1\cdot N_2}}
\]
at significance level $\alpha$. 

The two-sample E-statistic for testing for equality of multivariate distributions proposed by \cite{SzekelyRizzo2004} is the \textit{energy} distance $e(S_1,S_2)$, which is defined  by
$e(S_1,S_2)= N_1 N_2\left(2 D_{12} -D_{11}-D_{22}\right)/(N_1+N_2)$,
for two samples $S_1, S_2$ of respective sizes $N_1$, $N_2$, where
$D_{ij} =\sum_{p=1}^{N_i}\sum_{q=1}^{N_j} || \X_{ip} -\X_{jq}||/(N_{i}N_{j}),$ $i,j=1,2,$ $||\cdot||$ denotes the Euclidean norm, and $\X_{1p}$ denotes the $p$-th and $\X_{2q}$ the $q$-th (vector-valued) observations in the first and second sample, respectively. The test is implemented by nonparametric bootstrap, an approximate permutation test in the \texttt{R}-package \texttt{energy} \citep{RizzoSzekely2022}.

We used the Kolmorov-Smirnov test to compare univariate distributions and the \textit{energy}  test for multivariate distributions \citep{SzekelyRizzo2004}. We draw 1000 observations from the sampling (``true'') and estimated distributions. The critical values are $0.0607$ and $0.0479$ for significance levels $0.05$ and $0.2$, respectively. Except for very few instances, when a small number of moments is used in the K-series estimation, our estimate is statistically the same as the true distribution. We also test the agreement of the K-series with the GC estimates. When the true distribution is similar to normal, K-series is statistically indistinguishable from Gram-Charlier. But when the true distribution is not close to normal, K-series provides a far more accurate estimate than Gram-Charlier.

\newpage
\begin{threeparttable}
\centering
\ra{0.5}
\resizebox{0.87\columnwidth}{!}{%
\hspace{-0.85cm}
\begin{tabular}{@{}lcccccrclcr@{}}\toprule
Problem & \phantom{abc}&  $Var$& \phantom{abc} & $|M|$ &  \phantom{abc} & KS Distance  & \phantom{abc} & KS Distance  & \phantom{abc}& Energy test \\
 & \phantom{abc}&  & \phantom{abc} &  &  \phantom{abc} &   & \phantom{abc}& {\hspace{0.5cm}(GC)}  & \phantom{abc}& {(p-value)}
\\ \midrule \\
Differential-Drive Robot\\
  && $X$ 
  && 6
 && 
       0.00069 {\color{green}{\ding{52}}} {\color{red}{!}}  
  && 
       0.00072 {\color{green}{\ding{52}}} {\color{red}{!}}  \\
&& 
  \\ 
  && 
  $Y$ 
  && 6 
  && 0.00059 {\color{green}{\ding{52}}} {\color{red}{!}}
  &&  0.00059 {\color{green}{\ding{52}}} {\color{red}{!}} 
  && \\
  &&  &&  &&  &&   && \\
  &&
  $(X,Y)$
  && 48 
  && 
  && 
  && 0.4700 \\
  \midrule \\
  PDP\\
  && $X$ && 2 && 0.00664 {\color{green}{\ding{52}}} {\color{red}{!}} && 0.00680 {\color{green}{\ding{52}}} {\color{red}{!}} && 
  \\ &&  &&  &&  &&   && \\ 
  && $Y$ && 6 && 0.00033 {\color{green}{\ding{52}}} {\color{red}{!}} && 0.05190 {\color{green}{\ding{52}}}  && 
  \\ &&  &&  &&  &&   && \\ 
  && $(X, Y)$ && 8 &&  &&   && 0.4250\\
  \midrule \\
Turning vehicle\\
&& $X$ && 8 && 0.00807 {\color{green}{\ding{52}}} {\color{red}{!}} && 0.02109 {\color{green}{\ding{52}}} {\color{red}{!}}  && 
\\ &&  &&  &&  &&   && \\ 
&& $Y$ && 8 && 0.00494 {\color{green}{\ding{52}}} {\color{red}{!}} && 0.01030 {\color{green}{\ding{52}}} {\color{red}{!}}  && 
\\ &&  &&  &&  &&   && \\ 
&& $(X, Y)$ && 80 &&  &&   && 0.4150\\
  \midrule \\
Turning vehicle \\
 (small variance) \\
&& $X$ && 8 && 0.02614 {\color{green}{\ding{52}}} {\color{red}{!}} && 0.11054 {\color{red}{\ding{55}}}  && 
\\ &&  &&  &&  &&   && \\ 
&& $Y$ && 8 && 0.00070 {\color{green}{\ding{52}}} {\color{red}{!}} && 0.00169 {\color{green}{\ding{52}}} {\color{red}{!}}  && 
\\ &&  &&  &&  &&   && \\ 
&& $(X, Y)$ && 80 &&  &&   && 0.5000\\
  \midrule \\
Taylor rule model\\
  && i   && 6 && 0.00037 {\color{green}{\ding{52}}} {\color{red}{!}} && 0.00037 {\color{green}{\ding{52}}} {\color{red}{!}} && \\
  \midrule \\
2D Robotic Arm \\
  && $X$ && 2 && 0.00037 {\color{green}{\ding{52}}} {\color{red}{!}} && 0.00037 {\color{green}{\ding{52}}} {\color{red}{!}} && 
  \\ &&  &&  &&  &&   && \\ 
  && $Y$ && 2 && 0.00048 {\color{green}{\ding{52}}} {\color{red}{!}} && 0.00048 {\color{green}{\ding{52}}} {\color{red}{!}}  && 
  \\ &&  &&  &&  &&   && \\ 
  && $(X, Y)$ && 8 &&  &&   && 0.9650\\
  \midrule \\
Rimless Wheel Walker\\
  && $X$ && 2 && 0.00180 {\color{green}{\ding{52}}} {\color{red}{!}} && 0.00180 {\color{green}{\ding{52}}} {\color{red}{!}} &&  \\
  \hline\\
    Vasicek model\\
  && $r$ && 2 && 0.00074 {\color{green}{\ding{52}}} {\color{red}{!}} && 0.00074 {\color{green}{\ding{52}}} {\color{red}{!}} &&  \\
  \hline\\
    1D Random Walk\\
  && $X$ && 2 && 0.03834 {\color{green}{\ding{52}}} {\color{red}{!}} && 0.03834 {\color{green}{\ding{52}}} {\color{red}{!}} &&  \\
  \midrule \\
  2D Random Walk\\
   && $X$ && 2 && 0.02743 {\color{green}{\ding{52}}} {\color{red}{!}} && 0.02743 {\color{green}{\ding{52}}} {\color{red}{!}} && 
  \\ &&  &&  &&  &&   && \\ 
  && $Y$ && 2 && 0.02714 {\color{green}{\ding{52}}} {\color{red}{!}} && 0.02714 {\color{green}{\ding{52}}} {\color{red}{!}}  && 
  \\ &&  &&  &&  &&   && \\ 
  && $(X, Y)$ && 8 &&  &&   && 0.4902\\
  \midrule \\
    Stuttering P\\
  && $S$ && 2 && 0.00351 {\color{green}{\ding{52}}} {\color{red}{!}} && 0.00354 {\color{green}{\ding{52}}} {\color{red}{!}} && \\
 \bottomrule
\end{tabular}
}
{\small
\begin{tablenotes}
     \item[{\color{green}{\ding{52}}}] \hspace{0.05cm} Null hypothesis is not rejected at  significance level 0.05.
     \item[{\color{green}{\ding{52}}} {\color{red}{!}}] Null hypothesis is not rejected at significance level 0.2.
     \item[{\color{red}{\ding{55}}}] \hspace{0.11cm} Null hypothesis is rejected at  significance level 0.05.
   \end{tablenotes}}
\caption{Kolmogorov-Smirnov distances for univariate distributions and testing for equality of multivariate distributions.}  
\label{tab:Experiments_results}
\end{threeparttable}

\section{Effect of Reference Distribution}\label{sec:refdstn_app}

We study the effect of the choice of the reference distribution in K-series on estimation accuracy. 
We consider reference distributions with the same support as the target unknown pdf $f$, with bounded support that contains the support of $f$ and with unbounded support in absence of any knowledge about the possible values of the target distribution.

Table \ref{tab:distns} lists the combinations of target and reference distributions we consider in our experiments. We plot the true target pdfs (red) and the K-series estimates for different numbers of moments using reference pdfs with the same support as the target in Figure \ref{fig:exact_support_K_series_exp}. Our method does not suffer from the numerical instability associated with closeness to zero. In most cases, the uniform reference pdf works better on exact support.

\begin{table*}\centering
\ra{0.8}
\begin{tabular}{@{}rcr@{}}\toprule
Target pdf $f$ & \phantom{abc}& Reference pdf $\phi$ \\ 
\\ \midrule
Trunc Exp$(\lambda=2/3,\left[0,4\right])$\\
   && $Uniform(0,4)$ \\
  &&  $Trunc$ $Normal(\E(f),\var(f), \left[0,4\right])$ \\
     && $Uniform(-2,6)$ \\
    &&  $Trunc$ $Normal(\E(f),\var(f), \left[-2,6\right])$ \\
    &&  $Normal(\E(f),\var(f))$ \\
Trunc Gamma$(\alpha=2,\beta=0.5,\left[0,5\right])$\\
  && $\phi \sim Uniform(0,5)$ \\
  &&  $Trunc$ $Normal(\E(f),\var(f), \left[0,5\right])$ \\
     && $Uniform(-2,7)$ \\
    &&  $Trunc$ $Normal(\E(f),\var(f), \left[-2,7\right])$ \\
    &&  $Normal(\E(f),\var(f))$ \\ 
  Continuous Bernoulli$(\pi=0.3)$\\
  && $\phi \sim Uniform(0,1)$ \\
  &&  $Trunc$ $Normal(\E(f),\var(f), \left[0,1\right])$ \\
     && $Uniform(-2,3)$ \\
    &&  $Trunc$ $Normal(\E(f),\var(f), \left[-2,3\right])$ \\
    &&  $Normal(\E(f),\var(f))$ \\ 
  Trunc Normal$(1.5, 5.76, \left[-6,6\right])$\\
  && $\phi \sim Uniform(-6,6)$ \\
  &&  $Trunc$ $Normal(\E(f),\var(f), \left[-6,6\right])$ \\
     && $Uniform(-8,8)$ \\
    &&  $Trunc$ $Normal(\E(f),\var(f), \left[-8,8\right])$ \\
    &&  $Normal(\E(f),\var(f))$ \\ \bottomrule
\end{tabular}
\caption{Target and reference distributions.}
\label{tab:distns}
\end{table*}

In Figure \ref{fig:ext_Uniform_K_series}, we plot the true four pdfs in Table \ref{tab:distns} and their K-series estimates using different number of moments and the uniform reference supported on an interval that contains the support of the target pdf. Specifically, the reference pdf is supported on the interval that extends by 2 units the true support in either side. The estimation improves significantly as the number of moments increases. 
The left panels of Figure \ref{fig:normal_ref_K_series} plot the true pdfs and their K-series estimates using different numbers of moments and a truncated normal reference supported on the interval that extends by 2 units the true support in both ends. The right panels of Figure \ref{fig:normal_ref_K_series} plot the true pdfs and their K-series estimates using different numbers of moments and  a normal reference pdf supported on the entire real line. 

Visual inspection of these plots indicates that the estimation is better if the support of all reference pdfs is close to the support of the target pdf. The uniform reference distribution results in accurate estimates provided its support is close to the support of the true pdf. On the other hand, both truncated and regular normal reference pdfs lead to accurate K-series estimates the closer the target pdf is to a normal.  Moreover, the truncated normal distribution tends to work better on a support wider than the true in comparison with the uniform. 

Formal assessment of the estimation accuracy is carried out with Kolmogorov-Smirnov tests. Tables \ref{fig:KS_exact_supports}, \ref{fig:KS_uniform_extended_supports} and \ref{fig:KS_normal_ref}
report the values of the Kolmogorov-Smirnov test statistic comparing the K-series estimates with the true pdfs and whether the null of their equality is rejected for different numbers of moments and reference distributions.  
The sample size for both the estimated and true distribution is 1000. The critical values are $0.0607$ and $0.0479$ for significance levels $0.05$ and $0.2$, respectively.\\

\begin{figure}
     \newcommand{\WIDTH}{0.49\textwidth}
     \newcommand{\HEIGHT}{0.5\textwidth}
     \newcommand{\pWIDTH}{0.95\textwidth}
     \newcommand{\pHEIGHT}{0.555\textwidth}
     \centering
     \vspace{0.8cm}
     \begin{minipage}{1.0\textwidth}
     \centering
     \vspace{-0.8cm}
     \begin{subfigure}[b]{\WIDTH}
         \centering
         \includegraphics[width=\pWIDTH, height = \pHEIGHT]{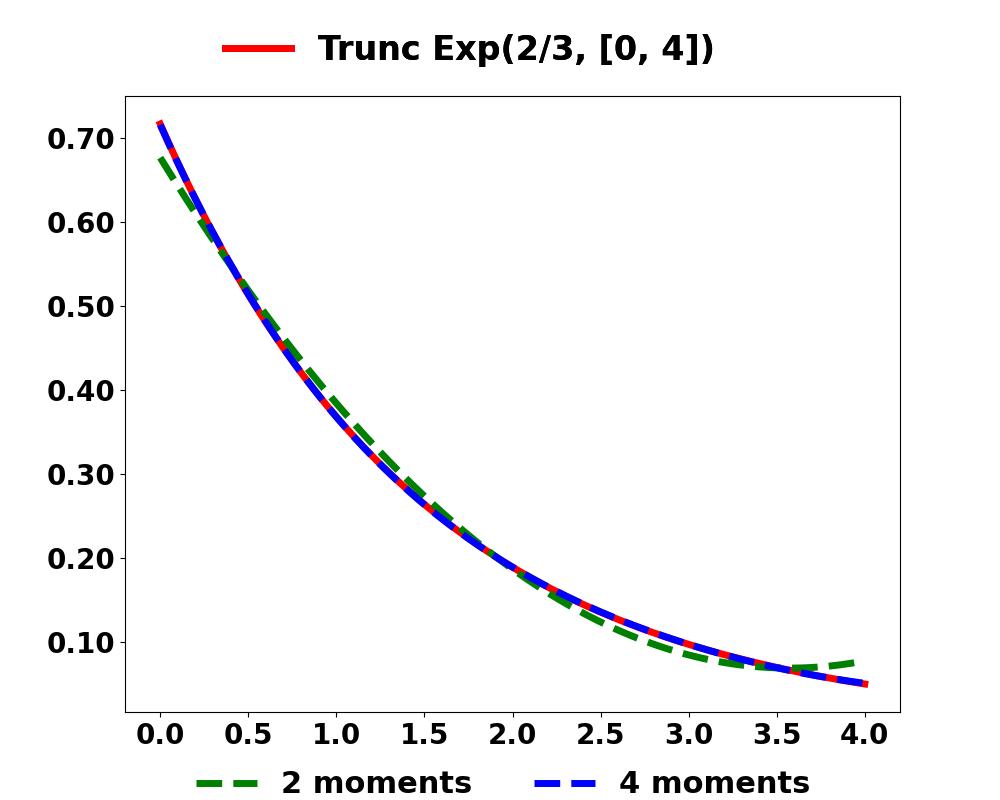}
         \caption{$\phi \sim U(0,4)$}
         \label{fig:y equals x}
     \end{subfigure}
     \hfill
     \begin{subfigure}[b]{\WIDTH}
         \centering
         \includegraphics[width=\pWIDTH, height = \pHEIGHT]{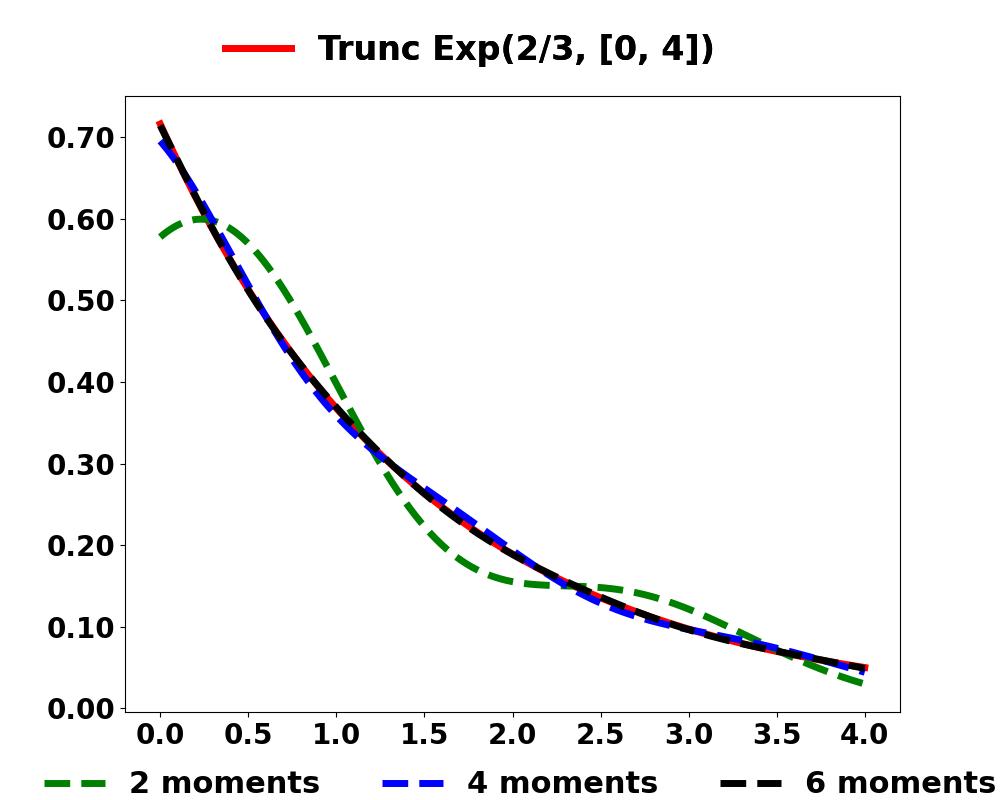}
         \caption{$\phi \sim Trunc$ $Normal(\E(f),\var(f), \left[0, 4\right])$}
         \label{fig:three sin x}
     \end{subfigure}
     \hfill
     \vspace{0.3cm}
     \begin{subfigure}[b]{\WIDTH}
         \centering
         \includegraphics[width=\pWIDTH, height = \pHEIGHT]{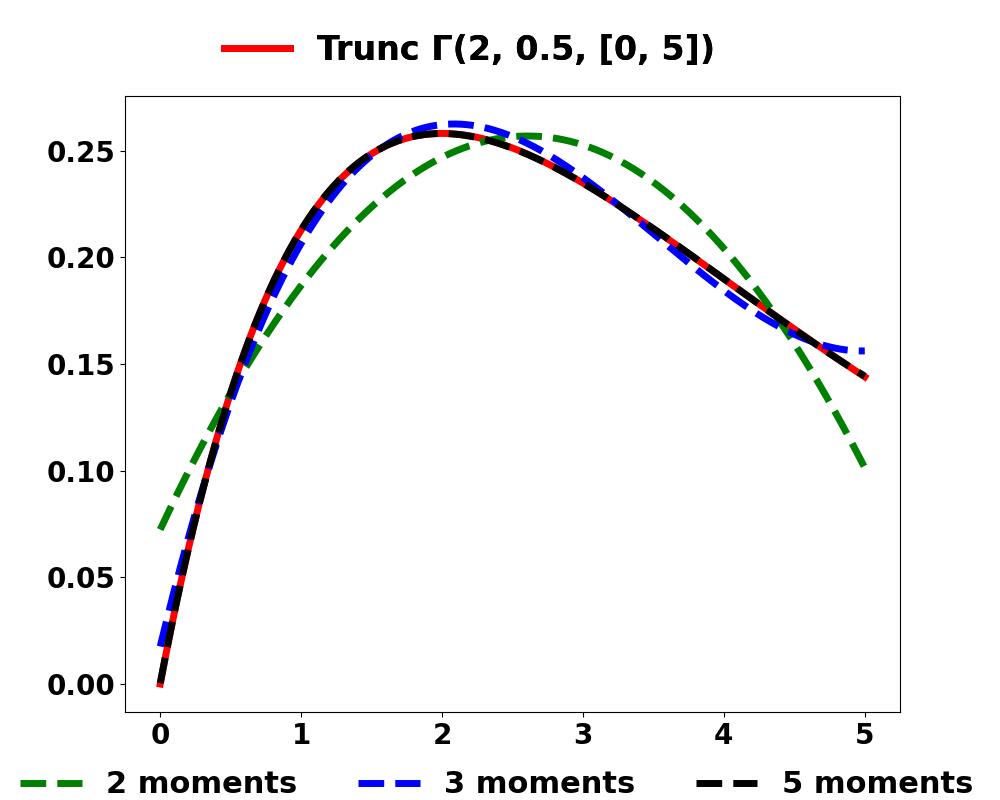}
         \caption{$\phi \sim U(0,5)$}
         \label{fig:five over x}
     \end{subfigure}
     \hfill
     \begin{subfigure}[b]{\WIDTH}
         \centering
         \includegraphics[width=\pWIDTH, height = \pHEIGHT]{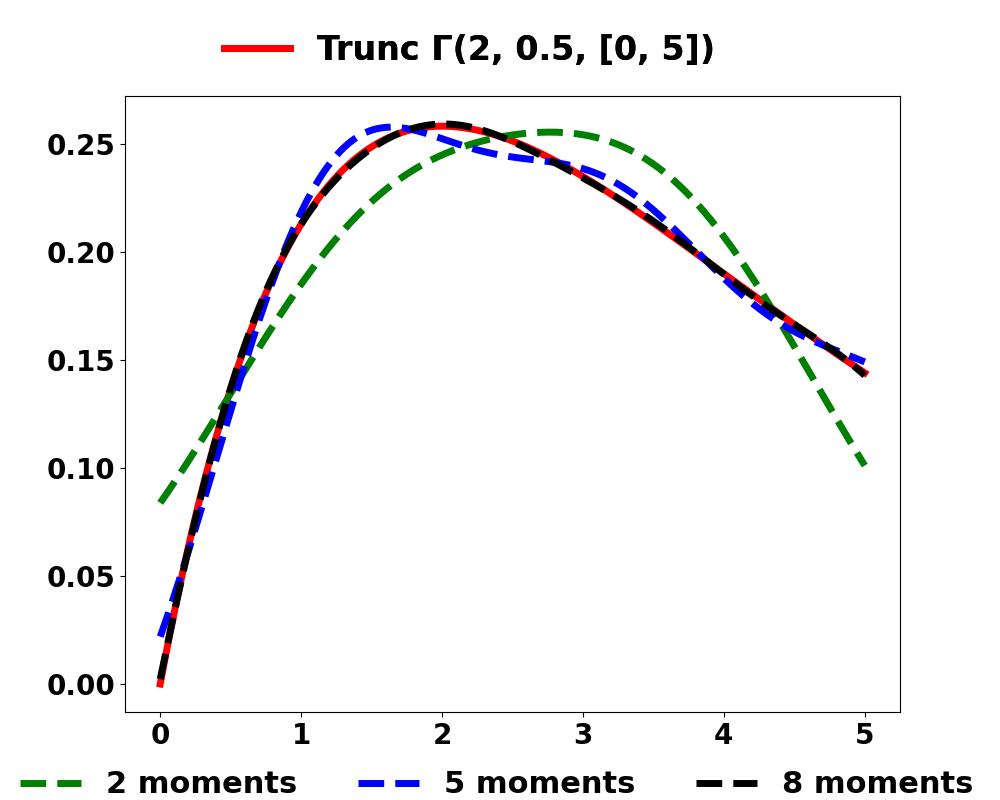}
         \caption{$\phi \sim Trunc$ $Normal(\E(f),\var(f), \left[0, 5\right])$}
         \label{fig:five over x}
     \end{subfigure}
     \hfill
     \vspace{0.3cm}
     \begin{subfigure}[b]{\WIDTH}
         \centering
         \includegraphics[width=\pWIDTH, height = \pHEIGHT]{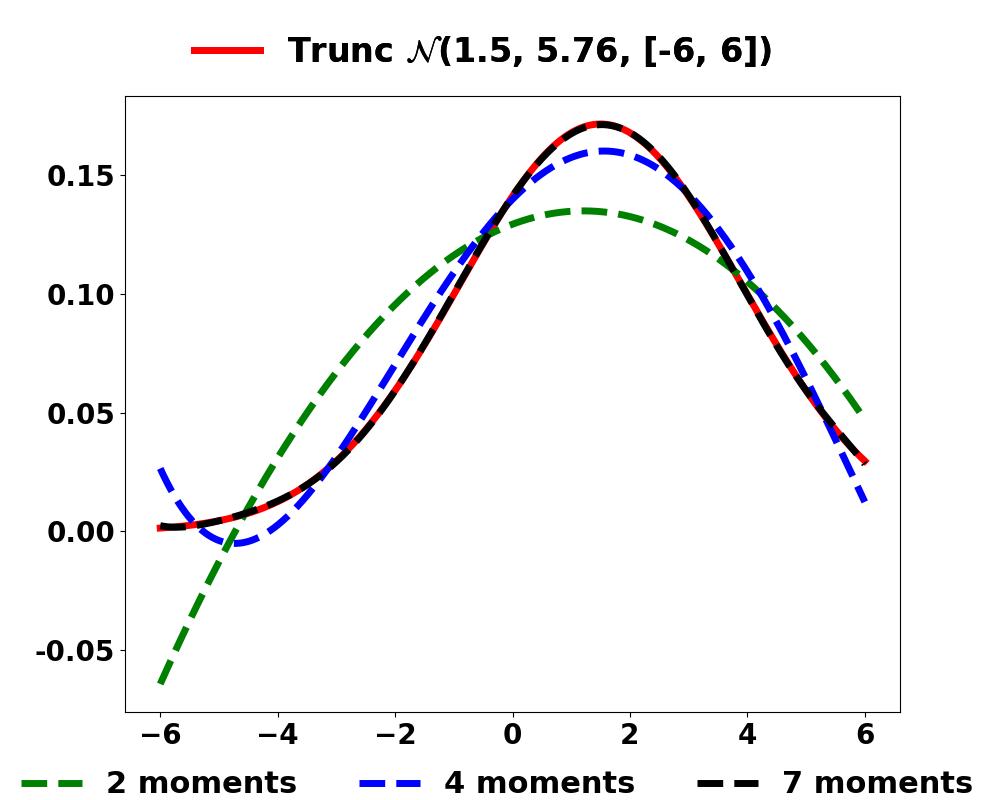}
         \caption{$\phi \sim U(-6,6)$}
         \label{fig:five over x}
     \end{subfigure}
     \hfill
     \begin{subfigure}[b]{\WIDTH}
         \centering
         \includegraphics[width=\pWIDTH, height = \pHEIGHT]{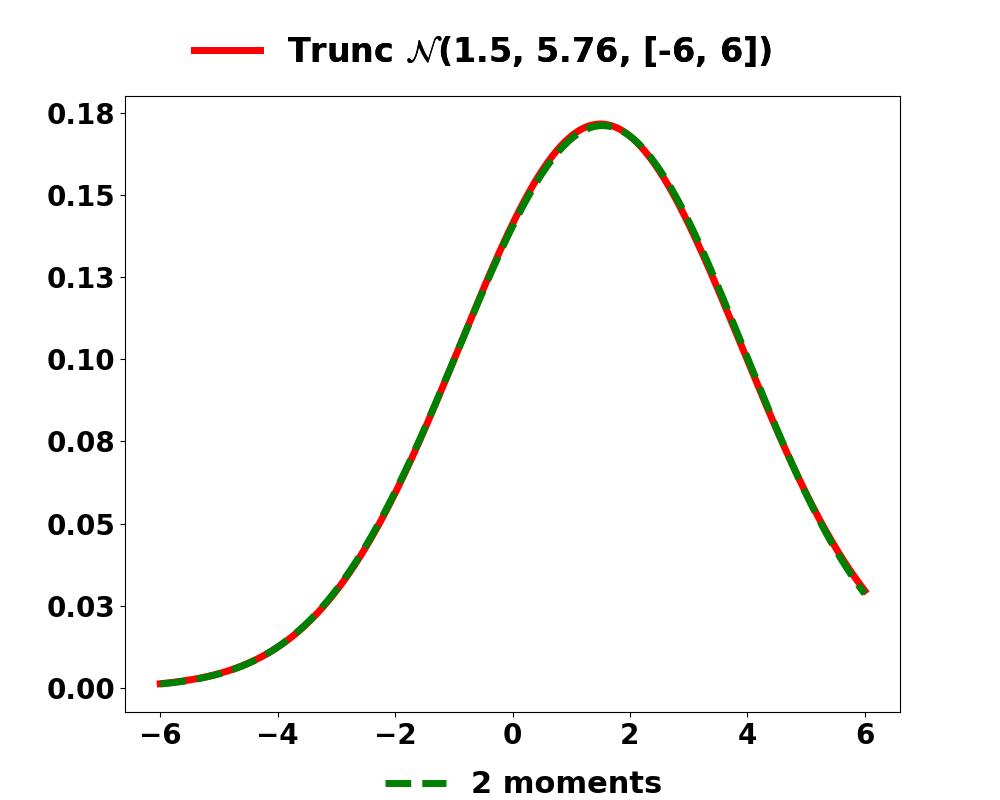}
         \caption{$\phi \sim Trunc$ $Normal(\E(f),\var(f), \left[-6, 6\right])$}
         \label{fig:five over x}
     \end{subfigure}
     \hfill
     \vspace{0.3cm}
     \begin{subfigure}[b]{\WIDTH}
         \centering
         \includegraphics[width=\pWIDTH, height = \pHEIGHT]{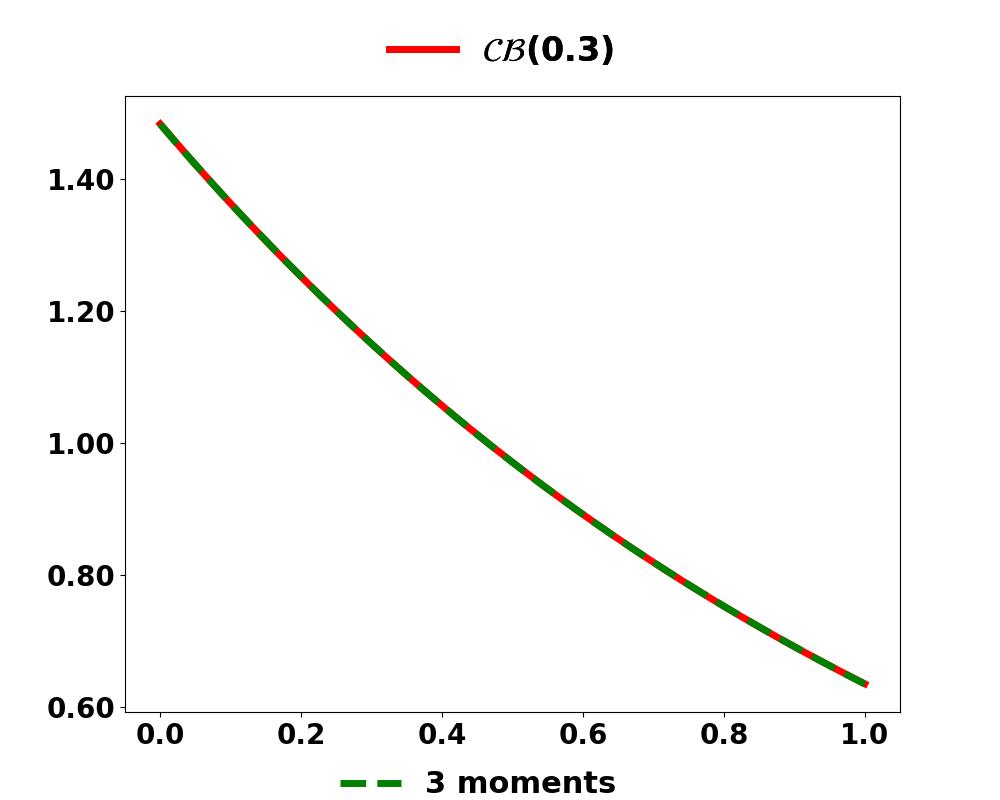}
         \caption{$\phi \sim U(0,1)$}
         \label{fig:five over x}
     \end{subfigure}
     \hfill
     \begin{subfigure}[b]{\WIDTH}
         \centering
         \includegraphics[width=\pWIDTH, height = \pHEIGHT]{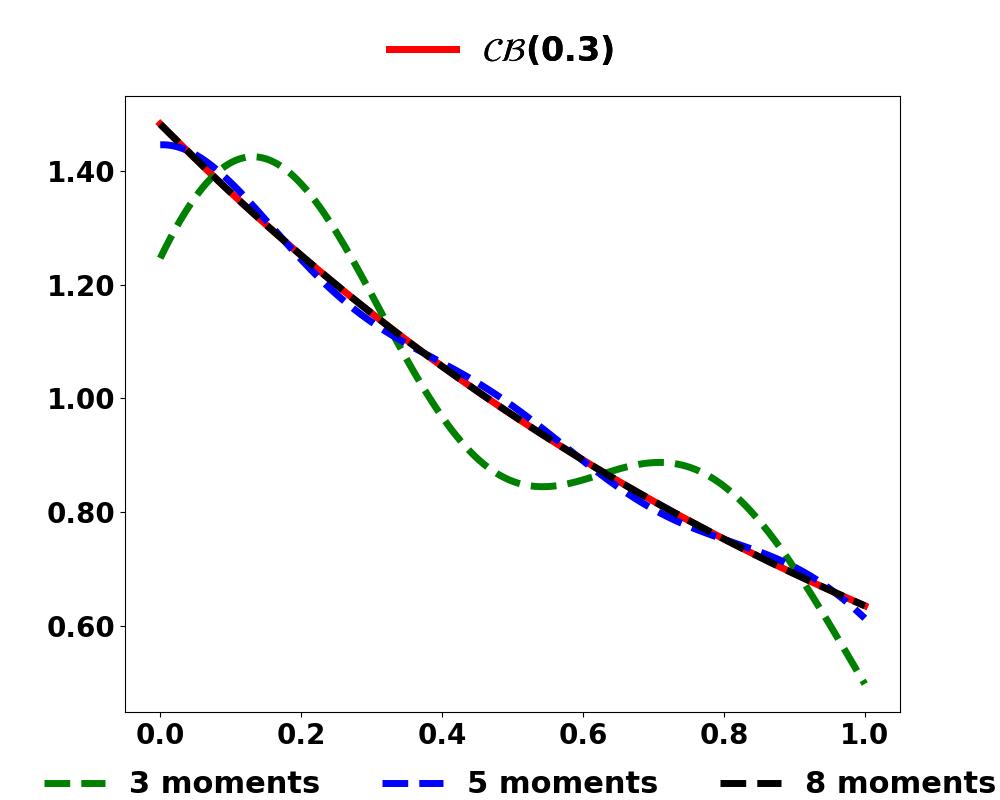}
         \caption{$\phi \sim Trunc$ $Normal(\E(f),\var(f), \left[0, 1\right])$}
         \label{fig:five over x}
     \end{subfigure}
      \hfill
      \end{minipage}
      \caption{K-series estimates of the truncated exponential pdf, the truncated gamma pdf, the truncated normal pdf and the continuous Bernoulli  with uniform reference (Method of Moments~\cite{Munkhammar_etal_2017}), left panels) and truncated normal (right panels) on  exact support. } 
        \label{fig:exact_support_K_series_exp}
\end{figure}

\begin{threeparttable}
\centering
\ra{0.5}
\begin{tabular}{@{}lcrcrcr@{}}\toprule
Target pdf $f$ & \phantom{abc}& $|M|$ &  \phantom{abc} & Uniform  & \phantom{abc}& Trunc Normal \\ 
 & \phantom{abc}&  &  \phantom{abc} & (Same support) & \phantom{abc}& (Same support)\\
\\ \midrule
Trunc Gamma$(\alpha=2,\beta=0.5,\left[0,5\right])$\\
  && 2 && 0.0172 {\color{green}{\ding{52}}} {\color{red}{!}} && 0.0188 {\color{green}{\ding{52}}} {\color{red}{!}}\\
  &&  3 && 0.0031 {\color{green}{\ding{52}}} {\color{red}{!}} && 0.0093 {\color{green}{\ding{52}}} {\color{red}{!}} \\
     && 5 && $<1e-4$ {\color{green}{\ding{52}}} {\color{red}{!}} && 0.0033 {\color{green}{\ding{52}}} {\color{red}{!}}\\
    &&  8 && $<1e-4$ {\color{green}{\ding{52}}} {\color{red}{!}} && 0.0002 {\color{green}{\ding{52}}} {\color{red}{!}}\\
Trunc Normal$(1.5, 5.76, \left[-6,6\right])$\\
  && 2 && {0.0617 {\color{red}{\ding{55}}}\hspace{0.29cm}} && 0.0011 {\color{green}{\ding{52}}} {\color{red}{!}} \\
  &&  4 && 0.0122 {\color{green}{\ding{52}}} {\color{red}{!}} && $<1e-4$ {\color{green}{\ding{52}}} {\color{red}{!}} \\
     && 7 && 0.0002 {\color{green}{\ding{52}}} {\color{red}{!}} && $<1e-4$ {\color{green}{\ding{52}}} {\color{red}{!}} \\
Continuous Bernoulli$(\pi=0.3)$\\
  &&  3 && $<1e-4$ {\color{green}{\ding{52}}} {\color{red}{!}} && 0.0124 {\color{green}{\ding{52}}} {\color{red}{!}}\\
     && 5 && $<1e-4$ {\color{green}{\ding{52}}} {\color{red}{!}} && 0.0012 {\color{green}{\ding{52}}} {\color{red}{!}}\\
    &&  8 && $<1e-4$ {\color{green}{\ding{52}}} {\color{red}{!}} && $<1e-4$ {\color{green}{\ding{52}}} {\color{red}{!}} \\
Trunc Exp$(\lambda=2/3,\left[0,4\right])$\\
  && 2 && 0.0082 {\color{green}{\ding{52}}} {\color{red}{!}} && 0.0212 {\color{green}{\ding{52}}} {\color{red}{!}} \\
     && 4 && 0.0001 {\color{green}{\ding{52}}} {\color{red}{!}} && 0.0025 {\color{green}{\ding{52}}} {\color{red}{!}}\\
    &&  6 && $<1e-4$ {\color{green}{\ding{52}}} {\color{red}{!}} && 0.0003 {\color{green}{\ding{52}}} {\color{red}{!}}\\
 \bottomrule
\end{tabular}
{\small
\begin{tablenotes}
     \item[{\color{green}{\ding{52}}}] \hspace{0.05cm} Null hypothesis is not rejected at  significance level 0.05.
     \item[{\color{green}{\ding{52}}} {\color{red}{!}}] Null hypothesis is not rejected at significance level 0.2.
     \item[{\color{red}{\ding{55}}}] \hspace{0.11cm} Null hypothesis is rejected at  significance level 0.05.
   \end{tablenotes}}
\caption{Kolmogorov-Smirnov distances and significance test results for reference distributions on the same support as the true pdf.}  
\label{fig:KS_exact_supports}
\end{threeparttable}

\begin{figure}[t!]
     \newcommand{\WIDTH}{0.49\textwidth}
     \newcommand{\HEIGHT}{\textwidth}
     \newcommand{\pWIDTH}{0.95\textwidth}
     \newcommand{\pHEIGHT}{0.625\textwidth}
     \centering
     \begin{minipage}{1.0\textwidth}
     \centering
     \vspace{0.8cm}
     \begin{subfigure}[b]{\WIDTH}
         \centering
         \includegraphics[width=\pWIDTH, height = \pHEIGHT]{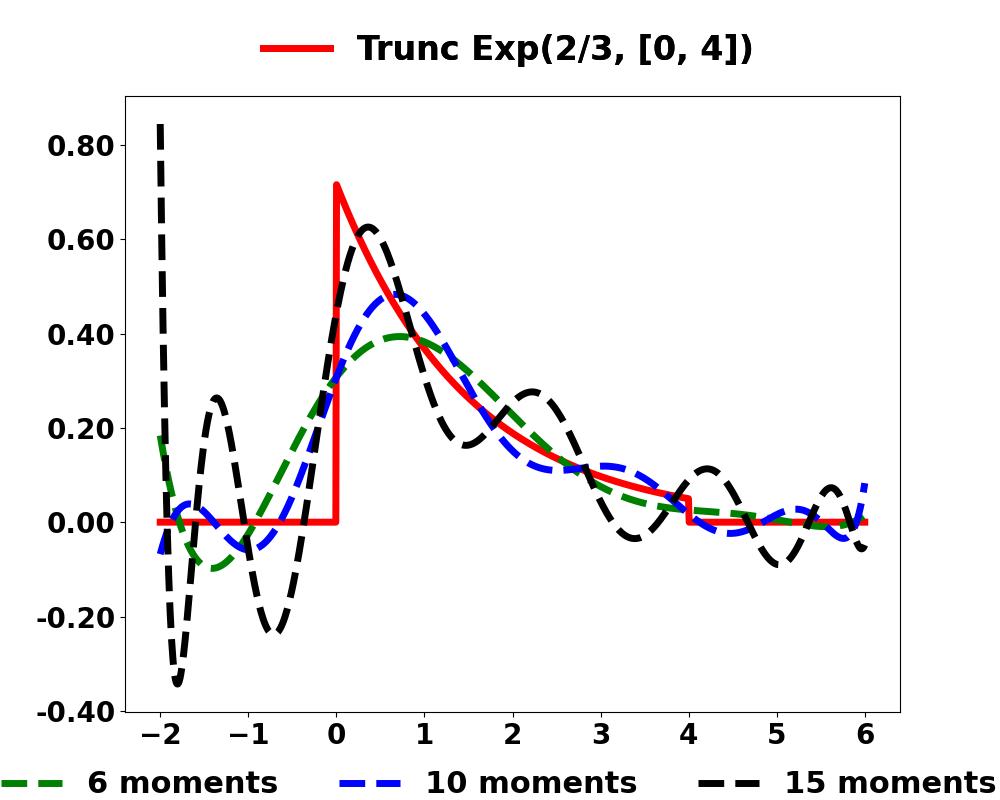}
         \caption{$\phi \sim U(-2,6)$}
         \label{fig:y equals x}
     \end{subfigure}
     \hfill
     \begin{subfigure}[b]{\WIDTH}
         \centering
         \includegraphics[width=\pWIDTH, height = \pHEIGHT]{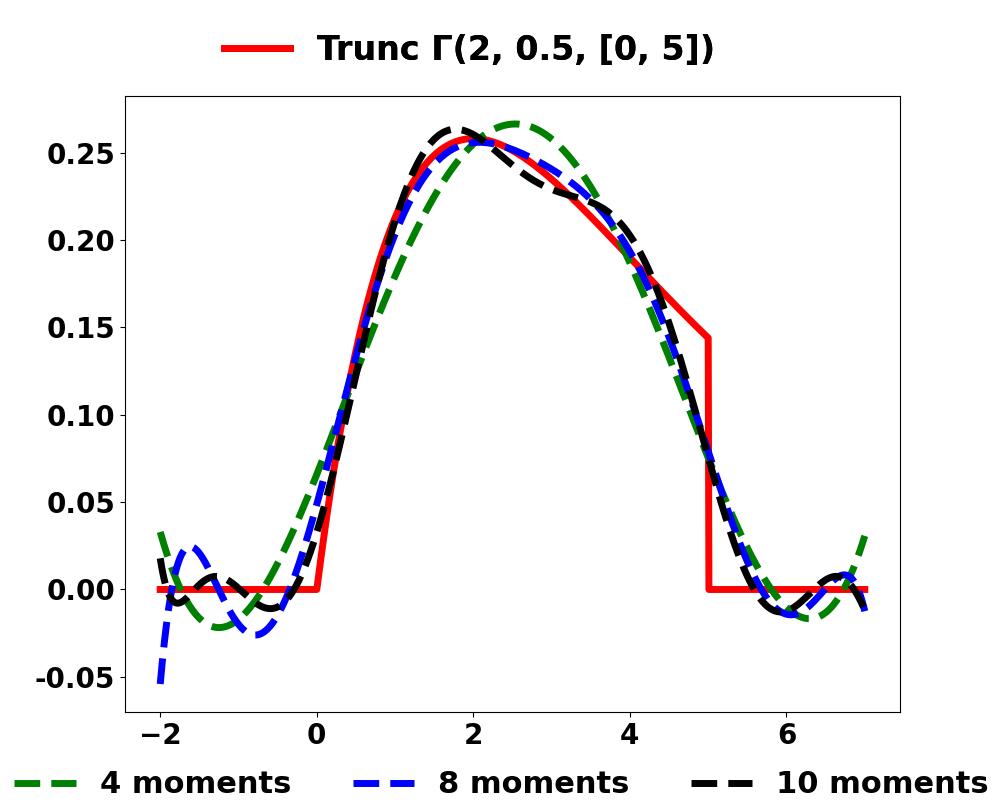}
         \caption{$\phi \sim U(-2,7)$}
         \label{fig:three sin x}
     \end{subfigure}
     \hfill
     \vspace{0.3cm}
     \begin{subfigure}[b]{\WIDTH}
         \centering
         \includegraphics[width=\pWIDTH, height = \pHEIGHT]{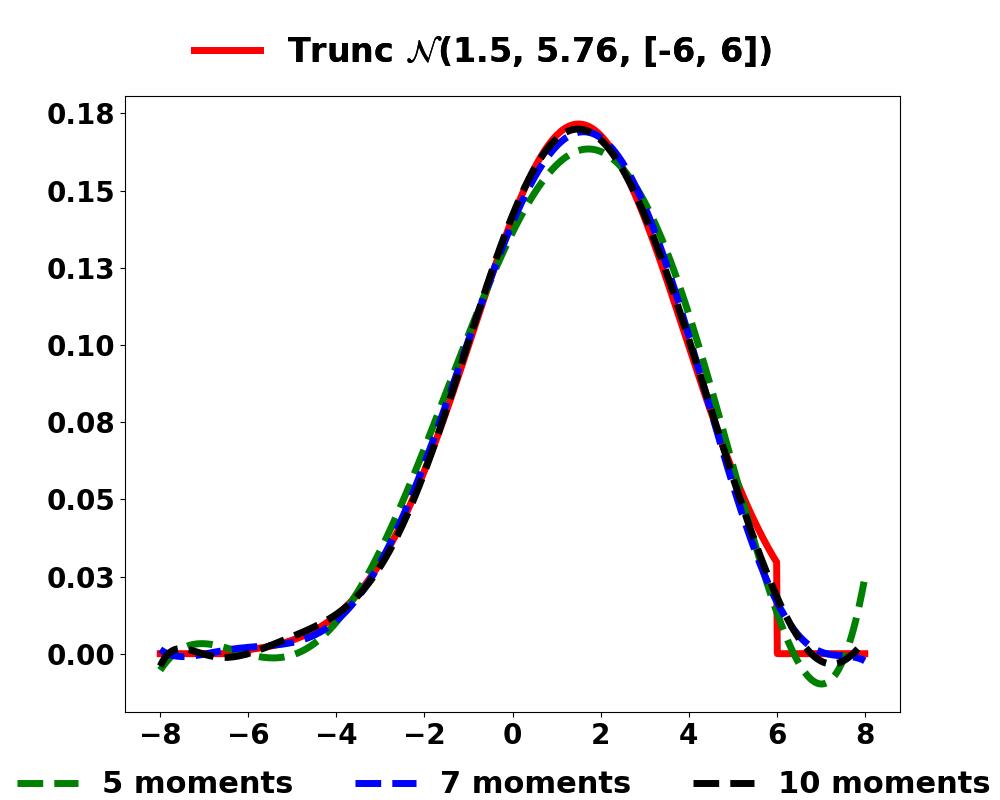}
         \caption{$\phi \sim U(-8,8)$}
         \label{fig:five over x}
     \end{subfigure}
     \hfill
     \begin{subfigure}[b]{\WIDTH}
         \centering
         \includegraphics[width=\pWIDTH, height = \pHEIGHT]{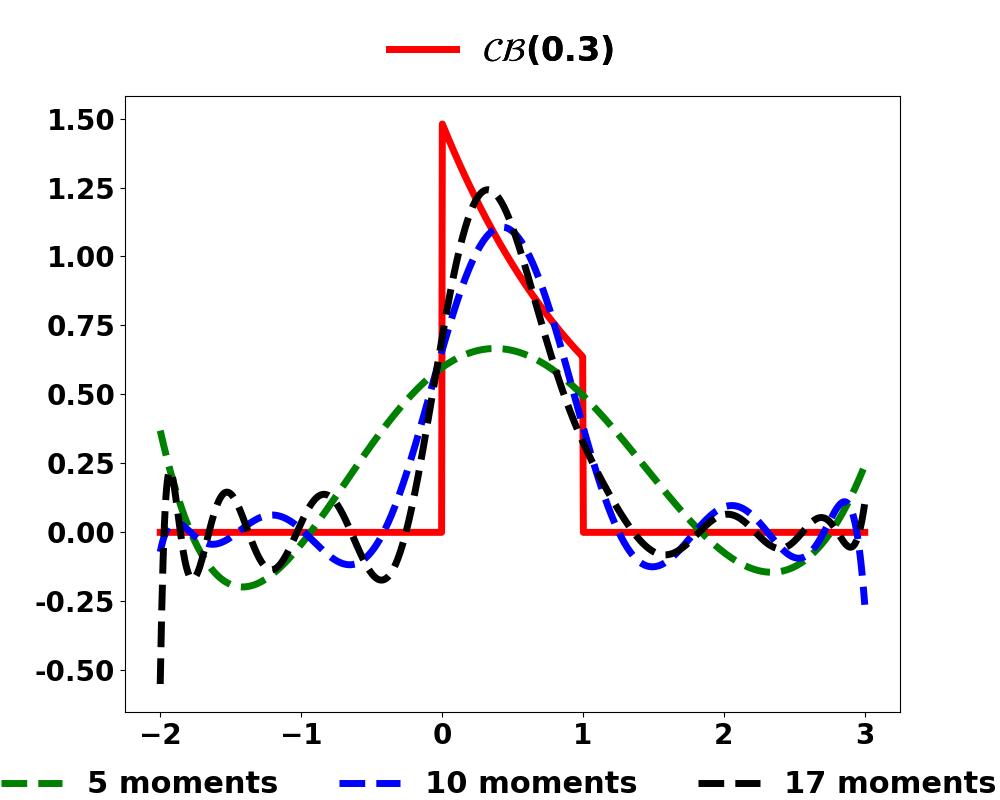}
         \caption{$\phi \sim U(-2,3)$}
         \label{fig:five over x}
     \end{subfigure}
      \hfill
      \end{minipage}
      \caption{Approximations of the truncated exponential pdf, the truncated gamma pdf, the truncated normal pdf and the continuous Bernoulli using K-series with uniform reference on the  extended support. } 
        \label{fig:ext_Uniform_K_series}
\end{figure}

\clearpage
\begin{center}
\begin{threeparttable}
\centering
\ra{0.5}
\begin{tabular}{@{}lcrcr@{}}\toprule
Target pdf $f$ & \phantom{abc}& $|M|$ &  \phantom{abc} & Uniform  \\ 
 & \phantom{abc}&  &  \phantom{abc} & (Extended support) \\
\\ \midrule
Trunc Gamma$(\alpha=2,\beta=0.5,\left[0,5\right])$\\
  && 4 && 0.0213 {\color{green}{\ding{52}}} {\color{red}{!}} \\
  &&  8 && 0.0186 {\color{green}{\ding{52}}} {\color{red}{!}}  \\
     && 10 && 0.0152 {\color{green}{\ding{52}}} {\color{red}{!}} \\
Trunc Normal$(1.5, 5.76, \left[-6,6\right])$\\
  && 5 && 0.0099 {\color{green}{\ding{52}}} {\color{red}{!}} \\
  &&  7 && 0.0061 {\color{green}{\ding{52}}} {\color{red}{!}}\\
     && 10 && 0.0048 {\color{green}{\ding{52}}} {\color{red}{!}}  \\
Continuous Bernoulli$(\pi=0.3)$\\
  &&  5 && {0.2285 {\color{red}{\ding{55}}}\hspace{0.28cm}} \\
     && 10 && {0.0939 {\color{red}{\ding{55}}}\hspace{0.28cm}} \\
    &&  17 && {0.0579 {\color{green}{\ding{52}}}\hspace{0.18cm}}  \\
Trunc Exp$(\lambda=2/3,\left[0,4\right])$\\
  && 6 && {0.1099 {\color{red}{\ding{55}}}\hspace{0.28cm}}  \\
     && 10 && {0.0713 {\color{red}{\ding{55}}}\hspace{0.28cm}} \\
    &&  15 && {0.0546 {\color{green}{\ding{52}}}\hspace{0.18cm}}  \\
 \bottomrule
\end{tabular}
{\small
\begin{tablenotes}
     \item[{\color{green}{\ding{52}}}] \hspace{0.05cm} Null hypothesis is not rejected at  significance level 0.05.
     \item[{\color{green}{\ding{52}}} {\color{red}{!}}] Null hypothesis is not rejected at significance level 0.2.
     \item[{\color{red}{\ding{55}}}] \hspace{0.11cm} Null hypothesis is rejected at  significance level 0.05.
   \end{tablenotes}}
\caption{Kolmogorov-Smirnov distances and significance test results for the uniform reference distribution on extended support.}  
\label{fig:KS_uniform_extended_supports}
\end{threeparttable}
\end{center}

\begin{figure}[!t]
     \newcommand{\WIDTH}{0.49\textwidth}
     \newcommand{\HEIGHT}{\textwidth}
     \newcommand{\pWIDTH}{0.95\textwidth}
     \newcommand{\pHEIGHT}{0.525\textwidth}
     \centering
     \vspace{0.8cm}
     \begin{minipage}{1.0\textwidth}
     \centering
     \begin{subfigure}[b]{\WIDTH}
         \centering
         \includegraphics[width=\pWIDTH, height = \pHEIGHT]{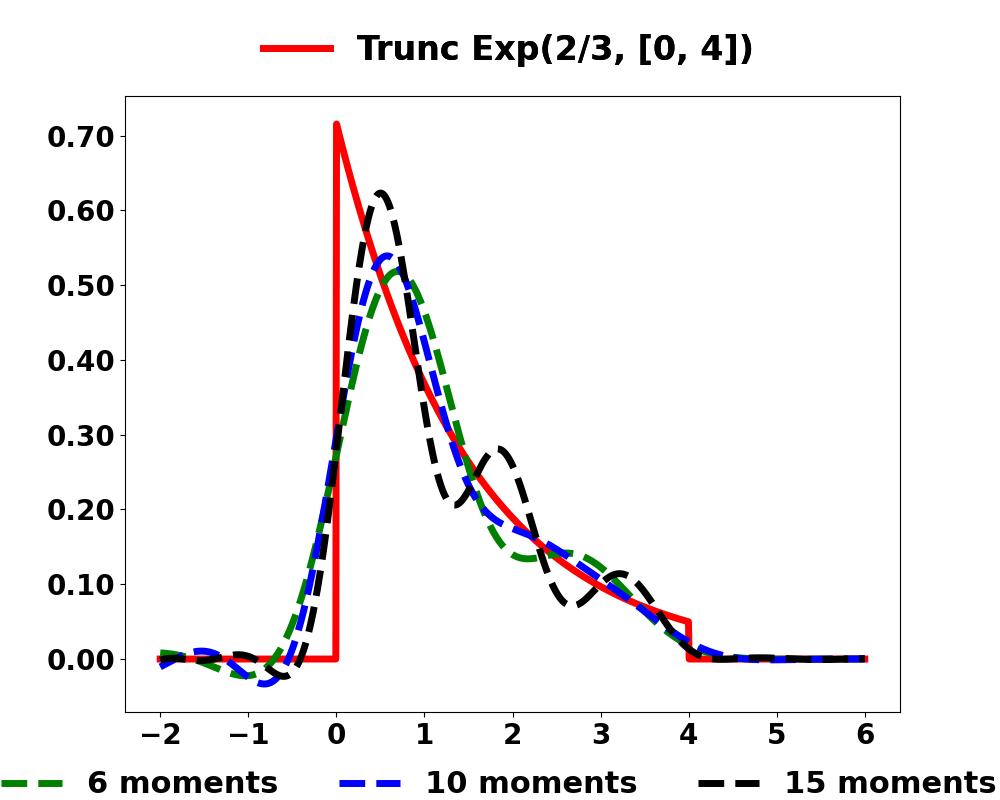}
         \caption{$\phi \sim Trunc$ $Normal(\E(f),\var(f), \left[-2, 6\right])$}
         \label{fig:y equals x}
     \end{subfigure}
     \hfill
     \begin{subfigure}[b]{\WIDTH}
         \centering
         \includegraphics[width=\pWIDTH, height = \pHEIGHT]{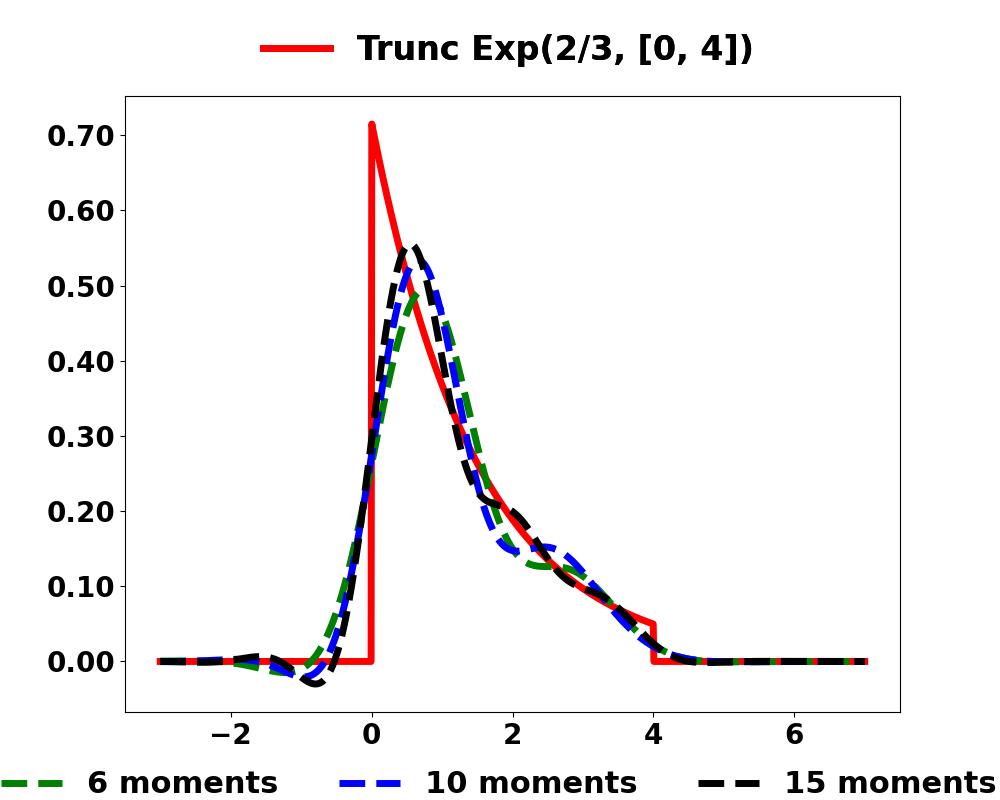}
         \caption{$\phi \sim Normal(\E(f),\var(f))$}
         \label{fig:three sin x}
     \end{subfigure}
     \hfill
     \vspace{0.3cm}
     \begin{subfigure}[b]{\WIDTH}
         \centering
         \includegraphics[width=\pWIDTH, height = \pHEIGHT]{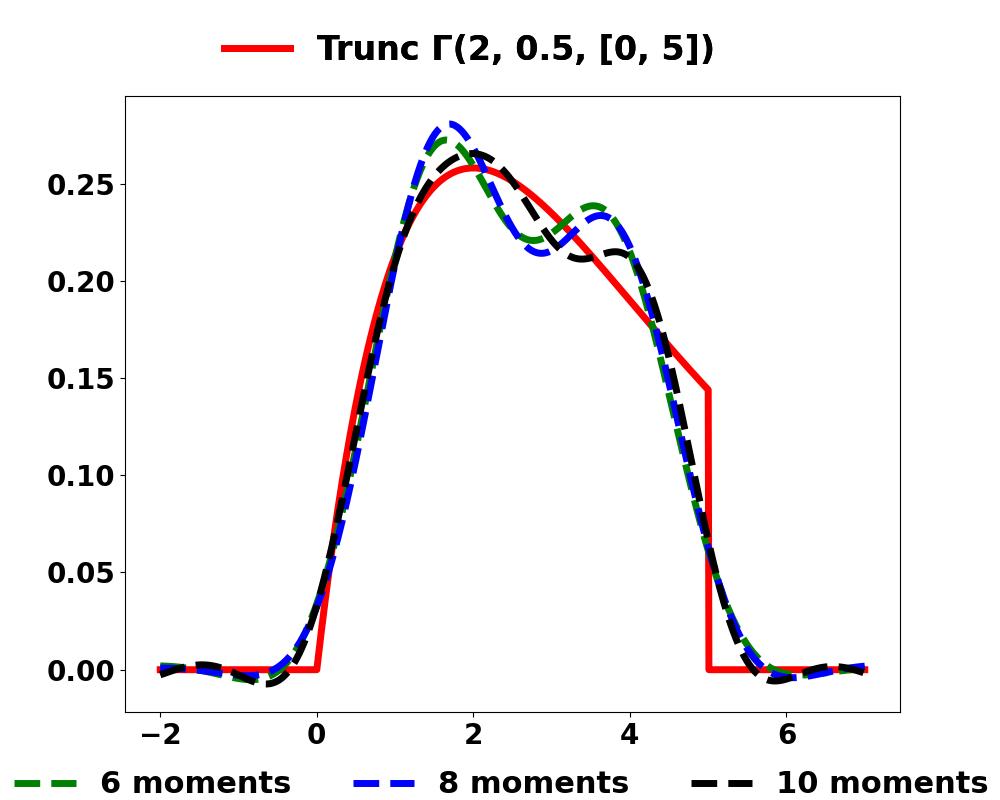}
         \caption{$\phi \sim Trunc$ $Normal(\E(f),\var(f), \left[-2, 7\right])$}
         \label{fig:five over x}
     \end{subfigure}
     \hfill
     \begin{subfigure}[b]{\WIDTH}
         \centering
         \includegraphics[width=\pWIDTH, height = \pHEIGHT]{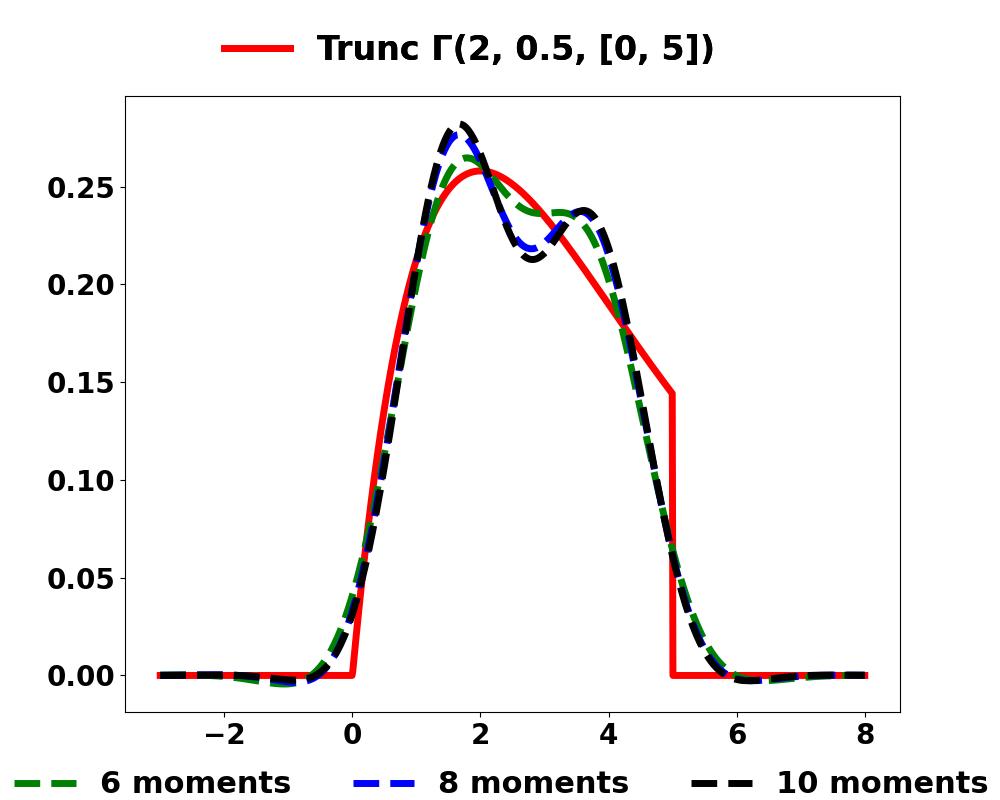}
         \caption{$\phi \sim Normal(\E(f),\var(f))$}
         \label{fig:five over x}
     \end{subfigure}
     \hfill
     \vspace{0.3cm}
     \begin{subfigure}[b]{\WIDTH}
         \centering
         \includegraphics[width=\pWIDTH, height = \pHEIGHT]{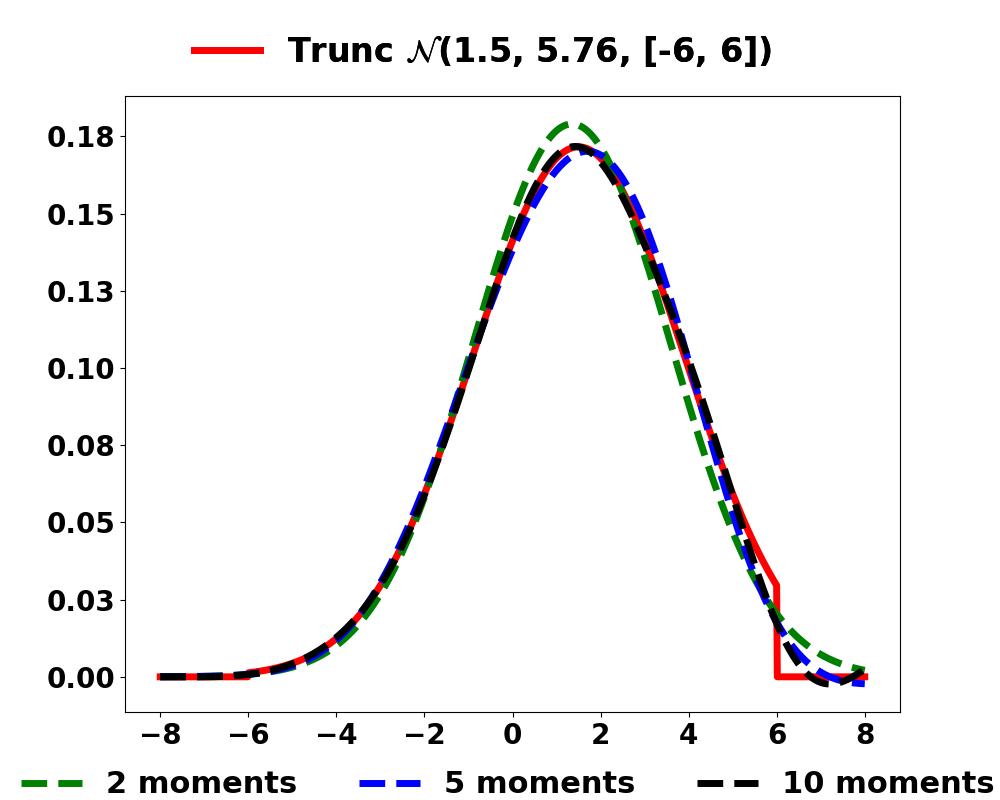}
         \caption{$\phi \sim Trunc$ $Normal(\E(f),\var(f), \left[-8, 8\right])$}
         \label{fig:five over x}
     \end{subfigure}
     \hfill
     \begin{subfigure}[b]{\WIDTH}
         \centering
         \includegraphics[width=\pWIDTH, height = \pHEIGHT]{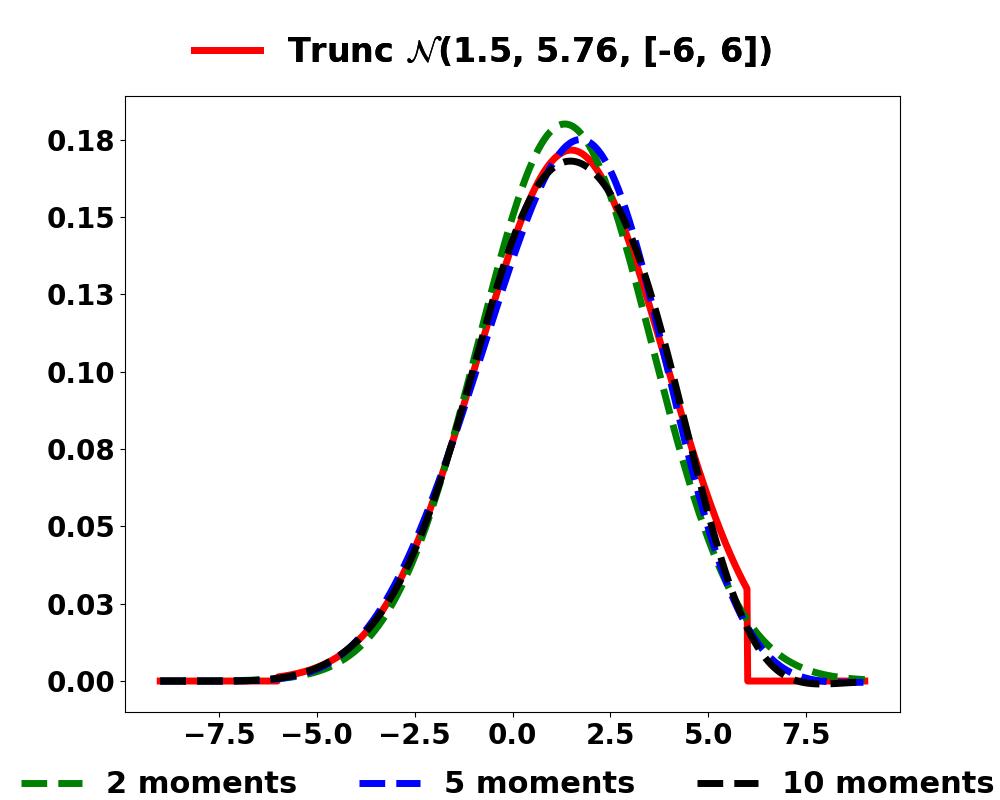}
         \caption{$\phi \sim Normal(\E(f),\var(f))$}
         \label{fig:five over x}
     \end{subfigure}
     \hfill
     \vspace{0.3cm}
     \begin{subfigure}[b]{\WIDTH}
         \centering
         \includegraphics[width=\pWIDTH, height = \pHEIGHT]{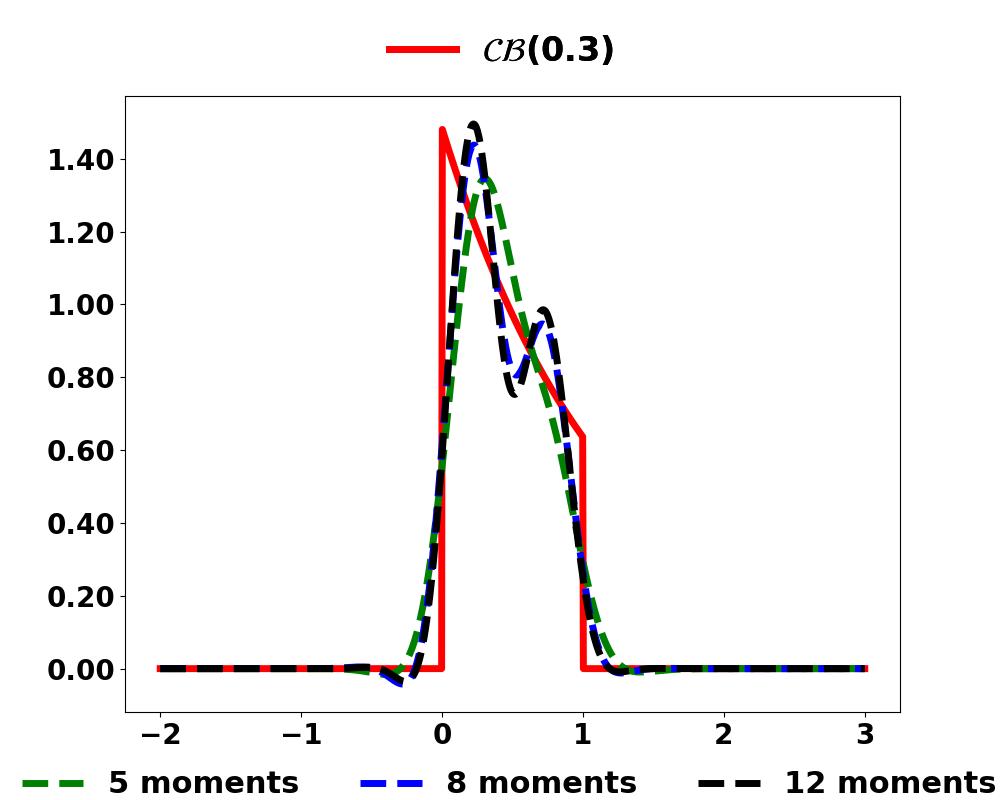}
         \caption{$\phi \sim Trunc$ $Normal(\E(f),\var(f), \left[-2, 3\right])$}
         \label{fig:five over x}
     \end{subfigure}
     \hfill
     \begin{subfigure}[b]{\WIDTH}
         \centering
         \includegraphics[width=\pWIDTH, height = \pHEIGHT]{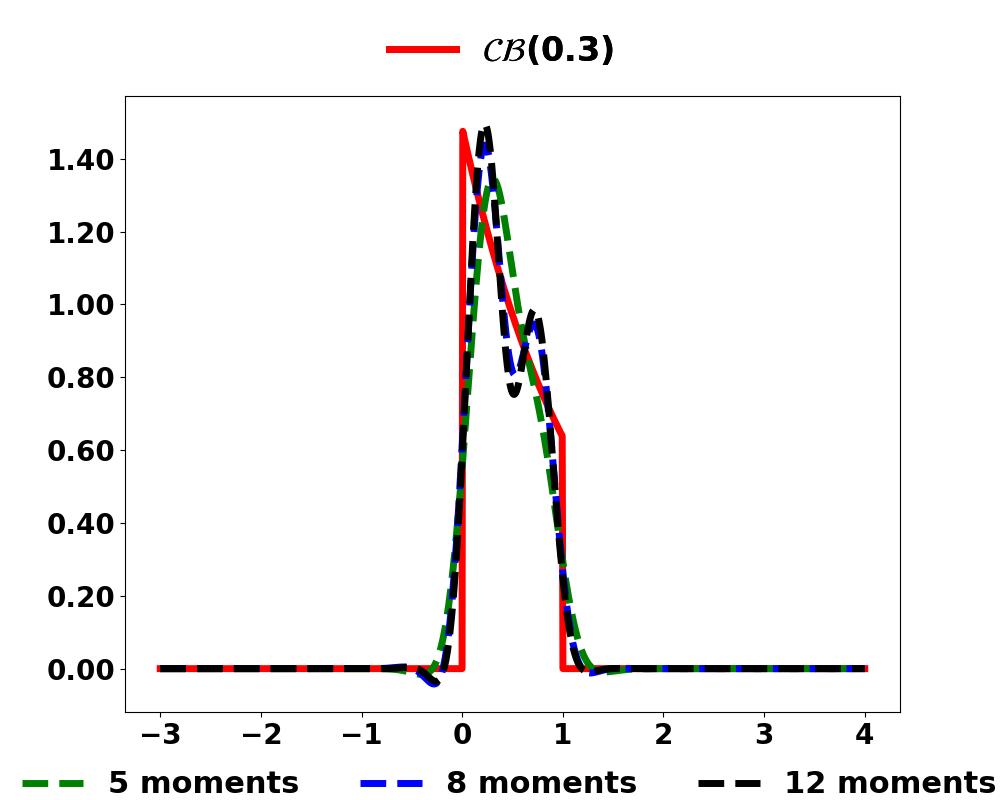}
         \caption{$\phi \sim Normal(\E(f),\var(f))$}
         \label{fig:five over x}
     \end{subfigure}
      \hfill
      \end{minipage}
      \caption{Approximations of the truncated exponential pdf, the truncated gamma pdf, the truncated normal pdf and the continuous Bernoulli using K-series with truncated normal reference on the  extended support (left) and normal reference on the whole real line (Gram-Charlier, right).} 
        \label{fig:normal_ref_K_series}
\end{figure}

\begin{center}
\begin{threeparttable}
\centering
\ra{0.5}
\begin{tabular}{@{}lcrcrcr@{}}\toprule
Target pdf $f$ & \phantom{abc}& $|M|$ &  \phantom{abc} & Trunc Normal  & \phantom{abc}& Normal \\ 
 & \phantom{abc}&  &  \phantom{abc} & (Extended support) & \phantom{abc}& (real line)\\
\\ \midrule
Trunc Gamma$(\alpha=2,\beta=0.5,\left[0,5\right])$\\
  && 6 && 0.0172 {\color{green}{\ding{52}}} {\color{red}{!}} && 0.0202 {\color{green}{\ding{52}}} {\color{red}{!}}\\
  &&  8 && 0.0158 {\color{green}{\ding{52}}} {\color{red}{!}} && 0.0169 {\color{green}{\ding{52}}} {\color{red}{!}} \\
     && 10 && 0.0132 {\color{green}{\ding{52}}} {\color{red}{!}} && 0.0033 {\color{green}{\ding{52}}} {\color{red}{!}}\\
Trunc Normal$(1.5, 5.76, \left[-6,6\right])$\\
  && 2 && 0.0171 {\color{green}{\ding{52}}} {\color{red}{!}} && 0.0182 {\color{green}{\ding{52}}} {\color{red}{!}} \\
  &&  5 && 0.0071 {\color{green}{\ding{52}}} {\color{red}{!}} && 0.0095 {\color{green}{\ding{52}}} {\color{red}{!}} \\
     && 10 && 0.0044 {\color{green}{\ding{52}}} {\color{red}{!}} && 0.0066 {\color{green}{\ding{52}}} {\color{red}{!}} \\
Continuous Bernoulli$(\pi=0.3)$\\
  &&  5 && {0.0516 {\color{green}{\ding{52}}}\hspace{0.20cm}}  && {0.0527 {\color{green}{\ding{52}}}\hspace{0.20cm}} \\
     && 8 && 0.0374 {\color{green}{\ding{52}}} {\color{red}{!}} && 0.0387 {\color{green}{\ding{52}}} {\color{red}{!}}\\
    &&  12 && 0.0340 {\color{green}{\ding{52}}} {\color{red}{!}} && 0.0352 {\color{green}{\ding{52}}} {\color{red}{!}} \\
Trunc Exp$(\lambda=2/3,\left[0,4\right])$\\
  && 6 && {0.0667 {\color{red}{\ding{55}}}\hspace{0.30cm}} && {0.0757 {\color{red}{\ding{55}}}\hspace{0.32cm}} \\
     && 10 && {0.0558 {\color{green}{\ding{52}}}\hspace{0.20cm}}  && 
     {0.0617 {\color{red}{\ding{55}}}\hspace{0.32cm}} \\
    &&  15 && 0.0391 {\color{green}{\ding{52}}} {\color{red}{!}} && {0.0524 {\color{green}{\ding{52}}}\hspace{0.23cm}} \\
 \bottomrule
\end{tabular}
{\small
\begin{tablenotes}
     \item[{\color{green}{\ding{52}}}] \hspace{0.05cm} Null hypothesis is not rejected at  significance level 0.05.
     \item[{\color{green}{\ding{52}}} {\color{red}{!}}] Null hypothesis is not rejected at significance level 0.2.
     \item[{\color{red}{\ding{55}}}] \hspace{0.11cm} Null hypothesis is rejected at  significance level 0.05.
   \end{tablenotes}}
\caption{Kolmogorov-Smirnov distances and significance test results for truncated normal on extended support and normal reference distributions.}  
\label{fig:KS_normal_ref}
\end{threeparttable}
\end{center}

\clearpage

To show that any continuous reference pdf that is positive on its support which contains the support of the unknown target can be used in K-series, we present an example where the reference is exponential with scale parameter 0.2 in Fig. \ref{fig:extra_example1}, and an example where the reference is Gamma with shape parameter 2 in Fig. \ref{fig:extra_example2}.
The latter serves to illustrate that the requirement for the reference to be positive everywhere on its support is sufficient, but not necessary, in general. The limit at point $x=0$ of the ratio $f^{2}(x) / \phi(x)$, in this case, is zero and the integral exists. But in general, this condition cannot be checked when the true target pdf is unknown.

\begin{figure}
     \centering
     \begin{minipage}{.45\textwidth}
        \centering
\includegraphics[scale=0.3]{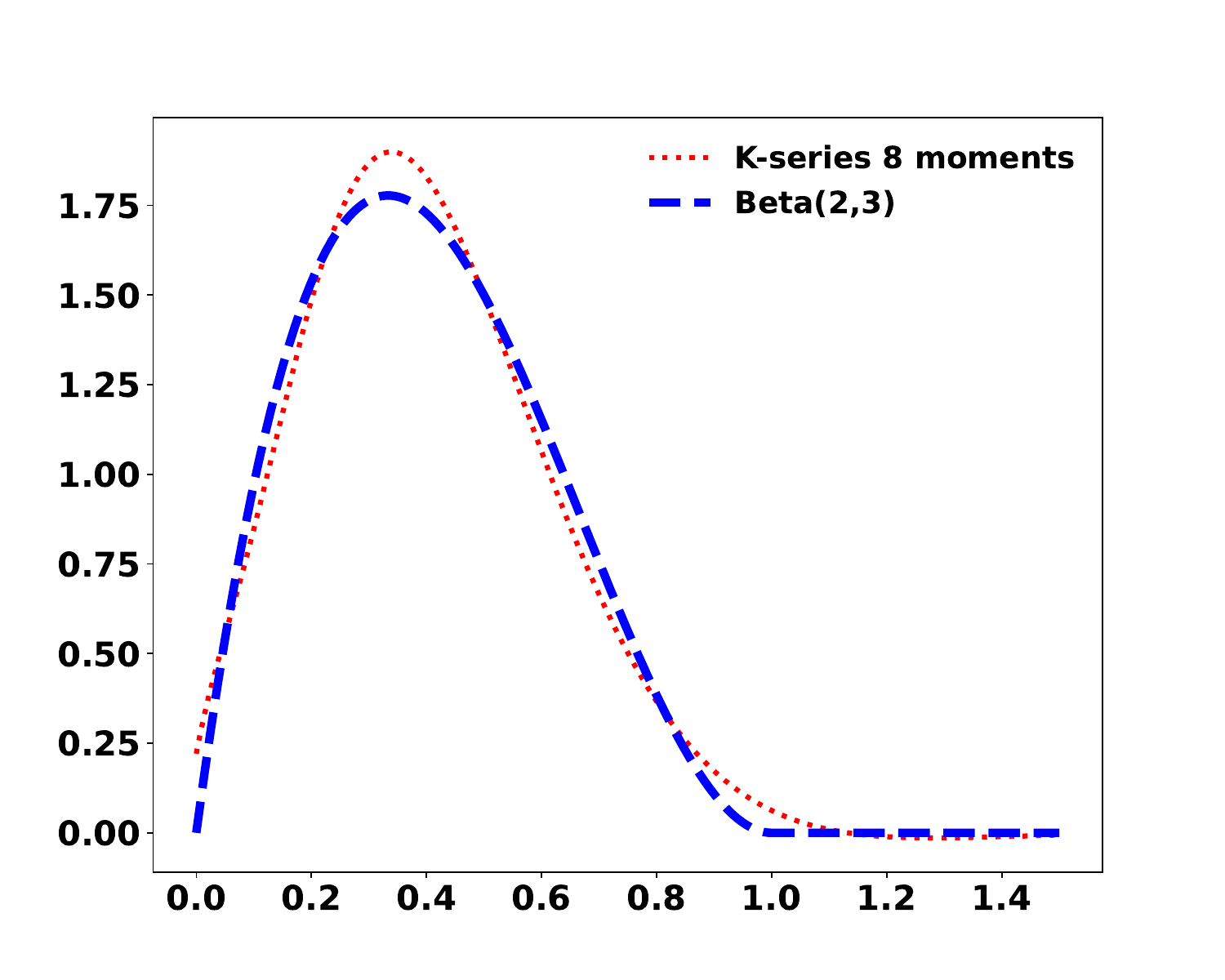}
  \caption{K-series estimates using first 8  moments of a Beta distribution with parameters (2, 3) and exponential reference with scale parameter 0.2.} 
  \label{fig:extra_example1}
\label{fig:(a)}
     \end{minipage}
     \hfill 
     \begin{minipage}{.45\textwidth}
         \centering
         \includegraphics[scale=0.3]{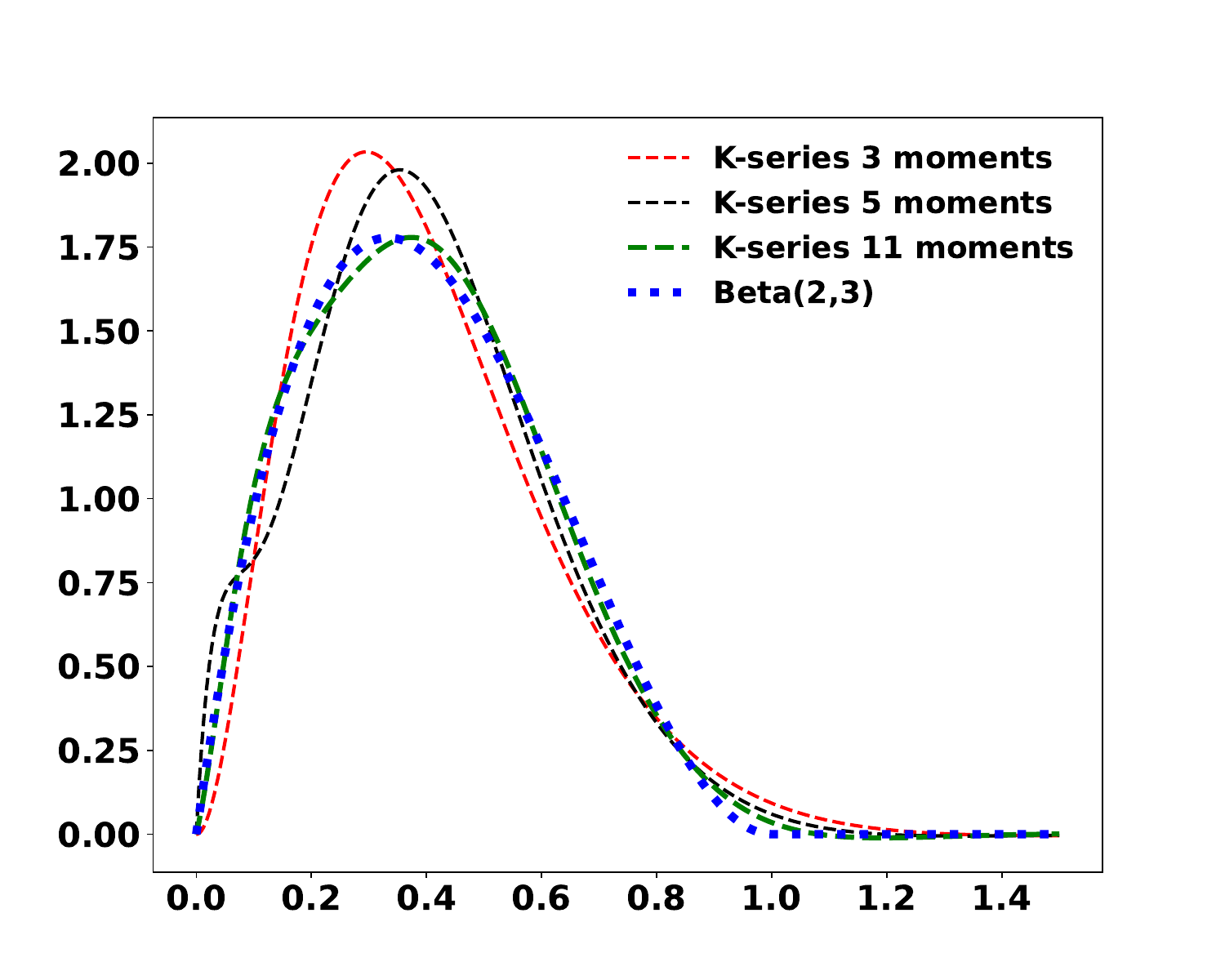}
  \caption{K-series estimates using the first  3, 5, and 11 moments, respectively, of a Beta distribution with parameters (2, 3) and a Gamma reference with shape and scale 2 and 0.14, respectively.} 
  \label{fig:extra_example2}
  \label{fig:(b)}
     \end{minipage}
     
\end{figure}

\end{document}